\newcommand{\mi}{\mathit} 
\DeclareMathAlphabet{\mib}{T1}{cmr}{bx}{it}
\newcommand{\mc}{\mathcal}
\newcommand{\mf}{\mathfrak}
\newcommand{\mbb}{\mathbb} 
\newcommand{\mbbm}{\mathbbm}
\newcommand{\mbi}{\boldsymbol}
\newcommand{\cref}[1]{\arabic{#1}}
\newcommand{\ctag}[2]{\addtocounter{equation}{1}\setcounter{#1}{\value{equation}}
                      \tag*{#2 (\cref{#1})}}
\newcommand{\geql}{\eqcirc}
\newcommand{\gle}{\prec}   
\newcommand{\feql}{\operatorname{\pmb{\eqcirc}}}
\newcommand{\fle}{\operatorname{\pmb{\prec}}}
\newcommand{\gz}{\bar{0}}
\newcommand{\gu}{\bar{1}}
\newcommand{\fcom}[1]{\pmb{\overline{\phantom{#1}}}\mspace{-9.75mu}#1}
\newcommand{\fcomd}[1]{\pmb{\overline{\phantom{#1}}}\mspace{-47.50mu}#1}
\newcommand{\fcoml}{\pmb{\overline{\parbox{7pt}{\mbox{}\vspace{6.6pt}\mbox{}}}}}
\newcommand{\fvee}{\operatorname{\pmb{\vee}}}
\newcommand{\bigfvee}{\operatornamewithlimits{\pmb{\bigvee}}}
\newcommand{\fwedge}{\operatorname{\pmb{\wedge}}}
\newcommand{\bigfwedge}{\operatornamewithlimits{\pmb{\bigwedge}}}
\newcommand{\bigflimit}{\operatornamewithlimits{\pmb{\mbox{\large X}}}}
\newcommand{\frightarrow}{\operatorname{\pmb{\Rightarrow}}}
\newcommand{\fdiamond}{\operatorname{\pmb{\diamond}}}
\newcommand{\del}{\operatorname{\Delta}}
\newcommand{\fdel}{\operatorname{\pmb{\Delta}}}
\newcommand{\eqvl}{\equiv}
\newcommand{\emptyseq}{\ell}
\newlength{\llll}
\newtheorem{theorem}{Theorem}
\newtheorem{lemma}[theorem]{Lemma}
\newtheorem{corollary}[theorem]{Corollary}
\newenvironment{proof}{\begin{IEEEproof}}
                      {\end{IEEEproof}}  
\newcommand{\qed}{\mbox{}}
\begin{document}

\title{On Multi-step Fuzzy Inference in G\"{o}del Logic}

\author{Du\v{s}an~Guller%
\thanks{D. Guller is with the Department of Applied Informatics, Comenius University, Mlynsk\'a dolina, 842 48 Bratislava, Slovakia
        (e-mail: guller@fmph.uniba.sk).}%
}

\markboth{Du\v{s}an Guller \hfill On Multi-step Fuzzy Inference in G\"{o}del Logic \hspace{5mm}}{On Multi-step Fuzzy Inference in G\"{o}del Logic}

\maketitle

\begin{abstract}
This paper addresses the logical and computational foundations of multi-step fuzzy inference using the Mamdani-Assilian type of fuzzy rules
by implementing such inference in G\"{o}del logic with truth constants.
We apply the results achieved in the development of a hyperresolution calculus for this logic. 
We pose three fundamental problems: reachability, stability, the existence of a $k$-cycle in multi-step fuzzy inference and
reduce them to certain deduction and unsatisfiability problems.
The corresponding unsatisfiability problems may be solved using hyperresolution. 
\end{abstract}

\begin{IEEEkeywords}
fuzzy inference, fuzzy rules, G\"{o}del logic, resolution proof method
\end{IEEEkeywords}

\section{Introduction}
\label{S1}

Recent years witness a prospective application potential of the fuzzy logic and inference approach to
emerging technologies in the area of artificial, computational intelligence, and soft computing such as
fuzzy description logics and ontologies, the Semantic Web, fuzzy knowledge/expert systems, 
ambient, swarm intelligence, the Internet of thinks, autonomous driving, etc.  
Fuzzy rules and one-step fuzzy inference is widely exploited in fuzzy controllers since the seventies.
From a viewpoint of artificial intelligence, such a kind of inference has a reactive behavior.
In contrast to control purposes, one-step fuzzy inference is not sufficient for so-called fuzzy reasoning, where
some kind of abstract inference is needed to reach a reasonable conclusion, which stipulates multiple inference steps. 
Descriptions of real-world problems may become rather complex and efficient inference methods and techniques are necessary.
A good choice is to apply and/or generalise those for automated deduction, particularly in many-valued logics.

This paper is a continuation of \cite{Guller2018a}, mainly aiming at theoretical results 
concerning the logical and computational foundations of multi-step fuzzy inference.                       
In \cite{Guller2012b,Guller2015a,Guller2014,Guller2016b,Guller2015b,Guller2015c,Guller2016c,Guller2018b,Guller2019b}, 
we have generalised the well-known hyperresolution principle to the first-order G\"{o}del logic for the general case.
Our approach is based on translation of a formula of G\"{o}del logic to an equivalent satisfiable finite order clausal theory.
We have introduced a notion of quantified atom:
a formula $a$ is a quantified atom if $a=Q x\, p(t_0,\dots,t_n)$
where $Q$ is a quantifier ($\forall$, $\exists$); $p(t_0,\dots,t_n)$ is an atom; $x$ is a variable occurring in $p(t_0,\dots,t_n)$;
for all $i\leq n$, either $t_i=x$, or $x$ does not occur in $t_i$ ($t_i$ is a free term in the quantified atom).
The notion of quantified atom is very important.
It permits us to extend classical unification to quantified atoms without any additional computational cost.
An order clause is then a finite set of order literals of the form
$\varepsilon_1\diamond \varepsilon_2$ where $\varepsilon_i$ is an atom or a truth constant or a quantified atom, and 
$\diamond$ is a connective $\geql$, equality, or $\gle$, strict order.
$\geql$ and $\gle$ are interpreted by the equality and standard strict linear order on $[0,1]$, respectively.
On the basis of the hyperresolution principle, a calculus operating over order clausal theories has been devised.
The calculus is proved to be refutation sound and complete for the countable case 
with respect to the standard $\mbi{G}$-algebra $\mbi{G}=([0,1],{\leq,}\fvee,\fwedge,\frightarrow,\fcoml,\feql,\fle,\fdel,0,1)$ 
augmented by the binary operators $\feql$, $\fle$ for the connectives $\geql$, $\gle$, respectively, and 
by the unary operator $\fdel$ for the projection connective $\del$.
As another step, one may incorporate a countable set of intermediate truth constants of the form $\bar{c}$, $c\in (0,1)$, 
to get a modification of the hyperresolution calculus suitable for automated deduction with explicit partial truth.
We have investigated the canonical standard completeness, where the semantics of G\"{o}del logic is given
by the standard $\mbi{G}$-algebra $\mbi{G}$, and truth constants are interpreted by 'themselves'.
We say that a set $X\supseteq \{\gz,\gu\}$ of truth constants is admissible with respect to suprema and infima 
if, for all $\emptyset\neq Y_1, Y_2\subseteq X$ such that $\bigfvee Y_1=\bigfwedge Y_2$, 
we have $\bigfvee Y_1\in Y_1$ and $\bigfwedge Y_2\in Y_2$.
Then the modified hyperresolution calculus is refutation sound and complete for a countable order clausal theory 
if the set of truth constants occurring in the theory is admissible with respect to suprema and infima.
The achieved results can be applied to the Mamdani-Assilian type of fuzzy rules and inference \cite{MAAS75,Mam76}.
We shall implement this type of fuzzy rules and inference to G\"{o}del logic with truth constants and the connectives $\geql$, $\gle$, $\del$. 
Fuzzy rules will be translated to formulae of G\"{o}del logic, and subsequently, to clausal form.
We shall pose three fundamental problems: reachability, stability, and the existence of a $k$-cycle in multi-step fuzzy inference.
We shall formulate these problems as formulae of G\"{o}del logic and reduce their solving to solving certain deduction problems.
The deduction problems may further be reduced to unsatisfiability ones, and under some finitary restrictions, solved 
using a suitably modified hyperresolution calculus.

The paper is organised as follows.
Section \ref{S2} gives the basic notions and notation concerning the first-order G\"{o}del logic.
Section \ref{S3} deals with translation to clausal form.
Section \ref{S4} is devoted to multi-step fuzzy inference and contains a non-trivial illustration example.
Section~\ref{S5} brings conclusions.

\subsection{Preliminaries}
\label{S1.1}

$\mbb{N}$, $\mbb{Q}$, $\mbb{R}$ designates the set of natural, rational, real numbers, and
$=$, $\leq$, $<$ denotes the standard equality, order, strict order on $\mbb{N}$, $\mbb{Q}$, $\mbb{R}$.
We denote $\mbb{R}_0^+=\{c \,|\, 0\leq c\in \mbb{R}\}$, $\mbb{R}^+=\{c \,|\, 0<c\in \mbb{R}\}$,
$[0,1]=\{c \,|\, c\in \mbb{R}, 0\leq c\leq 1\}$; $[0,1]$ is called the unit interval.
Let $X$, $Y$, $Z$ be sets and $f : X\longrightarrow Y$ a mapping.
By $\|X\|$ we denote the set-theoretic cardinality of $X$.
The relationship of $X$ being a finite subset of $Y$ is denoted as $X\subseteq_{\mc F} Y$.
Let $Z\subseteq X$.
We designate 
$f[Z]=\{f(z) \,|\, z\in Z\}$; $f[Z]$ is called the image of $Z$ under $f$; 
$f|_Z=\{(z,f(z)) \,|\, z\in Z\}$; $f|_Z$ is called the restriction of $f$ onto $Z$.
Let $\gamma\leq \omega$.
A sequence $\delta$ of $X$ is a bijection $\delta : \gamma\longrightarrow X$.
Recall that $X$ is countable if and only if there exists a sequence of $X$.
Let $I$ be an index set, and $S_i\neq \emptyset$, $i\in I$, be sets.
A selector ${\mc S}$ over $\{S_i \,|\, i\in I\}$ is a mapping ${\mc S} : I\longrightarrow \bigcup \{S_i \,|\, i\in I\}$ such that
for all $i\in I$, ${\mc S}(i)\in S_i$.
We denote ${\mc S}\mi{el}(\{S_i \,|\, i\in I\})=\{{\mc S} \,|\, {\mc S}\ \text{\it is a selector over}\ \{S_i \,|\, i\in I\}\}$.
Let $c\in \mbb{R}^+$.
$\log c$ denotes the binary logarithm of $c$.
Let $f, g : \mbb{N}\longrightarrow \mbb{R}_0^+$.
$f$ is of the order of $g$, in symbols $f\in O(g)$, iff there exist $n_0$ and $c^*\in \mbb{R}_0^+$ such that
for all $n\geq n_0$, $f(n)\leq c^*\cdot g(n)$.

\section{First-order G\"{o}del logic}
\label{S2}

Throughout the paper, we shall use the common notions and notation of first-order logic.
By ${\mc L}$ we denote a first-order language. 
$\mi{Var}_{\mc L}$, $\mi{Func}_{\mc L}$, $\mi{Pred}_{\mc L}$, $\mi{Term}_{\mc L}$, $\mi{GTerm}_{\mc L}$, 
$\mi{Atom}_{\mc L}$ denotes
the set of all variables, function symbols, predicate symbols, terms, ground terms, atoms of ${\mc L}$.
$\mi{ar}_{\mc L} : \mi{Func}_{\mc L}\cup \mi{Pred}_{\mc L}\longrightarrow \mbb{N}$ denotes 
the mapping assigning an arity to every function and predicate symbol of ${\mc L}$.
$\mi{cn}\in \mi{Func}_{\mc L}$ such that $\mi{ar}_{\mc L}(\mi{cn})=0$ is called a constant symbol.
Let $\{0,1\}\subseteq C_{\mc L}\subseteq [0,1]$ be countable.
We assume a countable set of truth constants of ${\mc L}$
$\overline{C}_{\mc L}=\{\bar{c} \,|\, c\in C_{\mc L}\}$;
$\gz$, $\gu$ denotes the false, the true in ${\mc L}$; $\bar{c}$, $0<c<1$, is called an intermediate truth constant.
Let $x\in \overline{C}_{\mc L}$ and $X\subseteq \overline{C}_{\mc L}$.
Then there exists a unique $c\in C_{\mc L}$ such that $\bar{c}=x$.
We denote $\underline{x}=c$ and
$\underline{X}=\{\underline{x} \,|\, \underline{x}\in C_{\mc L}, x\in X\}$.
By $\mi{Form}_{\mc L}$ we designate the set of all formulae of ${\mc L}$ built up 
from $\mi{Atom}_{\mc L}$, $\overline{C}_{\mc L}$, $\mi{Var}_{\mc L}$
using the connectives: $\neg$, negation, $\del$, Delta, $\wedge$, conjunction, $\vee$, disjunction, $\rightarrow$, implication,  
$\leftrightarrow$, equivalence, $\geql$, equality, $\gle$, strict order, and
the quantifiers: $\forall$, the universal one, $\exists$, the existential
one.\footnote{We assume a decreasing connective and quantifier precedence:
              $\forall$, $\exists$, $\neg$, $\del$, $\geql$, $\gle$, $\wedge$, $\vee$, $\rightarrow$, $\leftrightarrow$.}
In the paper, we shall assume that ${\mc L}$ is a countable first-order language; 
hence, all the above mentioned sets of symbols and expressions are 
countable.\footnote{If the first-order language in question is not explicitly designated,
                    we shall write denotations without index.} 
Let $\varepsilon$, $\varepsilon_i$, $1\leq i\leq m$, $\upsilon_i$, $1\leq i\leq n$, be
either an expression or a set of expressions or a set of sets of expressions of ${\mc L}$, in general.
By $\mi{vars}(\varepsilon_1,\dots,\varepsilon_m)\subseteq \mi{Var}_{\mc L}$,
$\mi{freevars}(\varepsilon_1,\dots,\varepsilon_m)\subseteq \mi{Var}_{\mc L}$,
$\mi{boundvars}(\varepsilon_1,\dots,\varepsilon_m)\subseteq \mi{Var}_{\mc L}$,
$\mi{funcs}(\varepsilon_1,\dots,\varepsilon_m)\subseteq \mi{Func}_{\mc L}$,
$\mi{preds}(\varepsilon_1,\dots,\varepsilon_m)\subseteq \mi{Pred}_{\mc L}$,
$\mi{atoms}(\varepsilon_1,\dots,\varepsilon_m)\subseteq \mi{Atom}_{\mc L}$,
$\mi{tcons}(\varepsilon_1,\dots,\varepsilon_m)\subseteq \overline{C}_{\mc L}$
we denote the set of all variables, free variables, bound variables, function symbols, predicate symbols, atoms, truth constants of ${\mc L}$
occurring in $\varepsilon_1,\dots,\varepsilon_m$.
$\varepsilon$ is closed iff $\mi{freevars}(\varepsilon)=\emptyset$.
By $\emptyseq$ we denote the empty sequence.
Let $\varepsilon_1,\dots,\varepsilon_m$ and $\upsilon_1,\dots,\upsilon_n$ be sequences.
The length of $\varepsilon_1,\dots,\varepsilon_m$ is defined as $|\varepsilon_1,\dots,\varepsilon_m|=m$.
We define the concatenation of $\varepsilon_1,\dots,\varepsilon_m$ and $\upsilon_1,\dots,\upsilon_n$
as $(\varepsilon_1,\dots,\varepsilon_m),(\upsilon_1,\dots,\upsilon_n)=\varepsilon_1,\dots,\varepsilon_m,\upsilon_1,\dots,\upsilon_n$.
Note that the concatenation is 
associative.\footnote{Several simultaneous applications of the concatenation will be written without parentheses.}

Let $t\in \mi{Term}_{\mc L}$, $\phi\in \mi{Form}_{\mc L}$, $T\subseteq_{\mc F} \mi{Form}_{\mc L}$.
We define the size of $t$ by recursion on the structure of $t$ as follows:
\begin{equation*}
|t|=\left\{\begin{array}{ll}
           1                    &\ \text{\it if}\ t\in \mi{Var}_{\mc L}, \\[1mm]
           1+\sum_{i=1}^n |t_i| &\ \text{\it if}\ t=f(t_1,\dots,t_n).
           \end{array}
    \right. 
\end{equation*}
Subsequently, we define the size of $\phi$ by recursion on the structure of $\phi$ as follows:
\begin{equation*}
|\phi|=\left\{\begin{array}{ll}
              1+\sum_{i=1}^n |t_i| &\ \text{\it if}\ \phi=p(t_1,\dots,t_n)\in \mi{Atom}_{\mc L}, \\[1mm]
              1                    &\ \text{\it if}\ \phi\in \overline{C}_{\mc L}, \\[1mm]
              1+|\phi_1|           &\ \text{\it if}\ \phi=\diamond \phi_1, \\[1mm]
              1+|\phi_1|+|\phi_2|  &\ \text{\it if}\ \phi=\phi_1\diamond \phi_2, \\[1mm]
              2+|\phi_1|           &\ \text{\it if}\ \phi=Q x\, \phi_1.
              \end{array}
       \right.
\end{equation*}
Note that $|t|, |\phi|\geq 1$.
The size of $T$ is defined as $|T|=\sum_{\phi\in T} |\phi|$.
By $\mi{varseq}(\phi)$, $\mi{vars}(\mi{varseq}(\phi))\subseteq \mi{Var}_{\mc L}$, 
we denote the sequence of all variables of ${\mc L}$ occurring in $\phi$ which is built up via the left-right preorder traversal of $\phi$.
For example, $\mi{varseq}(\exists w\, (\forall x\, p(x,x,z)\vee \exists y\, q(x,y,z)))=w,x,x,x,z,y,x,y,z$ and $|w,x,x,x,z,y,x,y,z|=9$. 
A sequence of variables will often be denoted as $\bar{x}$, $\bar{y}$, $\bar{z}$, etc.
Let $Q\in \{\forall,\exists\}$ and $\bar{x}=x_1,\dots,x_n$ be a sequence of variables of ${\mc L}$.
By $Q \bar{x}\, \phi$ we denote $Q x_1\dots Q x_n\, \phi$.
Let $x_1,\dots,x_n\in \mi{freevars}(\phi)$.
$\phi$ may be denoted as $\phi(x_1,\dots,x_n)$.
Let $t_1,\dots,t_n\in \mi{Term}_{\mc L}$ be closed.
By $\phi(t_1,\dots,t_n)\in \mi{Form}_{\mc L}$ we denote the instance of $\phi(x_1,\dots,x_n)$ by the substitution $x_1/t_1,\dots,x_n/t_n$,
defined in the standard manner.

G\"{o}del logic is interpreted by the standard $\mbi{G}$-algebra 
augmented by the operators $\feql$, $\fle$, $\fdel$ for the connectives $\geql$, $\gle$, $\del$, respectively.
\begin{equation*}
\mbi{G}=([0,1],\leq,\fvee,\fwedge,\frightarrow,\fcoml,\feql,\fle,\fdel,0,1)
\end{equation*}
where $\fvee$, $\fwedge$ denotes the supremum, infimum operator on $[0,1]$;
\begin{alignat*}{2}
a\frightarrow b &= \left\{\begin{array}{ll}
                          1 &\ \text{\it if}\ a\leq b, \\[1mm]
                          b &\ \text{\it else};
                          \end{array}
                   \right. 
& 
\fcom{a}        &= \left\{\begin{array}{ll}
                          1 &\ \text{\it if}\ a=0, \\[1mm]
                          0 &\ \text{\it else};
                          \end{array}
                   \right. 
\\[2mm]
a\feql b        &= \left\{\begin{array}{ll}
                          1 &\ \text{\it if}\ a=b, \\[1mm]
                          0 &\ \text{\it else};
                          \end{array}
                   \right. 
& \qquad
a\fle b         &= \left\{\begin{array}{ll}
                          1 &\ \text{\it if}\ a<b, \\[1mm]
                          0 &\ \text{\it else};
                          \end{array}
                   \right. 
\\[2mm]
\fdel a         &= \left\{\begin{array}{ll}
                          1 &\ \text{\it if}\ a=1, \\[1mm]
                          0 &\ \text{\it else}.
                          \end{array}
                   \right.
\end{alignat*}
Recall that $\mbi{G}$ is a complete linearly ordered lattice algebra;
$\fvee$, $\fwedge$ is commutative, associative, idempotent, monotone; 
$0$, $1$ is its neutral 
element; 
%
%
the residuum operator $\frightarrow$ of $\fwedge$ satisfies the condition of residuation:
\begin{equation}
\label{eq0a}
\text{for all}\ a, b, c\in \mbi{G},\ a\fwedge b\leq c\Longleftrightarrow a\leq b\frightarrow c;
\end{equation}
G\"{o}del negation $\fcoml$ satisfies the condition:
\begin{equation}
\label{eq0b}
\text{for all}\ a\in \mbi{G},\ \fcom{a}=a\frightarrow 0;
\end{equation}
$\fdel$ satisfies the 
condition:\footnote{We assume a decreasing operator precedence: $\fcoml$, $\fdel$, $\feql$, $\fle$, $\fwedge$, $\fvee$, $\frightarrow$.}
\begin{equation}
\label{eq0kk}
\text{for all}\ a\in \mbi{G},\ \fdel a=a\feql 1.
\end{equation}
Note that the following properties hold:
\begin{alignat}{1}
\notag
& \hspace{-2.24mm} \text{for all}\ a, b, c\in \mbi{G}, \\[1mm]
\ctag{ceq0d}{(distributivity of $\fvee$ over $\fwedge$)}
& a\fvee b\fwedge c=(a\fvee b)\fwedge (a\fvee c), \\[1mm]
\ctag{ceq0c}{(distributivity of $\fwedge$ over $\fvee$)}
& a\fwedge (b\fvee c)=a\fwedge b\fvee a\fwedge c, \\[1mm]
\label{eq0f}
& a\frightarrow b\fvee c=(a\frightarrow b)\fvee (a\frightarrow c), \\[1mm]
\label{eq0e}
& a\frightarrow b\fwedge c=(a\frightarrow b)\fwedge (a\frightarrow c), \\[1mm]
\label{eq0h}
& a\fvee b\frightarrow c=(a\frightarrow c)\fwedge (b\frightarrow c), \\[1mm]
\label{eq0g}
& a\fwedge b\frightarrow c=(a\frightarrow c)\fvee (b\frightarrow c), \\[1mm]
\label{eq0i}
& a\frightarrow (b\frightarrow c)=a\fwedge b\frightarrow c, \\[1mm]
\label{eq0j}
& ((a\frightarrow b)\frightarrow b)\frightarrow b=a\frightarrow b, \\[1mm]
\label{eq0k}
& (a\frightarrow b)\frightarrow c=((a\frightarrow b)\frightarrow b)\fwedge (b\frightarrow c)\fvee c, \\[1mm]
\label{eq0jj}
& (a\frightarrow b)\frightarrow 0=((a\frightarrow 0)\frightarrow 0)\fwedge (b\frightarrow 0).
\end{alignat}

An interpretation ${\mc I}$ for ${\mc L}$ is a triple
$\big({\mc U}_{\mc I},\{f^{\mc I} \,|\, f\in \mi{Func}_{\mc L}\},\{p^{\mc I} \,|\, p\in \mi{Pred}_{\mc L}\}\big)$ defined as follows: 
${\mc U}_{\mc I}\neq \emptyset$ is the universum of ${\mc I}$;
every $f\in \mi{Func}_{\mc L}$ is interpreted as a function $f^{\mc I} : {\mc U}_{\mc I}^{\mi{ar}_{\mc L}(f)}\longrightarrow {\mc U}_{\mc I}$;
every $p\in \mi{Pred}_{\mc L}$ is interpreted as a $[0,1]$-relation $p^{\mc I} : {\mc U}_{\mc I}^{\mi{ar}_{\mc L}(p)}\longrightarrow [0,1]$.
A variable assignment in ${\mc I}$ is a mapping $e : \mi{Var}_{\mc L}\longrightarrow {\mc U}_{\mc I}$. 
We denote the set of all variable assignments in ${\mc I}$ as ${\mc S}_{\mc I}$.
Let $u\in {\mc U}_{\mc I}$.
A variant $e[x/u]\in {\mc S}_{\mc I}$ of $e$ with respect to $x$ and $u$ is defined by
\begin{equation*}
e[x/u](z)=\left\{\begin{array}{ll}
                 u    &\ \text{\it if}\ z=x, \\[1mm]
                 e(z) &\ \text{\it else}.
                 \end{array}
          \right.
\end{equation*}  
In ${\mc I}$ with respect to $e$, 
we define the value $\|t\|_e^{\mc I}\in {\mc U}_{\mc I}$ of $t$ by recursion on the structure of $t$,
the value $\|\bar{x}\|_e^{\mc I}\in {\mc U}_{\mc I}^{|\bar{x}|}$ of $\bar{x}$,
the truth value $\|\phi\|_e^{\mc I}\in [0,1]$ of $\phi$ by recursion on the structure of $\phi$, as follows:
{\footnotesize
\begin{alignat*}{2}
&    t\in \mi{Var}_{\mc L}, 
& &\ \|t\|_e^{\mc I}=e(t); \\[1mm]
&    t=f(t_1,\dots,t_n), 
& &\ \|t\|_e^{\mc I}=f^{\mc I}(\|t_1\|_e^{\mc I},\dots,\|t_n\|_e^{\mc I}); \\[2mm]
&    \bar{x}=x_1,\dots,x_{|\bar{x}|}, 
& &\ \|\bar{x}\|_e^{\mc I}=e(x_1),\dots,e(x_{|\bar{x}|}); \\[2mm]
&    \phi=p(t_1,\dots,t_n), 
& &\ \|\phi\|_e^{\mc I}=p^{\mc I}(\|t_1\|_e^{\mc I},\dots,\|t_n\|_e^{\mc I}); \\[1mm]
&    \phi=c\in \overline{C}_{\mc L},     
& &\ \|\phi\|_e^{\mc I}=\underline{c}; \\[1mm]
&    \phi=\neg \phi_1,
& &\ \|\phi\|_e^{\mc I}=\fcomd{\|\phi_1\|_e^{\mc I}}; \\[1mm]
&    \phi=\del \phi_1,
& &\ \|\phi\|_e^{\mc I}=\fdel \|\phi_1\|_e^{\mc I}; \\[1mm]
&    \phi=\phi_1\diamond \phi_2,
& &\ \|\phi\|_e^{\mc I}=\|\phi_1\|_e^{\mc I}\fdiamond \|\phi_2\|_e^{\mc I}, \quad \diamond\in \{\wedge,\vee,\rightarrow,\geql,\gle\}; \\[1mm]
&    \phi=\phi_1\leftrightarrow \phi_2,
& &\ \|\phi\|_e^{\mc I}=(\|\phi_1\|_e^{\mc I}\frightarrow \|\phi_2\|_e^{\mc I})\fwedge
                        (\|\phi_2\|_e^{\mc I}\frightarrow \|\phi_1\|_e^{\mc I}); \\[1mm]
&    \phi=\forall x\, \phi_1,
& &\ \|\phi\|_e^{\mc I}=\bigfwedge_{u\in {\mc U}_{\mc I}} \|\phi_1\|_{e[x/u]}^{\mc I}; \\[1mm]
&    \phi=\exists x\, \phi_1,
& &\ \|\phi\|_e^{\mc I}=\bigfvee_{u\in {\mc U}_{\mc I}} \|\phi_1\|_{e[x/u]}^{\mc I}.
\end{alignat*}}%
Let $\phi$ be closed.
Then, for all $e, e'\in {\mc S}_{\mc I}$, $\|\phi\|_e^{\mc I}=\|\phi\|_{e'}^{\mc I}$.
Note that ${\mc S}_{\mc I}\neq \emptyset$.
We denote $\|\phi\|^{\mc I}=\|\phi\|_e^{\mc I}$.

Let ${\mc L}'$ be a first-order language, and ${\mc I}$, ${\mc I}'$ be interpretations for ${\mc L}$, ${\mc L}'$, respectively.
${\mc L}'$ is an expansion of ${\mc L}$ iff $\mi{Func}_{{\mc L}'}\supseteq \mi{Func}_{\mc L}$ and 
$\mi{Pred}_{{\mc L}'}\supseteq \mi{Pred}_{\mc L}$;
on the other side, we say that ${\mc L}$ is a reduct of ${\mc L}'$.
${\mc I}'$ is an expansion of ${\mc I}$ to ${\mc L}'$
iff ${\mc L}'$ is an expansion of ${\mc L}$, ${\mc U}_{{\mc I}'}={\mc U}_{\mc I}$,
for all $f\in \mi{Func}_{\mc L}$, $f^{{\mc I}'}=f^{\mc I}$,
for all $p\in \mi{Pred}_{\mc L}$, $p^{{\mc I}'}=p^{\mc I}$;
on the other side, we say that ${\mc I}$ is a reduct of ${\mc I}'$ to ${\mc L}$, in symbols ${\mc I}={\mc I}'|_{\mc L}$.

A theory of ${\mc L}$ is a set of formulae of ${\mc L}$.
$\phi$ is true in ${\mc I}$ with respect to $e$, written as ${\mc I}\models_e \phi$, iff $\|\phi\|_e^{\mc I}=1$.
${\mc I}$ is a model of $\phi$, in symbols ${\mc I}\models \phi$, iff, for all $e\in {\mc S}_{\mc I}$, ${\mc I}\models_e \phi$.
Let $\phi'\in \mi{Form}_{\mc L}$ and $T\subseteq \mi{Form}_{\mc L}$.
${\mc I}$ is a model of $T$, in symbols ${\mc I}\models T$, iff, for all $\phi\in T$, ${\mc I}\models \phi$.
$\phi$ is a logically valid formula iff, for every interpretation ${\mc I}$ for ${\mc L}$, ${\mc I}\models \phi$.
$\phi$ is equivalent to $\phi'$, in symbols $\phi\eqvl \phi'$, 
iff, for every interpretation ${\mc I}$ for ${\mc L}$ and $e\in {\mc S}_{\mc I}$, $\|\phi\|_e^{\mc I}=\|\phi'\|_e^{\mc I}$.

\section{Translation to clausal form}
\label{S3}

In the propositional case \cite{Guller2010}, we have proposed some translation of a formula to an equivalent conjunctive normal form ({\it CNF}) 
containing literals of the form $a$ or $a\rightarrow b$ or $(a\rightarrow b)\rightarrow b$ 
where $a$ is an atom, and $b$ is an atom or the truth constant $\gz$.
An output equivalent {\it CNF} may be of exponential size with respect to the size of the input formula; 
we had laid no restrictions on use of the distributivity law~(\cref{ceq0d}) during translation to conjunctive normal form.
To avoid this disadvantage, we have devised translation to {\it CNF} via interpolation using new atoms,
which produces an output {\it CNF} of linear size at the cost of being only equivalent satisfiable to the input formula \cite{Guller2018a}.
A similar approach exploiting the renaming subformulae technique can be found in \cite{Tse70,PLGR86,Boy92,Hah94b,NOROWE98,She04}.
A {\it CNF} can further be translated to a finite set of order clauses.
An order clause is a finite set of order literals of the form $\varepsilon_1\diamond \varepsilon_2$
where $\varepsilon_i$ is an atom or a truth constant ($\gz$, $\gu$), and $\diamond$ is a connective $\geql$ or $\gle$.

We now describe some generalisation of the mentioned translation to the first-order case.
At first, we introduce a notion of quantified atom.
Let $a\in \mi{Form}_{\mc L}$.
$a$ is a quantified atom of ${\mc L}$ iff $a=Q x\, p(t_0,\dots,t_n)$
where $p(t_0,\dots,t_n)\in \mi{Atom}_{\mc L}$, $x\in \mi{vars}(p(t_0,\dots,t_n))$,
for all $i\leq n$, either $t_i=x$ or $x\not\in \mi{vars}(t_i)$.
$\mi{QAtom}_{\mc L}\subseteq \mi{Form}_{\mc L}$ denotes the set of all quantified atoms of ${\mc L}$.
$\mi{QAtom}_{\mc L}^Q\subseteq \mi{QAtom}_{\mc L}$, $Q\in \{\forall,\exists\}$, denotes the set of all quantified atoms of ${\mc L}$ 
of the form $Q x\, a$.
Let $\varepsilon_i$, $1\leq i\leq m$, be
either an expression or a set of expressions or a set of sets of expressions of ${\mc L}$, in general.
By $\mi{qatoms}(\varepsilon_1,\dots,\varepsilon_m)\subseteq \mi{QAtom}_{\mc L}$ we denote the set of all quantified atoms of ${\mc L}$
occurring in $\varepsilon_1,\dots,\varepsilon_m$.
We denote $\mi{qatoms}^Q(\varepsilon_1,\dots,\varepsilon_m)=\mi{qatoms}(\varepsilon_1,\dots,\varepsilon_m)\cap \mi{QAtom}_{\mc L}^Q$, 
$Q\in \{\forall,\exists\}$.
Let $p(t_1,\dots,t_n)\in \mi{Atom}_{\mc L}$, $c\in \overline{C}_{\mc L}$, $Q x\, q(t_0,\dots,t_n)\in \mi{QAtom}_{\mc L}$.
We denote
\begin{alignat*}{1}
& \mi{freetermseq}(p(t_1,\dots,t_n))=t_1,\dots,t_n, \\
& \mi{freetermseq}(c)=\emptyseq, \\
& \mi{freetermseq}(Q x\, q(t_0,\dots,t_n))= \\
& \hspace{40.1mm} \{(i,t_i) \,|\, i\leq n, x\not\in \mi{vars}(t_i)\}, \\
& \mi{boundindset}(Q x\, q(t_0,\dots,t_n))=\{i \,|\, i\leq n, t_i=x\}\neq \emptyset.
\end{alignat*}

We further introduce order clauses in G\"{o}del logic.
Let $l\in \mi{Form}_{\mc L}$.
$l$ is an order literal of ${\mc L}$ iff $l=\varepsilon_1\diamond \varepsilon_2$,
$\varepsilon_i\in \mi{Atom}_{\mc L}\cup \overline{C}_{\mc L}\cup \mi{QAtom}_{\mc L}$, $\diamond\in \{\geql,\gle\}$.
The set of all order literals of ${\mc L}$ is designated as $\mi{OrdLit}_{\mc L}\subseteq \mi{Form}_{\mc L}$.
An order clause of ${\mc L}$ is a finite set of order literals of ${\mc L}$.
Since $=$ is symmetric, $\geql$ is commutative;
hence, for all $\varepsilon_1\geql \varepsilon_2\in \mi{OrdLit}_{\mc L}$, we identify
$\varepsilon_1\geql \varepsilon_2$ with $\varepsilon_2\geql \varepsilon_1\in \mi{OrdLit}_{\mc L}$ with respect to order clauses.
An order clause $\{l_0,\dots,l_n\}\neq \emptyset$ is written in the form $l_0\vee\cdots\vee l_n$.
The empty order clause $\emptyset$ is denoted as $\square$.
An order clause $\{l\}$ is called unit and denoted as $l$;
if it does not cause the ambiguity with the denotation of the single order literal $l$ in a given context.
We designate the set of all order clauses of ${\mc L}$ as $\mi{OrdCl}_{\mc L}$.
Let $l, l_0,\dots,l_n\in \mi{OrdLit}_{\mc L}$ and $C, C'\in \mi{OrdCl}_{\mc L}$.
We define the size of $C$ as $|C|=\sum_{l\in C} |l|$.
By $l_0\vee\cdots\vee l_n\vee C$ we denote $\{l_0,\dots,l_n\}\cup C$
where, for all $i, i'\leq n$ and $i\neq i'$, $l_i\not\in C$, $l_i\neq l_{i'}$.
By $C\vee C'$ we denote $C\cup C'$.
$C$ is a subclause of $C'$, in symbols $C\sqsubseteq C'$, iff $C\subseteq C'$.
An order clausal theory of ${\mc L}$ is a set of order clauses of ${\mc L}$.
A unit order clausal theory is a set of unit order clauses; in other words, we say that an order clausal theory is unit.

Let $\phi, \phi'\in \mi{Form}_{\mc L}$, $T, T'\subseteq \mi{Form}_{\mc L}$, $S, S'\subseteq \mi{OrdCl}_{\mc L}$,
${\mc I}$ be an interpretation for ${\mc L}$, $e\in {\mc S}_{\mc I}$.
Note that ${\mc I}\models_e l$ if and only if either $l=\varepsilon_1\geql \varepsilon_2$, $\|\varepsilon_1\geql \varepsilon_2\|_e^{\mc I}=1$, 
$\|\varepsilon_1\|_e^{\mc I}=\|\varepsilon_2\|_e^{\mc I}$; or 
$l=\varepsilon_1\gle \varepsilon_2$, $\|\varepsilon_1\gle \varepsilon_2\|_e^{\mc I}=1$, 
$\|\varepsilon_1\|_e^{\mc I}<\|\varepsilon_2\|_e^{\mc I}$.
$C$ is true in ${\mc I}$ with respect to $e$, written as ${\mc I}\models_e C$, iff there exists $l^*\in C$ such that ${\mc I}\models_e l^*$.
${\mc I}$ is a model of $C$, in symbols ${\mc I}\models C$, iff, for all $e\in {\mc S}_{\mc I}$, ${\mc I}\models_e C$.
${\mc I}$ is a model of $S$, in symbols ${\mc I}\models S$, iff, for all $C\in S$, ${\mc I}\models C$.
Let $\varepsilon_1\in \{\phi,T,C,S\}$ and $\varepsilon_2\in \{\phi',T',C',S'\}$.
$\varepsilon_2$ is a logical consequence of $\varepsilon_1$, in symbols $\varepsilon_1\models \varepsilon_2$,
iff, for every interpretation ${\mc I}$ for ${\mc L}$, if ${\mc I}\models \varepsilon_1$, then ${\mc I}\models \varepsilon_2$.
$\varepsilon_1$ is satisfiable iff there exists a model of $\varepsilon_1$.
Note that both $\square$ and $\square\in S$ are unsatisfiable. 
$\varepsilon_1$ is equisatisfiable to $\varepsilon_2$ iff $\varepsilon_1$ is satisfiable if and only if $\varepsilon_2$ is satisfiable.
Let $S\subseteq_{\mc F} \mi{OrdCl}_{\mc L}$.
We define the size of $S$ as $|S|=\sum_{C\in S} |C|$.
Let $x_1,\dots,x_n\in \mi{freevars}(S)$ and $t_1,\dots,t_n\in \mi{Term}_{\mc L}$ be closed.
By $S(x_1/t_1,\dots,x_n/t_n)\subseteq_{\mc F} \mi{OrdCl}_{\mc L}$ we denote the instance of $S$ by the substitution $x_1/t_1,\dots,x_n/t_n$,
defined in the standard manner.
We shall assume a fresh function symbol $\tilde{f}_0$ such that $\tilde{f}_0\not\in \mi{Func}_{\mc L}$.
We denote $\mbb{I}=\mbb{N}\times \mbb{N}$; $\mbb{I}$ will be exploited as a countably infinite set of indices.
We shall assume a countably infinite set of fresh predicate symbols $\tilde{\mbb{P}}=\{\tilde{p}_\mbbm{i} \,|\, \mbbm{i}\in \mbb{I}\}$ such that 
$\tilde{\mbb{P}}\cap \mi{Pred}_{\mc L}=\emptyset$.

\subsection{A computational point of view}
\label{S3.1}

From a computational point of view, the worst case time and space complexity will be estimated using the logarithmic cost measurement.
Notice that if the estimated upper bound on the space complexity is equal to the estimated upper bound on the time complexity
for some algorithm, then it will not be explicitly stated.
Since our computational framework is only slightly different from that in \cite{Guller2018a}, analogous considerations will be shorten.
Let $n_s\in \mbb{N}$ ($n_s$ can be viewed as an offset in the memory) and 
$E$ be either a term or a formula or an order clause or a finite theory or a finite order clausal theory.
$E$ can be represented by a tree-like data structure ${\mc D}(E)$ having nodes data records.
In Table \ref{tab0}, we introduce all possible forms of data record.
\begin{table}[t]
\caption{Forms of data record}\label{tab0}
\vspace{-6mm}
\centering
\begin{minipage}[t]{\linewidth}
\footnotesize
\begin{IEEEeqnarray*}{*LL*}
\hline \hline \\[0mm]
\text{Expression} & \text{Data record} \\[1mm]
\hline \\[2mm]
x_\mi{index}\in \mi{Var}
& \framebox{$x,\mi{index}$} \\[1mm]
f_\mi{index}(t_1,\dots,t_n)\in \mi{Term}
& \framebox{$f,\mi{index},\mi{pointer}_{t_1},\dots,\mi{pointer}_{t_n}$} \\[1mm]
c_\mi{index}\in \overline{C}
& \framebox{$c,\mi{index}$} \\[1mm]
p_\mi{index}(t_1,\dots,t_n)\in \mi{Atom} 
& \framebox{$p,\mi{index},\mi{pointer}_{t_1},\dots,\mi{pointer}_{t_n}$} \\[1mm]
\diamond \phi_1\in \mi{Form} 
& \framebox{$\diamond,\mi{pointer}_{\phi_1}$} \\[1mm]
\phi_1\diamond \phi_2\in \mi{Form} 
& \framebox{$\diamond,\mi{pointer}_{\phi_1},\mi{pointer}_{\phi_2}$} \\[1mm] 
Q\, x_\mi{index}\, \phi_1\in \mi{Form}
& \framebox{$Q,\mi{pointer}_{x_\mi{index}},\mi{pointer}_{\phi_1}$} \\[1mm]
\square\in \mi{OrdCl} 
& \framebox{$\square$} \\[1mm]
l\vee C\in \mi{OrdCl} 
& \framebox{$|,\mi{pointer}_l,\mi{pointer}_C$} \\[1mm]
\emptyset\subseteq \mi{Form}, \mi{OrdCl} 
& \framebox{$\emptyset$} \\[1mm]
\{\phi\}\cup T\subseteq_{\mc F} \mi{Form}, \phi\not\in T \qquad
& \framebox{$\&,\mi{pointer}_\phi,\mi{pointer}_T$} \\[1mm]
\{C\}\cup S\subseteq_{\mc F} \mi{OrdCl}, C\not\in S 
& \framebox{$\&,\mi{pointer}_C,\mi{pointer}_S$} \\[2mm] 
\hline \hline
\end{IEEEeqnarray*}
\vspace{-4mm} \\
$\mi{pointer}_E$ denotes a pointer which references ${\mc D}(E)$.
\end{minipage}
\vspace{-2mm}
\end{table}
Concerning Table~\ref{tab0}, a data record consists of a field of length in $O(1)$ and
of a constant number (with respect to the size of the input) of indices, pointers.
Variable, function, truth constant, predicate symbols occurring in $E$ can be indexed by indices of the form $(n_s,j)\in \mbb{I}$.
The length of a data record of ${\mc D}(E)$ is in $O(\log (1+n_s)+\log (1+|E|))$.
The time complexity of an elementary operation on ${\mc D}(E)$ is in $O(\log (1+n_s)+\log (1+|E|))$.
The number of data records occurring in ${\mc D}(E)$ is in $O(|E|)$, and the size of ${\mc D}(E)$ is in $O(|E|\cdot (\log (1+n_s)+\log (1+|E|)))$.

Let ${\mc A}$ be an algorithm with inputs $E_0$, $E_1$ which uses only $E_j$, $j=0,\dots,q$, 
where $E_j$ is either a term or a formula or an order clause or a finite theory or a finite order clausal theory;
$q\geq 1$ is a constant (with respect to the size of the input);
there exists a constant $r\geq 1$ satisfying, for all $j\leq q$, $|E_j|\in O(|E_0|^r+|E_1|^r)$.
$\#{\mc O}_{\mc A}(E_0,E_1)\geq 1$ denotes the number of all elementary operations executed by 
${\mc A}$;\footnote{If the algorithm in question is not explicitly designated, we shall only write $\#{\mc O}(E_0,E_1)$.}
we assume that ${\mc A}$ executes at least one elementary operation.
The length of a data record is in $O(\log (1+n_s)+\log (1+|E_0|+|E_1|))$.
The time complexity of an elementary operation on ${\mc D}(E_j)$ executed by ${\mc A}$ is in $O(\log (1+n_s)+\log (1+|E_0|+|E_1|))$.
The number of data records occurring in ${\mc D}(E_j)$, $j=0,\dots,q$, is in $O(|E_0|^r+|E_1|^r)$.
The size of ${\mc D}(E_j)$, $j=0,\dots,q$, {\it area of data records}, is 
in $O((|E_0|^r+|E_1|^r)\cdot (\log (1+n_s)+\log (1+|E_0|+|E_1|)))$.

${\mc A}$ also uses several auxiliary data structures: {\it stack}, {\it index generator}, {\it addressing unit}.
{\it stack} consists of a finite number of frames.
A frame is of the form \framebox{$\mi{field},\mi{pointer}$} where $\mi{field}$ is of length in $O(1)$, and 
$\mi{pointer}$ is a copy of that occurring in ${\mc D}(E_j)$ of length in $O(\log (1+n_s)+\log (1+|E_0|+|E_1|))$.
The length of a frame of {\it stack} is in $O(\log (1+n_s)+\log (1+|E_0|+|E_1|))$.
The time complexity of an elementary operation on {\it stack} executed by ${\mc A}$ is in $O(\log (1+n_s)+\log (1+|E_0|+|E_1|))$.
The size of {\it stack} is in $O(\#{\mc O}_{\mc A}(E_0,E_1)\cdot (\log (1+n_s)+\log (1+|E_0|+|E_1|)))$.

{\it index generator} serves for generating fresh predicate symbols of the form $\tilde{p}_\mbbm{i}\in \tilde{\mbb{P}}$.
It consists of a constant number of indices, of length in $O(\log (1+n_s)+\log (1+|E_0|+|E_1|))$. 
The size of {\it index generator} is in $O(\log (1+n_s)+\log (1+|E_0|+|E_1|))$.
The time complexity of an elementary operation on {\it index generator} executed by ${\mc A}$ is in $O(\log (1+n_s)+\log (1+|E_0|+|E_1|))$.

{\it addressing unit} consists of a constant number of address registers. 
The memory can be arranged as follows: 
{\footnotesize
\settowidth{\llll}{g}
\begin{equation*}
\Big[\ \text{\framebox{{\it stack\phantom{g}\hspace{-\llll}}}} \quad \text{\framebox{{\it index generator}}} \quad 
       \text{\framebox{{\it addressing unit}}} \quad \text{\framebox{{\it area of data records}}}\ \Big].     
\end{equation*}}%
The length of an address register and the size of {\it addressing unit} is in $O(\log (1+n_s)+\log (\#{\mc O}_{\mc A}(E_0,E_1)+|E_0|+|E_1|))$.
The time complexity of an elementary operation on {\it addressing unit} executed by ${\mc A}$ is 
in $O(\log (1+n_s)+\log (\#{\mc O}_{\mc A}(E_0,E_1)+|E_0|+|E_1|))$.
The size of the memory is the total size of {\it stack}, {\it index generator}, {\it addressing unit}, {\it area of data records} 
in $O((\#{\mc O}_{\mc A}(E_0,E_1)+|E_0|^r+|E_1|^r)\cdot (\log (1+n_s)+\log (1+|E_0|+|E_1|)))$.

We assume that ${\mc A}$ executes only elementary operations 
on {\it stack}, {\it index generator}, {\it addressing unit}, {\it area of data records}.
We get that the time complexity of an elementary operation executed by ${\mc A}$ is 
in $O(\log (1+n_s)+\log (\#{\mc O}_{\mc A}(E_0,E_1)+|E_0|+|E_1|))$.
We conclude that
\begin{alignat}{1}
\label{eq00t}   
& \begin{minipage}[t]{\linewidth-15mm}
  the time complexity of ${\mc A}$ on $E_0$ and $E_1$ is 
  in $O(\#{\mc O}_{\mc A}(E_0,E_1)\cdot (\log (1+n_s)+\log (\#{\mc O}_{\mc A}(E_0,E_1)+|E_0|+|E_1|)))$;                  
  \end{minipage}
\end{alignat}
\begin{alignat}{1}
\label{eq00s}     
& \begin{minipage}[t]{\linewidth-15mm}
  the space complexity of ${\mc A}$ on $E_0$ and $E_1$ is 
  in $O((\#{\mc O}_{\mc A}(E_0,E_1)+|E_0|^r+|E_1|^r)\cdot (\log (1+n_s)+\log (1+|E_0|+|E_1|)))$.
  \end{minipage}
\end{alignat}

\subsection{A technical treatment}
\label{S3.2}

We firstly explain generalised translation of a formula to clausal form informally.
Let us consider a formula
$\phi=\forall x\, \big(\exists y\, (q(x,y,z)\gle \gu)\rightarrow \forall z\, r(x,y,z)\geql \overline{0.3}\big)\in \mi{Form}_{\mc L}$.
All the variables occurring in $\phi$ are $x$, $y$, and $z$. 
During translation, we shall introduce auxiliary atoms of the form $\tilde{p}_i(x,y,z)$ with fresh predicate symbols.
Auxiliary atoms will correspond to respective subformulae of $\phi$.
The initial theory of the translation reads as follows:
{\footnotesize
\begin{IEEEeqnarray*}{LR}
\Big\{
  \tilde{p}_0(x,y,z)\geql \gu,
& \\
\phantom{\Big\{}
  \tilde{p}_0(x,y,z)\leftrightarrow 
  \forall x\, \big(\underbrace{\exists y\, (q(x,y,z)\gle \gu)\rightarrow 
                               \forall z\, r(x,y,z)\geql \overline{0.3}}_{\tilde{p}_1(x,y,z)}\big)\Big\}.
& \quad (\ref{eq0rr5+}) 
\end{IEEEeqnarray*}}%
We have introduced an auxiliary atom $\tilde{p}_0(x,y,z)$ corresponding to the entire formula $\phi$.
$\tilde{p}_0(x,y,z)$ is set to equal $\gu$, which causes a positive start of the translation.
The resulting order clausal theory will be equisatisfiable to $\phi$.
The correspondence between $\tilde{p}_0(x,y,z)$ and $\phi$ is expressed by the second equivalence.
The main connective of $\phi$ is a quantifier $\forall x$.
So, we shall apply the unary interpolation rule (\ref{eq0rr5+}) for $\forall$ from Table \ref{tab3}.
We introduce an auxiliary atom $\tilde{p}_1(x,y,z)$ 
for the subformula $\exists y\, (q(x,y,z)\gle \gu)\rightarrow \forall z\, r(x,y,z)\geql \overline{0.3}$ and 
rewrite the initial theory as follows:
{\footnotesize
\begin{IEEEeqnarray*}{LR}
\Big\{
  \tilde{p}_0(x,y,z)\geql \gu,
  \tilde{p}_0(x,y,z)\geql \forall x\, \tilde{p}_1(x,y,z),
& \\
\phantom{\Big\{}
  \tilde{p}_1(x,y,z)\leftrightarrow \big(\underbrace{\exists y\, (q(x,y,z)\gle \gu)}_{\tilde{p}_2(x,y,z)}\rightarrow 
                                         \underbrace{\forall z\, r(x,y,z)\geql \overline{0.3}}_{\tilde{p}_3(x,y,z)}\big)\Big\}.
& \quad (\ref{eq0rr3+}) 
\end{IEEEeqnarray*}}%
We see that the main connective of the subformula $\exists y\, (q(x,y,z)\gle \gu)\rightarrow \forall z\, r(x,y,z)\geql \overline{0.3}$ is
an implication.
Hence, we apply the binary interpolation rule (\ref{eq0rr3+}) for implication from Table~\ref{tab2},
introduce auxiliary atoms $\tilde{p}_2(x,y,z)$, $\tilde{p}_3(x,y,z)$ for the respective subformulae, and obtain another intermediate theory:  
{\footnotesize
\begin{IEEEeqnarray*}{LR}
\Big\{
  \tilde{p}_0(x,y,z)\geql \gu,
  \tilde{p}_0(x,y,z)\geql \forall x\, \tilde{p}_1(x,y,z),
& \\
\phantom{\Big\{}
  \tilde{p}_2(x,y,z)\gle \tilde{p}_3(x,y,z)\vee \tilde{p}_2(x,y,z)\geql \tilde{p}_3(x,y,z)\vee 
& \\
\phantom{\Big\{} \quad
\tilde{p}_1(x,y,z)\geql \tilde{p}_3(x,y,z), 
& \\
\phantom{\Big\{}
  \tilde{p}_3(x,y,z)\gle \tilde{p}_2(x,y,z)\vee \tilde{p}_1(x,y,z)\geql \gu,
& \\
\phantom{\Big\{}
  \tilde{p}_2(x,y,z)\leftrightarrow \exists y\, (\underbrace{q(x,y,z)\gle \gu}_{\tilde{p}_4(x,y,z)}),
& \\
\phantom{\Big\{}
  \tilde{p}_3(x,y,z)\leftrightarrow 
  \underbrace{\forall z\, r(x,y,z)}_{\tilde{p}_5(x,y,z)}\geql \underbrace{\overline{0.3}}_{\tilde{p}_6(x,y,z)}\Big\}.  
& \quad (\ref{eq0rr6+}), (\ref{eq0rr7+}) 
\end{IEEEeqnarray*}}%
We have got two recursive subcases for the subformulae $\exists y\, (q(x,y,z)\gle \gu)$ and $\forall z\, r(x,y,z)\geql \overline{0.3}$ 
(the last two equivalences).
The subcases can be solved by the unary interpolation rule~(\ref{eq0rr6+}) for $\exists$ from Table \ref{tab3} and
by the binary interpolation rule (\ref{eq0rr7+}) for equality from Table~\ref{tab2}, respectively:
{\footnotesize
\begin{IEEEeqnarray*}{LR}
\Big\{
  \tilde{p}_0(x,y,z)\geql \gu,
  \tilde{p}_0(x,y,z)\geql \forall x\, \tilde{p}_1(x,y,z),
& \\
\phantom{\Big\{}
  \tilde{p}_2(x,y,z)\gle \tilde{p}_3(x,y,z)\vee \tilde{p}_2(x,y,z)\geql \tilde{p}_3(x,y,z)\vee 
& \\
\phantom{\Big\{} \quad
\tilde{p}_1(x,y,z)\geql \tilde{p}_3(x,y,z), 
& \\
\phantom{\Big\{}
  \tilde{p}_3(x,y,z)\gle \tilde{p}_2(x,y,z)\vee \tilde{p}_1(x,y,z)\geql \gu,
& \\
\phantom{\Big\{}
  \tilde{p}_2(x,y,z)\geql \exists y\, \tilde{p}_4(x,y,z),
  \tilde{p}_4(x,y,z)\leftrightarrow \underbrace{q(x,y,z)}_{\tilde{p}_7(x,y,z)}\gle \gu,
& \\
\phantom{\Big\{}
  \tilde{p}_5(x,y,z)\geql \tilde{p}_6(x,y,z)\vee \tilde{p}_3(x,y,z)\geql \gz,
& \\
\phantom{\Big\{}
  \tilde{p}_5(x,y,z)\gle \tilde{p}_6(x,y,z)\vee \tilde{p}_6(x,y,z)\gle \tilde{p}_5(x,y,z)\vee 
& \\
\phantom{\Big\{} \quad
\tilde{p}_3(x,y,z)\geql \gu,
& \\
\phantom{\Big\{}
  \tilde{p}_5(x,y,z)\leftrightarrow \forall z \underbrace{r(x,y,z)}_{\tilde{p}_8(x,y,z)},
  \tilde{p}_6(x,y,z)\geql \overline{0.3}\Big\}.
& \quad (\ref{eq0rr888+}), (\ref{eq0rr5+})
\end{IEEEeqnarray*}}%
Applying the unary interpolation rule (\ref{eq0rr888+}) for strict order from Table \ref{tab3} and 
Rule (\ref{eq0rr5+}) for $\forall$ to the last two recursive subcases, respectively, we reach the atomic level and 
get the resulting order clausal theory:
{\footnotesize
\begin{IEEEeqnarray*}{LR}
\Big\{
  \tilde{p}_0(x,y,z)\geql \gu,
  \tilde{p}_0(x,y,z)\geql \forall x\, \tilde{p}_1(x,y,z),
& \\
\phantom{\Big\{}
  \tilde{p}_2(x,y,z)\gle \tilde{p}_3(x,y,z)\vee \tilde{p}_2(x,y,z)\geql \tilde{p}_3(x,y,z)\vee 
& \\
\phantom{\Big\{} \quad
\tilde{p}_1(x,y,z)\geql \tilde{p}_3(x,y,z),
& \\
\phantom{\Big\{}
  \tilde{p}_3(x,y,z)\gle \tilde{p}_2(x,y,z)\vee \tilde{p}_1(x,y,z)\geql \gu,
& \\
\phantom{\Big\{}
  \tilde{p}_2(x,y,z)\geql \exists y\, \tilde{p}_4(x,y,z),
& \\
\phantom{\Big\{}
  \tilde{p}_7(x,y,z)\gle \gu\vee \tilde{p}_4(x,y,z)\geql \gz,
  \tilde{p}_7(x,y,z)\geql \gu\vee \tilde{p}_4(x,y,z)\geql \gu,
& \\
\phantom{\Big\{}
  \tilde{p}_7(x,y,z)\geql q(x,y,z),
& \\
\phantom{\Big\{}
  \tilde{p}_5(x,y,z)\geql \tilde{p}_6(x,y,z)\vee \tilde{p}_3(x,y,z)\geql \gz,
& \\
\phantom{\Big\{}
  \tilde{p}_5(x,y,z)\gle \tilde{p}_6(x,y,z)\vee \tilde{p}_6(x,y,z)\gle \tilde{p}_5(x,y,z)\vee \tilde{p}_3(x,y,z)\geql \gu,
& \\
\phantom{\Big\{}
  \tilde{p}_5(x,y,z)\geql \forall z\, \tilde{p}_8(x,y,z),
  \tilde{p}_8(x,y,z)\geql r(x,y,z), 
  \tilde{p}_6(x,y,z)\geql \overline{0.3}\Big\}.
\end{IEEEeqnarray*}}%
As mentioned above, we have started the translation positively ($\tilde{p}_0(x,y,z)\geql \gu$); 
therefore, the resulting theory is equisatisfiable to $\phi$.

Translation of a formula or theory to clausal form is based on the following lemmata:

\begin{lemma}
\label{le111}
Let $n_\theta\in \mbb{N}$ and $\theta\in \mi{Form}_{\mc L}$.
There exists $\theta'\in \mi{Form}_{\mc L}$ such that 
\begin{enumerate}[\rm (a)]
\item
$\theta'\eqvl \theta$; 
\item 
$|\theta'|\leq 2\cdot |\theta|$; 
$\theta'$ can be built up from $\theta$ via a postorder traversal of $\theta$ with $\#{\mc O}(\theta)\in O(|\theta|)$ and
the time complexity in $O(|\theta|\cdot (\log (1+n_\theta)+\log |\theta|))$;
\item
$\theta'$ does not contain $\neg$ and $\del$; 
\item
$\theta'\in \overline{C}_{\mc L}$; or 
for every subformula of $\theta'$ of the form $\varepsilon_1\diamond \varepsilon_2$, $\diamond\in \{\wedge,\vee,\leftrightarrow\}$, 
$\varepsilon_i\neq \gz, \gu$, $\{\varepsilon_1,\varepsilon_2\}\not\subseteq \overline{C}_{\mc L}$;
for every subformula of $\theta'$ of the form $\varepsilon_1\rightarrow \varepsilon_2$,
$\varepsilon_1\neq \gz, \gu$, $\varepsilon_2\neq \gu$, $\{\varepsilon_1,\varepsilon_2\}\not\subseteq \overline{C}_{\mc L}$;
for every subformula of $\theta'$ of the form $\varepsilon_1\geql \varepsilon_2$,
$\{\varepsilon_1,\varepsilon_2\}\not\subseteq \overline{C}_{\mc L}$;
for every subformula of $\theta'$ of the form $\varepsilon_1\gle \varepsilon_2$,
$\varepsilon_1\neq \gu$, $\varepsilon_2\neq \gz$, $\{\varepsilon_1,\varepsilon_2\}\not\subseteq \overline{C}_{\mc L}$;
for every subformula of $\theta'$ of the form $Q x\, \varepsilon_1$, $\varepsilon_1\not\in \overline{C}_{\mc L}$;
\item
$\mi{tcons}(\theta')-\{\gz,\gu\}\subseteq \mi{tcons}(\theta)-\{\gz,\gu\}$.
\end{enumerate}
\end{lemma}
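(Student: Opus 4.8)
The plan is to produce $\theta'$ by a single bottom-up (postorder) rewriting of $\theta$ that simultaneously (i) eliminates every occurrence of $\neg$ and $\del$, and (ii) exhaustively removes the trivial truth-constant patterns forbidden in~(d). I would traverse the tree of $\theta$ in postorder, so that whenever a node is processed its immediate subformulae have already been rewritten into formulae that are either elements of $\overline{C}_{\mc L}$ or already satisfy all the restrictions in~(d). Processing a node then amounts to applying, at most once, a fixed finite set of local rewrite rules to the (already normalized) arguments.

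The elimination of the unary connectives rests on $\neg \phi_1 \eqvl \phi_1 \rightarrow \gz$ and $\del \phi_1 \eqvl \phi_1 \geql \gu$, read directly off~(\ref{eq0b}) and~(\ref{eq0kk}); when the rewritten argument is already a constant $\bar c$ I instead evaluate in $\mbi{G}$, obtaining $\gu$ if $c=0$ and $\gz$ otherwise (resp.\ $\gu$ if $c=1$ and $\gz$ otherwise for $\del$). For the constant simplifications I would list, per connective, the sound rules $\phi \wedge \gu \eqvl \phi$, $\phi \vee \gz \eqvl \phi$, $\gu \rightarrow \phi \eqvl \phi$, $\gu \leftrightarrow \phi \eqvl \phi$, $\phi \wedge \gz \eqvl \gz$, $\phi \vee \gu \eqvl \gu$, $\phi \rightarrow \gu \eqvl \gu$, $\gz \rightarrow \phi \eqvl \gu$, $\gu \gle \phi \eqvl \gz$, $\phi \gle \gz \eqvl \gz$, $\phi \leftrightarrow \gz \eqvl \phi \rightarrow \gz$, and $Q x\, \bar c \eqvl \bar c$, together with the rule that a binary connective whose two arguments are both constants is evaluated in $\mbi{G}$ to a single constant. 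Each equivalence is immediate from the truth-value clauses and the operator definitions of $\mbi{G}$, and since the truth value is defined compositionally, $\eqvl$ is a congruence; hence replacing a subformula by an $\eqvl$-equivalent one preserves the value of the whole formula, which yields~(a).

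That one postorder pass suffices I would establish by structural induction: the output at every node is either a member of $\overline{C}_{\mc L}$ or a formula all of whose subformulae obey~(d). The crucial observation is that, once the children are normalized, applying the rules at the current node returns either a constant, or one of the already-normalized children (in the absorbing cases), or the node itself after its forbidden patterns have been checked absent; in particular the two fresh patterns $\phi_1 \rightarrow \gz$ and $\phi_1 \geql \gu$ created by the $\neg$- and $\del$-eliminations fall inside the admissible set of~(d) precisely because $\phi_1$ is then a non-constant normalized formula. Thus no rewrite creates a new non-normal subformula and no re-traversal is needed, giving~(d); clause~(c) follows since the elimination rules destroy the only sources of $\neg$ and $\del$.

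For~(e), the elimination steps introduce only $\gz, \gu$, and every constant produced by evaluating $\wedge, \vee, \rightarrow, \leftrightarrow, \geql, \gle$ on arguments from $\mi{tcons}(\theta)$ is again one of those arguments or $\gz$ or $\gu$, so no new intermediate constant appears. For the size bound in~(b), the only size-increasing steps are the two unary eliminations, each adding a single constant; since the number of $\neg$ and $\del$ occurrences is at most $|\theta|$ and every other rule is non-increasing, $|\theta'| \leq 2\,|\theta|$. The traversal visits each node once and does a constant number of elementary operations there, so $\#{\mc O}(\theta) \in O(|\theta|)$, and by the cost model of Section~\ref{S3.1} (offset $n_\theta$, data of size $O(|\theta|)$) the time complexity lies in $O(|\theta|\cdot(\log(1+n_\theta)+\log|\theta|))$. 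I expect the main obstacle to be the bookkeeping of the inductive step for~(d): one must verify case by case that each rewrite terminates at a constant or a child or leaves a node whose local pattern is provably admissible, and that the two eliminations land inside the admissible set rather than re-triggering a simplification — this is exactly what simultaneously justifies the single-pass claim and caps the size increase at one unit per eliminated connective.
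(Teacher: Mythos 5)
Your proposal is correct and follows essentially the same route as the paper: a single postorder/structural-induction pass that eliminates $\neg$ and $\del$ via (\ref{eq0b}) and (\ref{eq0kk}) and applies the obvious simplification identities on $\mbi{G}$, with the size bound $|\theta'|\leq 2\cdot|\theta|$ coming from one added constant per eliminated unary connective and the complexity bounds from (\ref{eq00t}), (\ref{eq00s}). The paper states this proof only in outline, so your case analysis for (d) is a faithful (and more explicit) elaboration of the same argument.
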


\begin{proof}
The proof is by induction on the structure of $\theta$ using (\ref{eq0b}), (\ref{eq0kk}), and the obvious simplification identities
on $\mbi{G}$ with respect to $0$, $1$, $c_1, c_2\in [0,1]$, $\fvee$, $\fwedge$, $\frightarrow$, $\feql$, $\fle$ 
(e.g. $0\fvee a=a$, $1\fvee a=1$, $a\fvee a=a$, $c_1\fvee c_2=c_1$ iff $c_1\geq c_2$;
      $0\fwedge a=0$, $1\fwedge a=a$, $a\fwedge a=a$, $c_1\fwedge c_2=c_1$ iff $c_1\leq c_2$;
      $0\frightarrow a=1$, $1\frightarrow a=a$, $a\frightarrow 1=1$, $a\frightarrow a=1$, 
      $c_1\frightarrow c_2=1$ iff $c_1\leq c_2$, $c_1\frightarrow c_2=c_2$ iff $c_1>c_2$;
      $a\feql a=1$, $c_1\feql c_2=1$ iff $c_1=c_2$, $c_1\feql c_2=0$ iff $c_1\neq c_2$;
      $a\fle 0=0$, $1\fle a=0$, $a\fle a=0$, $c_1\fle c_2=1$ iff $c_1<c_2$, $c_1\fle c_2=0$ iff $c_1\geq c_2$, etc.);
the postorder traversal of $\theta$ uses the input $\theta$ and the output $\theta'$,
$|\theta'|\underset{\text{(b)}}{\leq} 2\cdot |\theta|\in O(|\theta|)$,
$\#{\mc O}(\theta)\in O(|\theta|)$;
by (\ref{eq00t}) for $n_\theta$, $\theta$, $\emptyset$, $\theta'$, $q=2$, $r=1$,
the time complexity of the postorder traversal of $\theta$ is
in $O(\#{\mc O}(\theta)\cdot (\log (1+n_\theta)+\log (\#{\mc O}(\theta)+|\theta|)))\subseteq
    O(|\theta|\cdot (\log (1+n_\theta)+\log |\theta|))$;
by (\ref{eq00s}) for $n_\theta$, $\theta$, $\emptyset$, $\theta'$, $q=2$, $r=1$,
the space complexity of the postorder traversal of $\theta$ is
in $O((\#{\mc O}(\theta)+|\theta|)\cdot (\log (1+n_\theta)+\log |\theta|))\subseteq
    O(|\theta|\cdot (\log (1+n_\theta)+\log |\theta|))$.
\qed
\end{proof}

\begin{lemma}
\label{le11}
Let $n_\theta\in \mbb{N}$, $\theta\in \mi{Form}_{\mc L}-\{\gz,\gu\}$, {\rm (c,d)} of Lemma \ref{le111} hold for $\theta$;
$\bar{x}$ be a sequence of variables of ${\mc L}$, $\mi{vars}(\theta)\subseteq \mi{vars}(\bar{x})$; 
$j_\mbbm{i}\in \mbb{N}$, $\mbbm{i}=(n_\theta,j_\mbbm{i})\in \{(n_\theta,j) \,|\, j\in \mbb{N}\}\subseteq \mbb{I}$,
$\tilde{p}_\mbbm{i}\in \tilde{\mbb{P}}$, $\mi{ar}(\tilde{f}_0)=\mi{ar}(\tilde{p}_\mbbm{i})=|\bar{x}|$.
There exist $n_J\geq j_\mbbm{i}$, 
$J=\{(n_\theta,j) \,|\, j_\mbbm{i}+1\leq j\leq n_J\}\subseteq \{(n_\theta,j) \,|\, j\in \mbb{N}\}\subseteq \mbb{I}$, $\mbbm{i}\not\in J$, 
$S\subseteq_{\mc F} \mi{OrdCl}_{{\mc L}\cup \{\tilde{p}_\mbbm{i}\}\cup \{\tilde{p}_\mbbm{j} \,|\, \mbbm{j}\in J\}}$ such that
\begin{enumerate}[\rm (a)]
\item
$\|J\|\leq |\theta|-1$;
\item
there exists an interpretation ${\mf A}$ for ${\mc L}\cup \{\tilde{p}_\mbbm{i}\}$ and
${\mf A}\models \tilde{p}_\mbbm{i}(\bar{x})\leftrightarrow \theta\in \mi{Form}_{{\mc L}\cup \{\tilde{p}_\mbbm{i}\}}$ if and only if 
there exists an interpretation ${\mf A}'$ for ${\mc L}\cup \{\tilde{p}_\mbbm{i}\}\cup \{\tilde{p}_\mbbm{j} \,|\, \mbbm{j}\in J\}$ and
${\mf A}'\models S$, 
satisfying ${\mf A}={\mf A}'|_{{\mc L}\cup \{\tilde{p}_\mbbm{i}\}}$;
\item
$|S|\leq 27\cdot |\theta|\cdot (1+|\bar{x}|)$,
$S$ can be built up from $\theta$ and $\tilde{f}_0(\bar{x})$ via a preorder traversal of $\theta$
with $\#{\mc O}(\theta,\tilde{f}_0(\bar{x}))\in O(|\theta|\cdot (1+|\bar{x}|))$; 
\item
for all $C\in S$,        
$\emptyset\neq \mi{preds}(C)\cap \tilde{\mbb{P}}\subseteq \{\tilde{p}_\mbbm{i}\}\cup \{\tilde{p}_\mbbm{j} \,|\, \mbbm{j}\in J\}$,
$\tilde{p}_\mbbm{i}(\bar{x})\geql \gu, \tilde{p}_\mbbm{i}(\bar{x})\gle \gu\not\in S$;
\item
$\mi{tcons}(S)-\{\gz,\gu\}=\mi{tcons}(\theta)-\{\gz,\gu\}$.
\end{enumerate}
\end{lemma}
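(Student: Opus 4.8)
The plan is to argue by induction on the structure of $\theta$, realising precisely the renaming procedure shown in the worked example. Given the auxiliary atom $\tilde{p}_\mbbm{i}(\bar{x})$, I would encode the biconditional $\tilde{p}_\mbbm{i}(\bar{x})\leftrightarrow \theta$ by a constant-size block of order clauses relating $\tilde{p}_\mbbm{i}(\bar{x})$ to freshly named immediate subformulae, and then recurse. In the two base cases — $\theta=p(t_1,\dots,t_n)\in \mi{Atom}_{\mc L}$ and $\theta=\bar{c}$ an intermediate truth constant (both admitted by $\theta\neq \gz,\gu$) — I would take $J=\emptyset$, $n_J=j_\mbbm{i}$, and the single unit clause $S=\{\tilde{p}_\mbbm{i}(\bar{x})\geql p(t_1,\dots,t_n)\}$ resp.\ $S=\{\tilde{p}_\mbbm{i}(\bar{x})\geql \bar{c}\}$; since $p(t_1,\dots,t_n),\bar{c}\neq \gu$, the forbidden unit clauses of (d) do not occur, and (a)--(e) are immediate.

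For the inductive step I would split on the main connective of $\theta$, which by Lemma~\ref{le111}(c) is one of $\wedge,\vee,\rightarrow,\leftrightarrow,\geql,\gle$ or an outermost quantifier $Q x$ (the cases $\neg,\del$ being absent). In each case I would introduce one fresh atom per immediate subformula — two for a binary connective, one for a quantifier and for the unary $\gle$/$\geql$-against-a-constant cases — reserving for them the next free indices in the block $\{(n_\theta,j)\}$, apply the matching interpolation rule of Tables~\ref{tab2},~\ref{tab3} to obtain the top block $S^{\top}$, and then invoke the induction hypothesis on each immediate subformula $\theta_m$ (which inherits (c,d) of Lemma~\ref{le111} and, by (d), satisfies $\theta_m\neq \gz,\gu$) using a fresh, disjoint index sub-block, obtaining $S_m$ and $J_m$. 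I would then set $S=S^{\top}\cup \bigcup_m S_m$ and collect $J$ as the union of the newly reserved top indices with the $J_m$, the index arithmetic guaranteeing pairwise disjointness and $\mbbm{i}\not\in J$.

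The verification of (a), (c), (e) is then routine additive bookkeeping along the recursion. Each node contributes a bounded number of fresh atoms, giving $\|J\|\leq |\theta|-1$ via the recurrences $\|J\|\leq 2+\|J_1\|+\|J_2\|$ for binary and $\|J\|\leq 1+\|J_1\|$ for unary nodes against $|\theta|=1+|\theta_1|+|\theta_2|$ resp.\ $|\theta|=2+|\theta_1|$; each top block has size $O(1+|\bar{x}|)$ with the worst rule fixing the constant $27$, and is produced in $O(1+|\bar{x}|)$ elementary operations by copying $\bar{x}$, so summation over the $O(|\theta|)$ nodes together with (\ref{eq00t}) yield the size and complexity bounds; and (e) holds because every interpolation rule introduces only $\gz,\gu$, so the only intermediate truth constants in $S$ are those surfacing in the base cases, namely $\mi{tcons}(\theta)-\{\gz,\gu\}$. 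Property (d) follows because each top clause mentions $\tilde{p}_\mbbm{i}$ or a freshly named child, the recursive $S_m$ mention only their own fresh predicates (so $\tilde{p}_\mbbm{i}$ never recurs there), and Lemma~\ref{le111}(d) rules out exactly the degenerate subformulae that could otherwise force a unit clause $\tilde{p}_\mbbm{i}(\bar{x})\geql \gu$ or $\tilde{p}_\mbbm{i}(\bar{x})\gle \gu$.

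The substantive content, and the main obstacle, is property (b). Granting the local correctness of each interpolation rule, equisatisfiability is obtained by chaining: a model $\mf{A}$ of $\tilde{p}_\mbbm{i}(\bar{x})\leftrightarrow \theta$ is expanded by interpreting the child atoms as the truth values of the corresponding subformulae, which satisfies $S^{\top}$ and each child biconditional $\tilde{p}_{\mbbm{j}_m}(\bar{x})\leftrightarrow \theta_m$, whence the induction hypothesis supplies the further expansions satisfying the $S_m$; conversely any model of $S$ restricts through each rule back to a model of the biconditional. The delicate point is establishing the correctness of each rule — that its top block is satisfiable by exactly the assignments with $\|\tilde{p}_\mbbm{i}(\bar{x})\|=\|\theta_1\|\fdiamond \|\theta_2\|$ — which is a finite but careful case analysis over $\mbi{G}$ driven by the piecewise definitions of $\frightarrow$, $\feql$, $\fle$ and residuation~(\ref{eq0a}). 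Equally delicate is the quantifier case, where $\bar{x}$ contains the bound variable $x$ so that $Q x\,\tilde{p}_{\mbbm{j}_1}(\bar{x})$ must be read as a quantified atom; here one must check, via the $\mi{freetermseq}$/$\mi{boundindset}$ bookkeeping, that the renaming commutes with $\bigfwedge_{u}$ resp.\ $\bigfvee_{u}$, so that $\tilde{p}_\mbbm{i}(\bar{x})\geql Q x\,\tilde{p}_{\mbbm{j}_1}(\bar{x})$ indeed encodes $\tilde{p}_\mbbm{i}(\bar{x})\leftrightarrow Q x\,\theta_1$.
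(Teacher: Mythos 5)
Your proposal follows essentially the same route as the paper's proof: induction on the structure of $\theta$ with the atomic/truth-constant base case yielding the single unit clause $\tilde{p}_\mbbm{i}(\bar{x})\geql \theta$, a case split between binary interpolation and the unary cases (quantifiers and connectives against $\gz$, $\gu$) governed by Lemma \ref{le111}(c,d), disjoint index blocks for the recursive calls, and the two-directional model-expansion/restriction argument for (b). The points you flag as delicate (local soundness of each rule over $\mbi{G}$ and the quantified-atom reading of $Q x\, \tilde{p}_{\mbbm{i}_1}(\bar{x})$ using $x\in\mi{vars}(\bar{x})$) are exactly the verifications the paper carries out, so the plan is sound.
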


\begin{proof}
We proceed by induction on the structure of $\theta$.

Case 1 (the base case):
$\theta\in \mi{Atom}_{\mc L}\cup \overline{C}_{\mc L}$.
We put $n_J=j_\mbbm{i}$ and $J=\emptyset\subseteq \{(n_\theta,j) \,|\, j\in \mbb{N}\}\subseteq \mbb{I}$.
Then $n_J\geq j_\mbbm{i}$, $\mbbm{i}\not\in J$,
$\tilde{p}_\mbbm{i}(\bar{x})\in \mi{Atom}_{{\mc L}\cup \{\tilde{p}_\mbbm{i}\}}$,
$\tilde{p}_\mbbm{i}(\bar{x})\geql \theta\in \mi{OrdLit}_{{\mc L}\cup \{\tilde{p}_\mbbm{i}\}}$.
We put $S=\{\tilde{p}_\mbbm{i}(\bar{x})\geql \theta\}\subseteq_{\mc F} \mi{OrdCl}_{{\mc L}\cup \{\tilde{p}_\mbbm{i}\}}$.

(a) and (c--e) can be proved straightforwardly.

For every interpretation ${\mf A}$ for ${\mc L}\cup \{\tilde{p}_\mbbm{i}\}$,
for all $e\in {\mc S}_{\mf A}$, 
${\mf A}\models_e \tilde{p}_\mbbm{i}(\bar{x})\leftrightarrow \theta\in \mi{Form}_{{\mc L}\cup \{\tilde{p}_\mbbm{i}\}}$ if and only if 
${\mf A}\models_e \tilde{p}_\mbbm{i}(\bar{x})\geql \theta$;
${\mf A}\models \tilde{p}_\mbbm{i}(\bar{x})\leftrightarrow \theta$ if and only if ${\mf A}\models S$.
Hence, there exists an interpretation ${\mf A}$ for ${\mc L}\cup \{\tilde{p}_\mbbm{i}\}$ and
${\mf A}\models \tilde{p}_\mbbm{i}(\bar{x})\leftrightarrow \theta$
if and only if there exists an interpretation ${\mf A}'$ for ${\mc L}\cup \{\tilde{p}_\mbbm{i}\}$ and
${\mf A}'\models S$, satisfying ${\mf A}={\mf A}'={\mf A}'|_{{\mc L}\cup \{\tilde{p}_\mbbm{i}\}}$; 
(b) holds.

Case 2 (the induction case): 
$\theta\in \mi{Form}_{\mc L}-(\mi{Atom}_{\mc L}\cup \overline{C}_{\mc L})$.
We have that (c,d) of Lemma \ref{le111} hold for $\theta$.
We distinguish two cases for $\theta$.

Case 2.1 (the binary interpolation case):
$\theta=\theta_1\diamond \theta_2$, $\diamond\in \{\wedge,\vee,\rightarrow,\leftrightarrow,\geql,\gle\}$,
$\theta_i\in \mi{Form}_{\mc L}-\{\gz,\gu\}$.
We have that (c,d) of Lemma \ref{le111} hold for $\theta$, $\mi{vars}(\theta)\subseteq \mi{vars}(\bar{x})$.
Then, for both $i$, 
(c,d) of Lemma \ref{le111} hold for $\theta_i$,
$\mi{vars}(\theta_i)\subseteq \mi{vars}(\theta)\subseteq \mi{vars}(\bar{x})$.
We put $j_{\mbbm{i}_1}=j_\mbbm{i}+1$ and $\mbbm{i}_1=(n_\theta,j_{\mbbm{i}_1})\in \{(n_\theta,j) \,|\, j\in \mbb{N}\}\subseteq \mbb{I}$. 
$\tilde{p}_{\mbbm{i}_1}\in \tilde{\mbb{P}}$.
We put $\mi{ar}(\tilde{p}_{\mbbm{i}_1})=|\bar{x}|$.
We get by the induction hypothesis for $\theta_1$, $j_{\mbbm{i}_1}$, $\mbbm{i}_1$, $\tilde{p}_{\mbbm{i}_1}$ that
there exist $n_{J_1}\geq j_{\mbbm{i}_1}$,
$J_1=\{(n_\theta,j) \,|\, j_{\mbbm{i}_1}+1\leq j\leq n_{J_1}\}\subseteq \{(n_\theta,j) \,|\, j\in \mbb{N}\}\subseteq \mbb{I}$, 
$\mbbm{i}_1\not\in J_1$,
$S_1\subseteq_{\mc F} \mi{OrdCl}_{{\mc L}\cup \{\tilde{p}_{\mbbm{i}_1}\}\cup \{\tilde{p}_\mbbm{j} \,|\, \mbbm{j}\in J_1\}}$, and
(a--e) hold for $\theta_1$, $\tilde{p}_{\mbbm{i}_1}$, $J_1$, $S_1$.
We put $j_{\mbbm{i}_2}=n_{J_1}+1$ and $\mbbm{i}_2=(n_\theta,j_{\mbbm{i}_2})\in \{(n_\theta,j) \,|\, j\in \mbb{N}\}\subseteq \mbb{I}$.
$\tilde{p}_{\mbbm{i}_2}\in \tilde{\mbb{P}}$.
We put $\mi{ar}(\tilde{p}_{\mbbm{i}_2})=|\bar{x}|$.
We get by the induction hypothesis for $\theta_2$, $j_{\mbbm{i}_2}$, $\mbbm{i}_2$, $\tilde{p}_{\mbbm{i}_2}$ that
there exist $n_{J_2}\geq j_{\mbbm{i}_2}$,
$J_2=\{(n_\theta,j) \,|\, j_{\mbbm{i}_2}+1\leq j\leq n_{J_2}\}\subseteq \{(n_\theta,j) \,|\, j\in \mbb{N}\}\subseteq \mbb{I}$, 
$\mbbm{i}_2\not\in J_2$,
$S_2\subseteq_{\mc F} \mi{OrdCl}_{{\mc L}\cup \{\tilde{p}_{\mbbm{i}_2}\}\cup \{\tilde{p}_\mbbm{j} \,|\, \mbbm{j}\in J_2\}}$, and
(a--e) hold for $\theta_2$, $\tilde{p}_{\mbbm{i}_2}$, $J_2$, $S_2$.
We put $n_J=n_{J_2}$ and $J=\{(n_\theta,j) \,|\, j_\mbbm{i}+1\leq j\leq n_J\}\subseteq \{(n_\theta,j) \,|\, j\in \mbb{N}\}\subseteq \mbb{I}$.
Then $j_\mbbm{i}<j_{\mbbm{i}_1}\leq n_{J_1}<j_{\mbbm{i}_2}\leq n_J$, $\mbbm{i}\not\in J$,
\begin{alignat}{1}
\label{eq1a}
& J=\{\mbbm{i}_1\}\cup J_1\cup \{\mbbm{i}_2\}\cup J_2, \\[0mm]
\label{eq1b}
& \{\mbbm{i}\}, \{\mbbm{i}_1\}, J_1, \{\mbbm{i}_2\}, J_2\ \text{are pairwise disjoint}.
\end{alignat}   
\begin{table*}[p]
\caption{Binary interpolation rules for $\wedge$, $\vee$, $\rightarrow$, $\leftrightarrow$, $\geql$, $\gle$}\label{tab2}
\vspace{-6mm}
\centering
\begin{minipage}[t]{\linewidth-30mm}
\footnotesize
\begin{IEEEeqnarray}{*LL}
\hline \hline \notag \\[0mm]
\notag 
\text{\bf Case} & \\[1mm]
\hline \notag \\[2mm]
\label{eq0rr1+}
\mbi{\theta=\theta_1\wedge \theta_2} & 
\dfrac{\tilde{p}_\mbbm{i}(\bar{x})\leftrightarrow \theta_1\wedge \theta_2}
      {\left\{\begin{array}{l}
              \tilde{p}_{\mbbm{i}_1}(\bar{x})\gle \tilde{p}_{\mbbm{i}_2}(\bar{x})\vee \tilde{p}_{\mbbm{i}_1}(\bar{x})\geql \tilde{p}_{\mbbm{i}_2}(\bar{x})\vee \tilde{p}_\mbbm{i}(\bar{x})\geql \tilde{p}_{\mbbm{i}_2}(\bar{x}), \\
              \tilde{p}_{\mbbm{i}_2}(\bar{x})\gle \tilde{p}_{\mbbm{i}_1}(\bar{x})\vee \tilde{p}_\mbbm{i}(\bar{x})\geql \tilde{p}_{\mbbm{i}_1}(\bar{x}),               
              \tilde{p}_{\mbbm{i}_1}(\bar{x})\leftrightarrow \theta_1, \tilde{p}_{\mbbm{i}_2}(\bar{x})\leftrightarrow \theta_2
              \end{array}\right\}} \\[2mm]
\IEEEeqnarraymulticol{2}{l}{
|\text{Consequent}|=
15+10\cdot |\bar{x}|+|\tilde{p}_{\mbbm{i}_1}(\bar{x})\leftrightarrow \theta_1|+|\tilde{p}_{\mbbm{i}_2}(\bar{x})\leftrightarrow \theta_2|\leq
27\cdot (1+|\bar{x}|)+|\tilde{p}_{\mbbm{i}_1}(\bar{x})\leftrightarrow \theta_1|+|\tilde{p}_{\mbbm{i}_2}(\bar{x})\leftrightarrow \theta_2|} \notag \\[6mm]
\label{eq0rr2+}
\mbi{\theta=\theta_1\vee \theta_2} & 
\dfrac{\tilde{p}_\mbbm{i}(\bar{x})\leftrightarrow (\theta_1\vee \theta_2)}
      {\left\{\begin{array}{l}
              \tilde{p}_{\mbbm{i}_1}(\bar{x})\gle \tilde{p}_{\mbbm{i}_2}(\bar{x})\vee \tilde{p}_{\mbbm{i}_1}(\bar{x})\geql \tilde{p}_{\mbbm{i}_2}(\bar{x})\vee \tilde{p}_\mbbm{i}(\bar{x})\geql \tilde{p}_{\mbbm{i}_1}(\bar{x}), \\
              \tilde{p}_{\mbbm{i}_2}(\bar{x})\gle \tilde{p}_{\mbbm{i}_1}(\bar{x})\vee \tilde{p}_\mbbm{i}(\bar{x})\geql \tilde{p}_{\mbbm{i}_2}(\bar{x}),               
              \tilde{p}_{\mbbm{i}_1}(\bar{x})\leftrightarrow \theta_1, \tilde{p}_{\mbbm{i}_2}(\bar{x})\leftrightarrow \theta_2
              \end{array}\right\}} \\[2mm]
\IEEEeqnarraymulticol{2}{l}{
|\text{Consequent}|=
15+10\cdot |\bar{x}|+|\tilde{p}_{\mbbm{i}_1}(\bar{x})\leftrightarrow \theta_1|+|\tilde{p}_{\mbbm{i}_2}(\bar{x})\leftrightarrow \theta_2|\leq
27\cdot (1+|\bar{x}|)+|\tilde{p}_{\mbbm{i}_1}(\bar{x})\leftrightarrow \theta_1|+|\tilde{p}_{\mbbm{i}_2}(\bar{x})\leftrightarrow \theta_2|} \notag \\[6mm]
\label{eq0rr3+}
\begin{array}{l}
\mbi{\theta=\theta_1\rightarrow \theta_2,} \\ 
\mbi{\theta_2\neq \gz} 
\end{array} & 
\dfrac{\tilde{p}_\mbbm{i}(\bar{x})\leftrightarrow (\theta_1\rightarrow \theta_2)}
      {\left\{\begin{array}{l}
              \tilde{p}_{\mbbm{i}_1}(\bar{x})\gle \tilde{p}_{\mbbm{i}_2}(\bar{x})\vee \tilde{p}_{\mbbm{i}_1}(\bar{x})\geql \tilde{p}_{\mbbm{i}_2}(\bar{x})\vee \tilde{p}_\mbbm{i}(\bar{x})\geql \tilde{p}_{\mbbm{i}_2}(\bar{x}), \\
              \tilde{p}_{\mbbm{i}_2}(\bar{x})\gle \tilde{p}_{\mbbm{i}_1}(\bar{x})\vee \tilde{p}_\mbbm{i}(\bar{x})\geql \gu,               
              \tilde{p}_{\mbbm{i}_1}(\bar{x})\leftrightarrow \theta_1, \tilde{p}_{\mbbm{i}_2}(\bar{x})\leftrightarrow \theta_2
              \end{array}\right\}} \\[2mm]
\IEEEeqnarraymulticol{2}{l}{
|\text{Consequent}|=
15+9\cdot |\bar{x}|+|\tilde{p}_{\mbbm{i}_1}(\bar{x})\leftrightarrow \theta_1|+|\tilde{p}_{\mbbm{i}_2}(\bar{x})\leftrightarrow \theta_2|\leq
27\cdot (1+|\bar{x}|)+|\tilde{p}_{\mbbm{i}_1}(\bar{x})\leftrightarrow \theta_1|+|\tilde{p}_{\mbbm{i}_2}(\bar{x})\leftrightarrow \theta_2|} \notag \\[6mm]
\label{eq0rr33+}
\mbi{\theta=\theta_1\leftrightarrow \theta_2} & 
\dfrac{\tilde{p}_\mbbm{i}(\bar{x})\leftrightarrow (\theta_1\leftrightarrow \theta_2)}
      {\left\{\begin{array}{l}
              \tilde{p}_{\mbbm{i}_1}(\bar{x})\gle \tilde{p}_{\mbbm{i}_2}(\bar{x})\vee \tilde{p}_{\mbbm{i}_1}(\bar{x})\geql \tilde{p}_{\mbbm{i}_2}(\bar{x})\vee \tilde{p}_\mbbm{i}(\bar{x})\geql \tilde{p}_{\mbbm{i}_2}(\bar{x}), \\
              \tilde{p}_{\mbbm{i}_2}(\bar{x})\gle \tilde{p}_{\mbbm{i}_1}(\bar{x})\vee \tilde{p}_{\mbbm{i}_2}(\bar{x})\geql \tilde{p}_{\mbbm{i}_1}(\bar{x})\vee \tilde{p}_\mbbm{i}(\bar{x})\geql \tilde{p}_{\mbbm{i}_1}(\bar{x}), \\
              \tilde{p}_{\mbbm{i}_1}(\bar{x})\gle \tilde{p}_{\mbbm{i}_2}(\bar{x})\vee \tilde{p}_{\mbbm{i}_2}(\bar{x})\gle \tilde{p}_{\mbbm{i}_1}(\bar{x})\vee \tilde{p}_\mbbm{i}(\bar{x})\geql \gu, 
              \tilde{p}_{\mbbm{i}_1}(\bar{x})\leftrightarrow \theta_1, \tilde{p}_{\mbbm{i}_2}(\bar{x})\leftrightarrow \theta_2
              \end{array}\right\}} \\[2mm]
\IEEEeqnarraymulticol{2}{l}{
|\text{Consequent}|=
27+17\cdot |\bar{x}|+|\tilde{p}_{\mbbm{i}_1}(\bar{x})\leftrightarrow \theta_1|+|\tilde{p}_{\mbbm{i}_2}(\bar{x})\leftrightarrow \theta_2|\leq
27\cdot (1+|\bar{x}|)+|\tilde{p}_{\mbbm{i}_1}(\bar{x})\leftrightarrow \theta_1|+|\tilde{p}_{\mbbm{i}_2}(\bar{x})\leftrightarrow \theta_2|} \notag \\[6mm]
\label{eq0rr7+}
\begin{array}{l}
\mbi{\theta=\theta_1\geql \theta_2,} \\
\mbi{\theta_i\neq \gz, \gu} 
\end{array} & 
\dfrac{\tilde{p}_\mbbm{i}(\bar{x})\leftrightarrow (\theta_1\geql \theta_2)}
      {\left\{\begin{array}{l}
              \tilde{p}_{\mbbm{i}_1}(\bar{x})\geql \tilde{p}_{\mbbm{i}_2}(\bar{x})\vee \tilde{p}_\mbbm{i}(\bar{x})\geql \gz, \\
              \tilde{p}_{\mbbm{i}_1}(\bar{x})\gle \tilde{p}_{\mbbm{i}_2}(\bar{x})\vee \tilde{p}_{\mbbm{i}_2}(\bar{x})\gle \tilde{p}_{\mbbm{i}_1}(\bar{x})\vee \tilde{p}_\mbbm{i}(\bar{x})\geql \gu, 
              \tilde{p}_{\mbbm{i}_1}(\bar{x})\leftrightarrow \theta_1, \tilde{p}_{\mbbm{i}_2}(\bar{x})\leftrightarrow \theta_2
              \end{array}\right\}} \\[2mm]
\IEEEeqnarraymulticol{2}{l}{
|\text{Consequent}|=
15+8\cdot |\bar{x}|+|\tilde{p}_{\mbbm{i}_1}(\bar{x})\leftrightarrow \theta_1|+|\tilde{p}_{\mbbm{i}_2}(\bar{x})\leftrightarrow \theta_2|\leq
27\cdot (1+|\bar{x}|)+|\tilde{p}_{\mbbm{i}_1}(\bar{x})\leftrightarrow \theta_1|+|\tilde{p}_{\mbbm{i}_2}(\bar{x})\leftrightarrow \theta_2|} \notag \\[6mm]
\label{eq0rr8+}
\begin{array}{l}
\mbi{\theta=\theta_1\gle \theta_2,} \\ 
\mbi{\theta_1\neq \gz, \theta_2\neq \gu} 
\end{array} \qquad & 
\dfrac{\tilde{p}_\mbbm{i}(\bar{x})\leftrightarrow (\theta_1\gle \theta_2)}
      {\left\{\begin{array}{l}
              \tilde{p}_{\mbbm{i}_1}(\bar{x})\gle \tilde{p}_{\mbbm{i}_2}(\bar{x})\vee \tilde{p}_\mbbm{i}(\bar{x})\geql \gz, \\
              \tilde{p}_{\mbbm{i}_2}(\bar{x})\gle \tilde{p}_{\mbbm{i}_1}(\bar{x})\vee \tilde{p}_{\mbbm{i}_2}(\bar{x})\geql \tilde{p}_{\mbbm{i}_1}(\bar{x})\vee \tilde{p}_\mbbm{i}(\bar{x})\geql \gu, 
              \tilde{p}_{\mbbm{i}_1}(\bar{x})\leftrightarrow \theta_1, \tilde{p}_{\mbbm{i}_2}(\bar{x})\leftrightarrow \theta_2
              \end{array}\right\}} \\[2mm]
\IEEEeqnarraymulticol{2}{l}{
|\text{Consequent}|=
15+8\cdot |\bar{x}|+|\tilde{p}_{\mbbm{i}_1}(\bar{x})\leftrightarrow \theta_1|+|\tilde{p}_{\mbbm{i}_2}(\bar{x})\leftrightarrow \theta_2|\leq
27\cdot (1+|\bar{x}|)+|\tilde{p}_{\mbbm{i}_1}(\bar{x})\leftrightarrow \theta_1|+|\tilde{p}_{\mbbm{i}_2}(\bar{x})\leftrightarrow \theta_2|} \notag \\[2mm]
\hline \hline \notag
\end{IEEEeqnarray}
\vspace{-4mm} \\
$\text{Consequent}$ denotes the consequent ("denominator") part of an interpolation rule.
\end{minipage}
\vspace{-2mm}
\end{table*}
In Table~\ref{tab2}, for every form of $\theta$, a binary interpolation rule of the respective form
\begin{alignat*}{1}
& \dfrac{\tilde{p}_\mbbm{i}(\bar{x})\leftrightarrow \theta\in \mi{Form}_{{\mc L}\cup \{\tilde{p}_\mbbm{i}\}}}
        {\mi{ClPrefix}\cup \{\gamma_1\}\cup \{\gamma_2\}\subseteq_{\mc F}
         \mi{Form}_{{\mc L}\cup \{\tilde{p}_\mbbm{i}\}\cup \{\tilde{p}_{\mbbm{i}_1}\}\cup \{\tilde{p}_{\mbbm{i}_2}\}}}, \\[0mm]
& \mi{ClPrefix}\subseteq_{\mc F}
  \mi{OrdCl}_{{\mc L}\cup \{\tilde{p}_\mbbm{i}\}\cup \{\tilde{p}_{\mbbm{i}_1}\}\cup \{\tilde{p}_{\mbbm{i}_2}\}}, \\
& \gamma_i=\tilde{p}_{\mbbm{i}_i}(\bar{x})\leftrightarrow \theta_i\in \mi{Form}_{{\mc L}\cup \{\tilde{p}_{\mbbm{i}_i}\}}, 
\end{alignat*}
is assigned.
We put 
\begin{alignat*}{1}
& S=\mi{ClPrefix}\cup S_1\cup S_2\subseteq_{\mc F} \\
& \phantom{S=\mbox{}}
    \mi{OrdCl}_{{\mc L}\cup \{\tilde{p}_\mbbm{i}\}\cup 
                \{\tilde{p}_\mbbm{j} \,|\, \mbbm{j}\in \{\mbbm{i}_1\}\cup J_1\cup \{\mbbm{i}_2\}\cup J_2\overset{\text{(\ref{eq1a})}}{=\!\!=} 
                                                       J\}}.
\end{alignat*}
It can be proved that
\begin{equation}
\label{eq1c}
\mi{ClPrefix}, S_1, S_2\ \text{are pairwise disjoint}.
\end{equation}

(a) and (c--e) can be proved straightforwardly.

Let ${\mf A}$ be an interpretation for ${\mc L}\cup \{\tilde{p}_\mbbm{i}\}$. 
We define an expansion ${\mf A}^\#$ of ${\mf A}$ 
to ${\mc L}\cup \{\tilde{p}_\mbbm{i}\}\cup \{\tilde{p}_{\mbbm{i}_1}\}\cup \{\tilde{p}_{\mbbm{i}_2}\}$ as follows:
\begin{equation*}
\tilde{p}_{\mbbm{i}_i}^{{\mf A}^\#}(u_1,\dots,u_{|\bar{x}|})=
\left\{\begin{array}{ll}
       \|\theta_i\|_e^{\mf A} &\ \text{\it if there exists}\ e\in {\mc S}_{\mf A}\ \text{\it such that} \\ 
                              &\ \quad \|\bar{x}\|_e^{\mf A}=u_1,\dots,u_{|\bar{x}|}, \\[1mm]
       0                      &\ \text{\it else}, \hfill i=1,2.
       \end{array}
\right. 
\end{equation*}
Then, for both $i$, for all $e\in {\mc S}_{{\mf A}^\#}$, 
\begin{alignat*}{1}
& \|\tilde{p}_{\mbbm{i}_i}(\bar{x})\|_e^{{\mf A}^\#}\!\!=\tilde{p}_{\mbbm{i}_i}^{{\mf A}^\#}(\|\bar{x}\|_e^{{\mf A}^\#})=
  \tilde{p}_{\mbbm{i}_i}^{{\mf A}^\#}(\|\bar{x}\|_e^{\mf A})=\|\theta_i\|_e^{\mf A}=\|\theta_i\|_e^{{\mf A}^\#}\!\!, \\[1mm]
& \|\gamma_i\|_e^{{\mf A}^\#}\!\!=
  (\|\tilde{p}_{\mbbm{i}_i}(\bar{x})\|_e^{{\mf A}^\#}\!\!\frightarrow \|\theta_i\|_e^{{\mf A}^\#})\fwedge
  (\|\theta_i\|_e^{{\mf A}^\#}\!\!\frightarrow \|\tilde{p}_{\mbbm{i}_i}(\bar{x})\|_e^{{\mf A}^\#})= \\
& \phantom{\|\gamma_i\|_e^{{\mf A}^\#}\!\!=\mbox{}}
  \|\theta_i\|_e^{{\mf A}^\#}\frightarrow \|\theta_i\|_e^{{\mf A}^\#}=1; 
\end{alignat*}
${\mf A}^\#\models \gamma_i$.

Let ${\mf A}\models \tilde{p}_\mbbm{i}(\bar{x})\leftrightarrow \theta$.
Then ${\mf A}^\#\models \tilde{p}_\mbbm{i}(\bar{x})\leftrightarrow \theta$,
for all $e\in {\mc S}_{{\mf A}^\#}$, 
{\footnotesize
\begin{alignat*}{1}
& \|\tilde{p}_\mbbm{i}(\bar{x})\leftrightarrow \theta\|_e^{{\mf A}^\#}=1,\ 
  \|\tilde{p}_\mbbm{i}(\bar{x})\leftrightarrow (\theta_1\diamond \theta_2)\|_e^{{\mf A}^\#}=1, \\[1mm]
& (\|\tilde{p}_\mbbm{i}(\bar{x})\|_e^{{\mf A}^\#}\frightarrow (\|\theta_1\|_e^{{\mf A}^\#}\fdiamond \|\theta_2\|_e^{{\mf A}^\#}))\fwedge \\
& \quad
  ((\|\theta_1\|_e^{{\mf A}^\#}\fdiamond \|\theta_2\|_e^{{\mf A}^\#})\frightarrow \|\tilde{p}_\mbbm{i}(\bar{x})\|_e^{{\mf A}^\#})=1, \\[1mm]
& \|\tilde{p}_\mbbm{i}(\bar{x})\|_e^{{\mf A}^\#}\frightarrow (\|\theta_1\|_e^{{\mf A}^\#}\fdiamond \|\theta_2\|_e^{{\mf A}^\#})=1, \\
& (\|\theta_1\|_e^{{\mf A}^\#}\fdiamond \|\theta_2\|_e^{{\mf A}^\#})\frightarrow \|\tilde{p}_\mbbm{i}(\bar{x})\|_e^{{\mf A}^\#}=1, \\[1mm]
& \|\tilde{p}_\mbbm{i}(\bar{x})\|_e^{{\mf A}^\#}\leq (\|\theta_1\|_e^{{\mf A}^\#}\fdiamond \|\theta_2\|_e^{{\mf A}^\#}),\
  (\|\theta_1\|_e^{{\mf A}^\#}\fdiamond \|\theta_2\|_e^{{\mf A}^\#})\leq \|\tilde{p}_\mbbm{i}(\bar{x})\|_e^{{\mf A}^\#}, \\[1mm]
& \|\tilde{p}_\mbbm{i}(\bar{x})\|_e^{{\mf A}^\#}=
  (\|\tilde{p}_{\mbbm{i}_1}(\bar{x})\|_e^{{\mf A}^\#}\fdiamond \|\tilde{p}_{\mbbm{i}_2}(\bar{x})\|_e^{{\mf A}^\#}),
\end{alignat*}}%
concerning Table \ref{tab2}, for every form of $\theta$, for all $C\in \mi{ClPrefix}$, ${\mf A}^\#\models_e C$;
${\mf A}^\#\models \mi{ClPrefix}$,
for both $i$, ${\mf A}^\#|_{{\mc L}\cup \{\tilde{p}_{\mbbm{i}_i}\}}\models \gamma_i$,
by the induction hypothesis (b) for $\theta_i$, ${\mf A}^\#|_{{\mc L}\cup \{\tilde{p}_{\mbbm{i}_i}\}}$, 
there exists an interpretation ${\mf A}_i$ for ${\mc L}\cup \{\tilde{p}_{\mbbm{i}_i}\}\cup \{\tilde{p}_\mbbm{j} \,|\, \mbbm{j}\in J_i\}$, and 
${\mf A}_i\models S_i$, ${\mf A}_i|_{{\mc L}\cup \{\tilde{p}_{\mbbm{i}_i}\}}={\mf A}^\#|_{{\mc L}\cup \{\tilde{p}_{\mbbm{i}_i}\}}$.
By (\ref{eq1b}), $\{\tilde{p}_\mbbm{i}\}$, $\{\tilde{p}_{\mbbm{i}_1}\}$, $\{\tilde{p}_\mbbm{j} \,|\, \mbbm{j}\in J_1\}$,
$\{\tilde{p}_{\mbbm{i}_2}\}$, $\{\tilde{p}_\mbbm{j} \,|\, \mbbm{j}\in J_2\}$ are pairwise disjoint.
We define an expansion ${\mf A}'$ of ${\mf A}$ 
to ${\mc L}\cup \{\tilde{p}_\mbbm{i}\}\cup \{\tilde{p}_\mbbm{j} \,|\, \mbbm{j}\in J\overset{\text{(\ref{eq1a})}}{=\!\!=}
                                                                                  \{\mbbm{i}_1\}\cup J_1\cup \{\mbbm{i}_2\}\cup J_2\}$
as follows:
\begin{equation*}
\tilde{p}_\mbbm{j}^{{\mf A}'}=\tilde{p}_\mbbm{j}^{{\mf A}_i}, \quad \mbbm{j}\in \{\mbbm{i}_i\}\cup J_i, i=1,2.
\end{equation*}
We get 
\begin{alignat*}{1}
& {\mf A}'|_{{\mc L}\cup \{\tilde{p}_\mbbm{i}\}\cup \{\tilde{p}_{\mbbm{i}_1}\}\cup \{\tilde{p}_{\mbbm{i}_2}\}}=
  {\mf A}^\#\models \mi{ClPrefix}, \\ 
& \text{for both}\ i, {\mf A}'|_{{\mc L}\cup \{\tilde{p}_{\mbbm{i}_i}\}\cup \{\tilde{p}_\mbbm{j} \,|\, \mbbm{j}\in J_i\}}=
                      {\mf A}_i\models S_i; \\
& {\mf A}'\models S, {\mf A}'|_{{\mc L}\cup \{\tilde{p}_\mbbm{i}\}}={\mf A}.
\end{alignat*}

Let ${\mf A}'$ be an interpretation 
for ${\mc L}\cup \{\tilde{p}_\mbbm{i}\}\cup \{\tilde{p}_\mbbm{j} \,|\, \mbbm{j}\in J\overset{\text{(\ref{eq1a})}}{=\!\!=}
                                                                                   \{\mbbm{i}_1\}\cup J_1\cup \{\mbbm{i}_2\}\cup J_2\}$ such that 
${\mf A}'\models S$.
We denote ${\mf A}^\#={\mf A}'|_{{\mc L}\cup \{\tilde{p}_\mbbm{i}\}\cup \{\tilde{p}_{\mbbm{i}_1}\}\cup \{\tilde{p}_{\mbbm{i}_2}\}}$.
Then ${\mf A}^\#\models \mi{ClPrefix}$, 
for both $i$,
${\mf A}'|_{{\mc L}\cup \{\tilde{p}_{\mbbm{i}_i}\}\cup \{\tilde{p}_\mbbm{j} \,|\, \mbbm{j}\in J_i\}}\models S_i$,
by the induction hypothesis (b) for $\theta_i$,
${\mf A}'|_{{\mc L}\cup \{\tilde{p}_{\mbbm{i}_i}\}\cup \{\tilde{p}_\mbbm{j} \,|\, \mbbm{j}\in J_i\}}$, 
${\mf A}'|_{{\mc L}\cup \{\tilde{p}_{\mbbm{i}_i}\}}\models \gamma_i$,
${\mf A}^\#\models \gamma_i$,
for all $e\in {\mc S}_{{\mf A}^\#}$, 
{\footnotesize
\begin{alignat*}{1}
& \|\gamma_i\|_e^{{\mf A}^\#}=1, \\[1mm]
& (\|\tilde{p}_{\mbbm{i}_i}(\bar{x})\|_e^{{\mf A}^\#}\frightarrow \|\theta_i\|_e^{{\mf A}^\#})\fwedge
  (\|\theta_i\|_e^{{\mf A}^\#}\frightarrow \|\tilde{p}_{\mbbm{i}_i}(\bar{x})\|_e^{{\mf A}^\#})=1, \\[1mm]
& (\|\tilde{p}_{\mbbm{i}_i}(\bar{x})\|_e^{{\mf A}^\#}\frightarrow \|\theta_i\|_e^{{\mf A}^\#})=1,\
  (\|\theta_i\|_e^{{\mf A}^\#}\frightarrow \|\tilde{p}_{\mbbm{i}_i}(\bar{x})\|_e^{{\mf A}^\#})=1, \\[1mm]
& \|\tilde{p}_{\mbbm{i}_i}(\bar{x})\|_e^{{\mf A}^\#}\leq \|\theta_i\|_e^{{\mf A}^\#},\
  \|\theta_i\|_e^{{\mf A}^\#}\leq \|\tilde{p}_{\mbbm{i}_i}(\bar{x})\|_e^{{\mf A}^\#}, \\[1mm]
& \|\tilde{p}_{\mbbm{i}_i}(\bar{x})\|_e^{{\mf A}^\#}=\|\theta_i\|_e^{{\mf A}^\#},  
\end{alignat*}}%
for all $C\in \mi{ClPrefix}$, ${\mf A}^\#\models_e C$,
concerning Table \ref{tab2}, for every form of $\theta$, 
{\footnotesize
\begin{alignat*}{1}
& \|\tilde{p}_\mbbm{i}(\bar{x})\|_e^{{\mf A}^\#}=
  (\|\tilde{p}_{\mbbm{i}_1}(\bar{x})\|_e^{{\mf A}^\#}\fdiamond \|\tilde{p}_{\mbbm{i}_2}(\bar{x})\|_e^{{\mf A}^\#}), \\[1mm]
& \|\tilde{p}_\mbbm{i}(\bar{x})\|_e^{{\mf A}^\#}=(\|\theta_1\|_e^{{\mf A}^\#}\fdiamond \|\theta_2\|_e^{{\mf A}^\#}), \\[1mm]
& \|\tilde{p}_\mbbm{i}(\bar{x})\leftrightarrow (\theta_1\diamond \theta_2)\|_e^{{\mf A}^\#}=1, \\[1mm]
& \|\tilde{p}_\mbbm{i}(\bar{x})\leftrightarrow \theta\|_e^{{\mf A}^\#}=1;
\end{alignat*}}%
${\mf A}^\#\models \tilde{p}_\mbbm{i}(\bar{x})\leftrightarrow \theta$,
${\mf A}'|_{{\mc L}\cup \{\tilde{p}_\mbbm{i}\}}={\mf A}^\#|_{{\mc L}\cup \{\tilde{p}_\mbbm{i}\}}\models
 \tilde{p}_\mbbm{i}(\bar{x})\leftrightarrow \theta$.
We put ${\mf A}={\mf A}'|_{{\mc L}\cup \{\tilde{p}_\mbbm{i}\}}$, an interpretation for ${\mc L}\cup \{\tilde{p}_\mbbm{i}\}$.
Then ${\mf A}\models \tilde{p}_\mbbm{i}(\bar{x})\leftrightarrow \theta$, 
${\mf A}={\mf A}'|_{{\mc L}\cup \{\tilde{p}_\mbbm{i}\}}$; (b) holds.

Case 2.2 (the unary interpolation case):
Either $\theta=\theta_1\rightarrow \gz$ or $\theta=\theta_1\geql \gz$ or $\theta=\theta_1\geql \gu$ or 
$\theta=\gz\gle \theta_1$ or $\theta_1\gle \gu$ or $\theta=Q x\, \theta_1$, $\theta_1\in \mi{Form}_{\mc L}-\{\gz,\gu\}$.
We have that (c,d) of Lemma~\ref{le111} hold for $\theta$, $\mi{vars}(\theta)\subseteq \mi{vars}(\bar{x})$.
Then (c,d) of Lemma \ref{le111} hold for $\theta_1$,
$\mi{vars}(\theta_1)\subseteq \mi{vars}(\theta)\subseteq \mi{vars}(\bar{x})$,
$x\in \mi{vars}(\theta)\subseteq \mi{vars}(\bar{x})$. 
We put $j_{\mbbm{i}_1}=j_\mbbm{i}+1$ and $\mbbm{i}_1=(n_\theta,j_{\mbbm{i}_1})\in \{(n_\theta,j) \,|\, j\in \mbb{N}\}\subseteq \mbb{I}$. 
$\tilde{p}_{\mbbm{i}_1}\in \tilde{\mbb{P}}$.
We put $\mi{ar}(\tilde{p}_{\mbbm{i}_1})=|\bar{x}|$.
We get by the induction hypothesis for $\theta_1$, $j_{\mbbm{i}_1}$, $\mbbm{i}_1$, $\tilde{p}_{\mbbm{i}_1}$ that
there exist $n_{J_1}\geq j_{\mbbm{i}_1}$,
$J_1=\{(n_\theta,j) \,|\, j_{\mbbm{i}_1}+1\leq j\leq n_{J_1}\}\subseteq \{(n_\theta,j) \,|\, j\in \mbb{N}\}\subseteq \mbb{I}$, 
$\mbbm{i}_1\not\in J_1$,
$S_1\subseteq_{\mc F} \mi{OrdCl}_{{\mc L}\cup \{\tilde{p}_{\mbbm{i}_1}\}\cup \{\tilde{p}_\mbbm{j} \,|\, \mbbm{j}\in J_1\}}$, and
(a--e) hold for $\theta_1$, $\tilde{p}_{\mbbm{i}_1}$, $J_1$, $S_1$.
We put $n_J=n_{J_1}$ and $J=\{(n_\theta,j) \,|\, j_\mbbm{i}+1\leq j\leq n_J\}\subseteq \{(n_\theta,j) \,|\, j\in \mbb{N}\}\subseteq \mbb{I}$.
Then $j_\mbbm{i}<j_{\mbbm{i}_1}\leq n_J$, $\mbbm{i}\not\in J$,
\begin{alignat}{1}
\label{eq1f}
& J=\{\mbbm{i}_1\}\cup J_1, \\[0mm]
\label{eq1g}
& \{\mbbm{i}\}, \{\mbbm{i}_1\}, J_1\ \text{are pairwise disjoint}.
\end{alignat}   
\begin{table*}[p]
\caption{Unary interpolation rules for $\rightarrow$, $\geql$, $\gle$, $\forall$, $\exists$}\label{tab3}
\vspace{-6mm}
\centering
\begin{minipage}[t]{\linewidth-65mm}
\footnotesize
\begin{IEEEeqnarray}{*LL}
\hline \hline \notag \\[0mm]
\notag 
\text{\bf Case} & \\[1mm]
\hline \notag \\[2mm]
\label{eq0rr4+}
\mbi{\theta=\theta_1\rightarrow \gz} \qquad & 
\dfrac{\tilde{p}_\mbbm{i}(\bar{x})\leftrightarrow (\theta_1\rightarrow \gz)}
      {\{\tilde{p}_{\mbbm{i}_1}(\bar{x})\geql \gz\vee \tilde{p}_\mbbm{i}(\bar{x})\geql \gz,
         \gz\gle \tilde{p}_{\mbbm{i}_1}(\bar{x})\vee \tilde{p}_\mbbm{i}(\bar{x})\geql \gu, 
         \tilde{p}_{\mbbm{i}_1}(\bar{x})\leftrightarrow \theta_1\}} \\[2mm]
\IEEEeqnarraymulticol{2}{l}{
|\text{Consequent}|=
12+4\cdot |\bar{x}|+|\tilde{p}_{\mbbm{i}_1}(\bar{x})\leftrightarrow \theta_1|\leq
27\cdot (1+|\bar{x}|)+|\tilde{p}_{\mbbm{i}_1}(\bar{x})\leftrightarrow \theta_1|} \notag \\[6mm]
\label{eq0rr77+}
\mbi{\theta=\theta_1\geql \gz} & 
\dfrac{\tilde{p}_\mbbm{i}(\bar{x})\leftrightarrow (\theta_1\geql \gz)}
      {\{\tilde{p}_{\mbbm{i}_1}(\bar{x})\geql \gz\vee \tilde{p}_\mbbm{i}(\bar{x})\geql \gz, 
         \gz\gle \tilde{p}_{\mbbm{i}_1}(\bar{x})\vee \tilde{p}_\mbbm{i}(\bar{x})\geql \gu, 
         \tilde{p}_{\mbbm{i}_1}(\bar{x})\leftrightarrow \theta_1\}} \\[2mm]
\IEEEeqnarraymulticol{2}{l}{
|\text{Consequent}|=
12+4\cdot |\bar{x}|+|\tilde{p}_{\mbbm{i}_1}(\bar{x})\leftrightarrow \theta_1|\leq
27\cdot (1+|\bar{x}|)+|\tilde{p}_{\mbbm{i}_1}(\bar{x})\leftrightarrow \theta_1|} \notag \\[6mm]
\label{eq0rr777+}
\mbi{\theta=\theta_1\geql \gu} & 
\dfrac{\tilde{p}_\mbbm{i}(\bar{x})\leftrightarrow (\theta_1\geql \gu)}
      {\{\tilde{p}_{\mbbm{i}_1}(\bar{x})\geql \gu\vee \tilde{p}_\mbbm{i}(\bar{x})\geql \gz, 
         \tilde{p}_{\mbbm{i}_1}(\bar{x})\gle \gu\vee \tilde{p}_\mbbm{i}(\bar{x})\geql \gu, 
         \tilde{p}_{\mbbm{i}_1}(\bar{x})\leftrightarrow \theta_1\}} \\[2mm]
\IEEEeqnarraymulticol{2}{l}{
|\text{Consequent}|=
12+4\cdot |\bar{x}|+|\tilde{p}_{\mbbm{i}_1}(\bar{x})\leftrightarrow \theta_1|\leq
27\cdot (1+|\bar{x}|)+|\tilde{p}_{\mbbm{i}_1}(\bar{x})\leftrightarrow \theta_1|} \notag \\[6mm]
\label{eq0rr88+}
\mbi{\theta=\gz\gle \theta_1} & 
\dfrac{\tilde{p}_\mbbm{i}(\bar{x})\leftrightarrow (\gz\gle \theta_1)}
      {\{\gz\gle \tilde{p}_{\mbbm{i}_1}(\bar{x})\vee \tilde{p}_\mbbm{i}(\bar{x})\geql \gz, 
         \tilde{p}_{\mbbm{i}_1}(\bar{x})\geql \gz\vee \tilde{p}_\mbbm{i}(\bar{x})\geql \gu,                              
         \tilde{p}_{\mbbm{i}_1}(\bar{x})\leftrightarrow \theta_1\}} \\[2mm]
\IEEEeqnarraymulticol{2}{l}{
|\text{Consequent}|=
12+4\cdot |\bar{x}|+|\tilde{p}_{\mbbm{i}_1}(\bar{x})\leftrightarrow \theta_1|\leq
27\cdot (1+|\bar{x}|)+|\tilde{p}_{\mbbm{i}_1}(\bar{x})\leftrightarrow \theta_1|} \notag \\[6mm] 
\label{eq0rr888+}
\mbi{\theta=\theta_1\gle \gu} & 
\dfrac{\tilde{p}_\mbbm{i}(\bar{x})\leftrightarrow (\theta_1\gle \gu)}
      {\{\tilde{p}_{\mbbm{i}_1}(\bar{x})\gle \gu\vee \tilde{p}_\mbbm{i}(\bar{x})\geql \gz, 
         \tilde{p}_{\mbbm{i}_1}(\bar{x})\geql \gu\vee \tilde{p}_\mbbm{i}(\bar{x})\geql \gu, 
         \tilde{p}_{\mbbm{i}_1}(\bar{x})\leftrightarrow \theta_1\}} \\[2mm]
\IEEEeqnarraymulticol{2}{l}{
|\text{Consequent}|=
12+4\cdot |\bar{x}|+|\tilde{p}_{\mbbm{i}_1}(\bar{x})\leftrightarrow \theta_1|\leq
27\cdot (1+|\bar{x}|)+|\tilde{p}_{\mbbm{i}_1}(\bar{x})\leftrightarrow \theta_1|} \notag \\[6mm]
\label{eq0rr5+}
\mbi{\theta=\forall x\, \theta_1} & 
\dfrac{\tilde{p}_\mbbm{i}(\bar{x})\leftrightarrow \forall x\, \theta_1}
      {\{\tilde{p}_\mbbm{i}(\bar{x})\geql \forall x\, \tilde{p}_{\mbbm{i}_1}(\bar{x}),
         \tilde{p}_{\mbbm{i}_1}(\bar{x})\leftrightarrow \theta_1\}} \\[2mm]
\IEEEeqnarraymulticol{2}{l}{
|\text{Consequent}|=
5+2\cdot |\bar{x}|+|\tilde{p}_{\mbbm{i}_1}(\bar{x})\leftrightarrow \theta_1|\leq
27\cdot (1+|\bar{x}|)+|\tilde{p}_{\mbbm{i}_1}(\bar{x})\leftrightarrow \theta_1|} \notag \\[6mm]
\label{eq0rr6+}
\mbi{\theta=\exists x\, \theta_1} & 
\dfrac{\tilde{p}_\mbbm{i}(\bar{x})\leftrightarrow \exists x\, \theta_1}
      {\{\tilde{p}_\mbbm{i}(\bar{x})\geql \exists x\, \tilde{p}_{\mbbm{i}_1}(\bar{x}),
         \tilde{p}_{\mbbm{i}_1}(\bar{x})\leftrightarrow \theta_1\}} \\[2mm]
\IEEEeqnarraymulticol{2}{l}{
|\text{Consequent}|=
5+2\cdot |\bar{x}|+|\tilde{p}_{\mbbm{i}_1}(\bar{x})\leftrightarrow \theta_1|\leq
27\cdot (1+|\bar{x}|)+|\tilde{p}_{\mbbm{i}_1}(\bar{x})\leftrightarrow \theta_1|} \notag \\[2mm]
\hline \hline \notag
\end{IEEEeqnarray}
\vspace{-4mm} \\
$\text{Consequent}$ denotes the consequent ("denominator") part of an interpolation rule. 
\end{minipage}
\vspace{-2mm}
\end{table*}
In Table~\ref{tab3}, for every form of $\theta$, a unary interpolation rule of the respective form
\begin{alignat*}{1}
& \dfrac{\tilde{p}_\mbbm{i}(\bar{x})\leftrightarrow \theta\in \mi{Form}_{{\mc L}\cup \{\tilde{p}_\mbbm{i}\}}}
        {\mi{ClPrefix}\cup \{\gamma_1\}\subseteq_{\mc F} 
         \mi{Form}_{{\mc L}\cup \{\tilde{p}_\mbbm{i}\}\cup \{\tilde{p}_{\mbbm{i}_1}\}}}, \\[0mm]
& \mi{ClPrefix}\subseteq_{\mc F} \mi{OrdCl}_{{\mc L}\cup \{\tilde{p}_\mbbm{i}\}\cup \{\tilde{p}_{\mbbm{i}_1}\}}, \\
& \gamma_1=\tilde{p}_{\mbbm{i}_1}(\bar{x})\leftrightarrow \theta_1\in \mi{Form}_{{\mc L}\cup \{\tilde{p}_{\mbbm{i}_1}\}},
\end{alignat*}
is assigned.
We put
\begin{equation*} 
S=\mi{ClPrefix}\cup S_1\subseteq_{\mc F}
  \mi{OrdCl}_{{\mc L}\cup \{\tilde{p}_\mbbm{i}\}\cup 
              \{\tilde{p}_\mbbm{j} \,|\, \mbbm{j}\in \{\mbbm{i}_1\}\cup J_1\overset{\text{(\ref{eq1f})}}{=\!\!=} J\}}.
\end{equation*}
It can be proved that
\begin{equation}
\label{eq1h}
\mi{ClPrefix}\cap S_1=\emptyset.
\end{equation}

(a) and (c--e) can be proved straightforwardly.

Let ${\mf A}$ be an interpretation for ${\mc L}\cup \{\tilde{p}_\mbbm{i}\}$. 
We define an expansion ${\mf A}^\#$ of ${\mf A}$ to ${\mc L}\cup \{\tilde{p}_\mbbm{i}\}\cup \{\tilde{p}_{\mbbm{i}_1}\}$ as follows:
\begin{equation*}
\tilde{p}_{\mbbm{i}_1}^{{\mf A}^\#}(u_1,\dots,u_{|\bar{x}|})=
\left\{\begin{array}{ll}
       \|\theta_1\|_e^{\mf A} &\ \text{\it if there exists}\ e\in {\mc S}_{\mf A}\ \text{\it such that} \\ 
                              &\ \quad \|\bar{x}\|_e^{\mf A}=u_1,\dots,u_{|\bar{x}|}, \\[1mm]
       0                      &\ \text{\it else}.
       \end{array}
\right. 
\end{equation*}
Then, for all $e\in {\mc S}_{{\mf A}^\#}$, 
\begin{alignat*}{1}
& \|\tilde{p}_{\mbbm{i}_1}(\bar{x})\|_e^{{\mf A}^\#}\!\!=\tilde{p}_{\mbbm{i}_1}^{{\mf A}^\#}(\|\bar{x}\|_e^{{\mf A}^\#})=
  \tilde{p}_{\mbbm{i}_1}^{{\mf A}^\#}(\|\bar{x}\|_e^{\mf A})=\|\theta_1\|_e^{\mf A}=\|\theta_1\|_e^{{\mf A}^\#}\!\!, \\[1mm]
& \|\gamma_1\|_e^{{\mf A}^\#}\!\!=
  (\|\tilde{p}_{\mbbm{i}_1}(\bar{x})\|_e^{{\mf A}^\#}\!\!\frightarrow \|\theta_1\|_e^{{\mf A}^\#})\fwedge
  (\|\theta_1\|_e^{{\mf A}^\#}\!\!\frightarrow \|\tilde{p}_{\mbbm{i}_1}(\bar{x})\|_e^{{\mf A}^\#})= \\
& \phantom{\|\gamma_1\|_e^{{\mf A}^\#}\!\!=\mbox{}}
  \|\theta_1\|_e^{{\mf A}^\#}\frightarrow \|\theta_1\|_e^{{\mf A}^\#}=1;
\end{alignat*}
${\mf A}^\#\models \gamma_1$.

Let ${\mf A}\models \tilde{p}_\mbbm{i}(\bar{x})\leftrightarrow \theta$.
Then ${\mf A}^\#\models \tilde{p}_\mbbm{i}(\bar{x})\leftrightarrow \theta$,
for all $e\in {\mc S}_{{\mf A}^\#}$,
{\footnotesize
\begin{alignat*}{1}
& \|\tilde{p}_\mbbm{i}(\bar{x})\leftrightarrow \theta\|_e^{{\mf A}^\#}=1, \\[2mm]
& \|\tilde{p}_\mbbm{i}(\bar{x})\leftrightarrow (\theta_1\diamond \gz)\|_e^{{\mf A}^\#}=1, \\[1mm]
& \|\tilde{p}_\mbbm{i}(\bar{x})\leftrightarrow (\theta_1\diamond \gu)\|_e^{{\mf A}^\#}=1, \\[1mm]
& \|\tilde{p}_\mbbm{i}(\bar{x})\leftrightarrow (\gz\gle \theta_1)\|_e^{{\mf A}^\#}=1, \\[1mm]
& \|\tilde{p}_\mbbm{i}(\bar{x})\leftrightarrow Q x\, \theta_1\|_e^{{\mf A}^\#}=1, \\[2mm]
& (\|\tilde{p}_\mbbm{i}(\bar{x})\|_e^{{\mf A}^\#}\frightarrow (\|\theta_1\|_e^{{\mf A}^\#}\fdiamond 0))\fwedge
  ((\|\theta_1\|_e^{{\mf A}^\#}\fdiamond 0)\frightarrow \|\tilde{p}_\mbbm{i}(\bar{x})\|_e^{{\mf A}^\#})=1, \\[1mm]
& (\|\tilde{p}_\mbbm{i}(\bar{x})\|_e^{{\mf A}^\#}\frightarrow (\|\theta_1\|_e^{{\mf A}^\#}\fdiamond 1))\fwedge
  ((\|\theta_1\|_e^{{\mf A}^\#}\fdiamond 1)\frightarrow \|\tilde{p}_\mbbm{i}(\bar{x})\|_e^{{\mf A}^\#})=1, \\[1mm]
& (\|\tilde{p}_\mbbm{i}(\bar{x})\|_e^{{\mf A}^\#}\frightarrow (0\fle \|\theta_1\|_e^{{\mf A}^\#}))\fwedge
  ((0\fle \|\theta_1\|_e^{{\mf A}^\#})\frightarrow \|\tilde{p}_\mbbm{i}(\bar{x})\|_e^{{\mf A}^\#})=1, \\[1mm]
& (\|\tilde{p}_\mbbm{i}(\bar{x})\|_e^{{\mf A}^\#}\frightarrow \bigflimit_{u\in {\mc U}_{{\mf A}^\#}} \|\theta_1\|_{e[x/u]}^{{\mf A}^\#})\fwedge \\
& \quad
  (\bigflimit_{u\in {\mc U}_{{\mf A}^\#}} \|\theta_1\|_{e[x/u]}^{{\mf A}^\#}\frightarrow \|\tilde{p}_\mbbm{i}(\bar{x})\|_e^{{\mf A}^\#})=1, \\[2mm]
& \|\tilde{p}_\mbbm{i}(\bar{x})\|_e^{{\mf A}^\#}\leq (\|\theta_1\|_e^{{\mf A}^\#}\fdiamond 0),\
  (\|\theta_1\|_e^{{\mf A}^\#}\fdiamond 0)\leq \|\tilde{p}_\mbbm{i}(\bar{x})\|_e^{{\mf A}^\#}, \\[1mm]
& \|\tilde{p}_\mbbm{i}(\bar{x})\|_e^{{\mf A}^\#}\leq (\|\theta_1\|_e^{{\mf A}^\#}\fdiamond 1),\
  (\|\theta_1\|_e^{{\mf A}^\#}\fdiamond 1)\leq \|\tilde{p}_\mbbm{i}(\bar{x})\|_e^{{\mf A}^\#}, \\[1mm]
& \|\tilde{p}_\mbbm{i}(\bar{x})\|_e^{{\mf A}^\#}\leq (0\fle \|\theta_1\|_e^{{\mf A}^\#}),\
  (0\fle \|\theta_1\|_e^{{\mf A}^\#})\leq \|\tilde{p}_\mbbm{i}(\bar{x})\|_e^{{\mf A}^\#}, \\[1mm]
& \|\tilde{p}_\mbbm{i}(\bar{x})\|_e^{{\mf A}^\#}\leq \bigflimit_{u\in {\mc U}_{{\mf A}^\#}} \|\theta_1\|_{e[x/u]}^{{\mf A}^\#},\
  \bigflimit_{u\in {\mc U}_{{\mf A}^\#}} \|\theta_1\|_{e[x/u]}^{{\mf A}^\#}\leq \|\tilde{p}_\mbbm{i}(\bar{x})\|_e^{{\mf A}^\#}, \\[2mm]
& \|\tilde{p}_\mbbm{i}(\bar{x})\|_e^{{\mf A}^\#}=(\|\tilde{p}_{\mbbm{i}_1}(\bar{x})\|_e^{{\mf A}^\#}\fdiamond 0), \\[1mm]
& \|\tilde{p}_\mbbm{i}(\bar{x})\|_e^{{\mf A}^\#}=(\|\tilde{p}_{\mbbm{i}_1}(\bar{x})\|_e^{{\mf A}^\#}\fdiamond 1), \\[1mm]
& \|\tilde{p}_\mbbm{i}(\bar{x})\|_e^{{\mf A}^\#}=(0\fle \|\tilde{p}_{\mbbm{i}_1}(\bar{x})\|_e^{{\mf A}^\#}), \\[1mm]
& \|\tilde{p}_\mbbm{i}(\bar{x})\|_e^{{\mf A}^\#}=\bigflimit_{u\in {\mc U}_{{\mf A}^\#}} \|\tilde{p}_{\mbbm{i}_1}(\bar{x})\|_{e[x/u]}^{{\mf A}^\#},
\quad \bigflimit\in \Big\{\bigfwedge,\bigfvee\Big\},
\end{alignat*}}%
concerning Table \ref{tab3}, for every form of $\theta$, for all $C\in \mi{ClPrefix}$, ${\mf A}^\#\models_e C$;
${\mf A}^\#\models \mi{ClPrefix}$, 
${\mf A}^\#|_{{\mc L}\cup \{\tilde{p}_{\mbbm{i}_1}\}}\models \gamma_1$,
by the induction hypothesis (b) for $\theta_1$, ${\mf A}^\#|_{{\mc L}\cup \{\tilde{p}_{\mbbm{i}_1}\}}$, 
there exists an interpretation ${\mf A}_1$ for ${\mc L}\cup \{\tilde{p}_{\mbbm{i}_1}\}\cup \{\tilde{p}_\mbbm{j} \,|\, \mbbm{j}\in J_1\}$, and 
${\mf A}_1\models S_1$, ${\mf A}_1|_{{\mc L}\cup \{\tilde{p}_{\mbbm{i}_1}\}}={\mf A}^\#|_{{\mc L}\cup \{\tilde{p}_{\mbbm{i}_1}\}}$.
By (\ref{eq1g}), $\{\tilde{p}_\mbbm{i}\}$, $\{\tilde{p}_{\mbbm{i}_1}\}$, $\{\tilde{p}_\mbbm{j} \,|\, \mbbm{j}\in J_1\}$ are pairwise disjoint.
We define an expansion ${\mf A}'$ of ${\mf A}$ 
to ${\mc L}\cup \{\tilde{p}_\mbbm{i}\}\cup \{\tilde{p}_\mbbm{j} \,|\, \mbbm{j}\in J\overset{\text{(\ref{eq1f})}}{=\!\!=} 
                                                                                  \{\mbbm{i}_1\}\cup J_1\}$
as follows:
\begin{equation*}
\tilde{p}_\mbbm{j}^{{\mf A}'}=\tilde{p}_\mbbm{j}^{{\mf A}_1}, \quad \mbbm{j}\in \{\mbbm{i}_1\}\cup J_1.
\end{equation*}
We get
\begin{alignat*}{1}
& {\mf A}'|_{{\mc L}\cup \{\tilde{p}_\mbbm{i}\}\cup \{\tilde{p}_{\mbbm{i}_1}\}}={\mf A}^\#\models \mi{ClPrefix}, \\ 
& {\mf A}'|_{{\mc L}\cup \{\tilde{p}_{\mbbm{i}_1}\}\cup \{\tilde{p}_\mbbm{j} \,|\, \mbbm{j}\in J_1\}}={\mf A}_1\models S_1; \\
& {\mf A}'\models S, {\mf A}'|_{{\mc L}\cup \{\tilde{p}_\mbbm{i}\}}={\mf A}.
\end{alignat*}

Let ${\mf A}'$ be an interpretation 
for ${\mc L}\cup \{\tilde{p}_\mbbm{i}\}\cup \{\tilde{p}_\mbbm{j} \,|\, \mbbm{j}\in J\overset{\text{(\ref{eq1f})}}{=\!\!=}
                                                                                   \{\mbbm{i}_1\}\cup J_1\}$ such that 
${\mf A}'\models S$.
We put ${\mf A}^\#={\mf A}'|_{{\mc L}\cup \{\tilde{p}_\mbbm{i}\}\cup \{\tilde{p}_{\mbbm{i}_1}\}}$, 
an interpretation for ${\mc L}\cup \{\tilde{p}_\mbbm{i}\}\cup \{\tilde{p}_{\mbbm{i}_1}\}$.
Then ${\mf A}^\#\models \mi{ClPrefix}$, 
${\mf A}'|_{{\mc L}\cup \{\tilde{p}_{\mbbm{i}_1}\}\cup \{\tilde{p}_\mbbm{j} \,|\, \mbbm{j}\in J_1\}}\models S_1$,
by the induction hypothesis (b) for $\theta_1$,
${\mf A}'|_{{\mc L}\cup \{\tilde{p}_{\mbbm{i}_1}\}\cup \{\tilde{p}_\mbbm{j} \,|\, \mbbm{j}\in J_1\}}$, 
${\mf A}'|_{{\mc L}\cup \{\tilde{p}_{\mbbm{i}_1}\}}\models \gamma_1$,
${\mf A}^\#\models \gamma_1$,
for all $e\in {\mc S}_{{\mf A}^\#}$, 
{\footnotesize
\begin{alignat*}{1}
& \|\gamma_1\|_e^{{\mf A}^\#}=1, \\[1mm]
& (\|\tilde{p}_{\mbbm{i}_1}(\bar{x})\|_e^{{\mf A}^\#}\frightarrow \|\theta_1\|_e^{{\mf A}^\#})\fwedge
  (\|\theta_1\|_e^{{\mf A}^\#}\frightarrow \|\tilde{p}_{\mbbm{i}_1}(\bar{x})\|_e^{{\mf A}^\#})=1, \\[1mm]
& (\|\tilde{p}_{\mbbm{i}_1}(\bar{x})\|_e^{{\mf A}^\#}\frightarrow \|\theta_1\|_e^{{\mf A}^\#})=1,\
  (\|\theta_1\|_e^{{\mf A}^\#}\frightarrow \|\tilde{p}_{\mbbm{i}_1}(\bar{x})\|_e^{{\mf A}^\#})=1, \\[1mm]
& \|\tilde{p}_{\mbbm{i}_1}(\bar{x})\|_e^{{\mf A}^\#}\leq \|\theta_1\|_e^{{\mf A}^\#},\
  \|\theta_1\|_e^{{\mf A}^\#}\leq \|\tilde{p}_{\mbbm{i}_1}(\bar{x})\|_e^{{\mf A}^\#}, \\[1mm]
& \|\tilde{p}_{\mbbm{i}_1}(\bar{x})\|_e^{{\mf A}^\#}=\|\theta_1\|_e^{{\mf A}^\#};  
\end{alignat*}}%
for all $e\in {\mc S}_{{\mf A}^\#}$,
for all $C\in \mi{ClPrefix}$, ${\mf A}^\#\models_e C$,
concerning Table \ref{tab3}, for every form of $\theta$, 
{\footnotesize
\begin{alignat*}{1}
& \|\tilde{p}_\mbbm{i}(\bar{x})\|_e^{{\mf A}^\#}=(\|\tilde{p}_{\mbbm{i}_1}(\bar{x})\|_e^{{\mf A}^\#}\fdiamond 0),\
  \|\tilde{p}_\mbbm{i}(\bar{x})\|_e^{{\mf A}^\#}=(\|\tilde{p}_{\mbbm{i}_1}(\bar{x})\|_e^{{\mf A}^\#}\fdiamond 1), \\[1mm]
& \|\tilde{p}_\mbbm{i}(\bar{x})\|_e^{{\mf A}^\#}=(0\fle \|\tilde{p}_{\mbbm{i}_1}(\bar{x})\|_e^{{\mf A}^\#}),\
  \|\tilde{p}_\mbbm{i}(\bar{x})\|_e^{{\mf A}^\#}=\bigflimit_{u\in {\mc U}_{{\mf A}^\#}} \|\tilde{p}_{\mbbm{i}_1}(\bar{x})\|_{e[x/u]}^{{\mf A}^\#}, \\[1mm]
& \|\tilde{p}_\mbbm{i}(\bar{x})\|_e^{{\mf A}^\#}=(\|\theta_1\|_e^{{\mf A}^\#}\fdiamond 0),\
  \|\tilde{p}_\mbbm{i}(\bar{x})\|_e^{{\mf A}^\#}=(\|\theta_1\|_e^{{\mf A}^\#}\fdiamond 1), \\[1mm]
& \|\tilde{p}_\mbbm{i}(\bar{x})\|_e^{{\mf A}^\#}=(0\fle \|\theta_1\|_e^{{\mf A}^\#}),\
  \|\tilde{p}_\mbbm{i}(\bar{x})\|_e^{{\mf A}^\#}=\bigflimit_{u\in {\mc U}_{{\mf A}^\#}} \|\theta_1\|_{e[x/u]}^{{\mf A}^\#}, \\[1mm]
& \|\tilde{p}_\mbbm{i}(\bar{x})\leftrightarrow (\theta_1\diamond \gz)\|_e^{{\mf A}^\#}=1,\
  \|\tilde{p}_\mbbm{i}(\bar{x})\leftrightarrow (\theta_1\diamond \gu)\|_e^{{\mf A}^\#}=1, \\[1mm]
& \|\tilde{p}_\mbbm{i}(\bar{x})\leftrightarrow (\gz\gle \theta_1)\|_e^{{\mf A}^\#}=1,\
  \|\tilde{p}_\mbbm{i}(\bar{x})\leftrightarrow Q x\, \theta_1\|_e^{{\mf A}^\#}=1, \\[1mm]
& \|\tilde{p}_\mbbm{i}(\bar{x})\leftrightarrow \theta\|_e^{{\mf A}^\#}=1,
\quad \bigflimit\in \Big\{\bigfwedge,\bigfvee\Big\};
\end{alignat*}}%
${\mf A}^\#\models \tilde{p}_\mbbm{i}(\bar{x})\leftrightarrow \theta$,
${\mf A}'|_{{\mc L}\cup \{\tilde{p}_\mbbm{i}\}}={\mf A}^\#|_{{\mc L}\cup \{\tilde{p}_\mbbm{i}\}}\models
 \tilde{p}_\mbbm{i}(\bar{x})\leftrightarrow \theta$.
We put ${\mf A}={\mf A}'|_{{\mc L}\cup \{\tilde{p}_\mbbm{i}\}}$, an interpretation for ${\mc L}\cup \{\tilde{p}_\mbbm{i}\}$.
Then ${\mf A}\models \tilde{p}_\mbbm{i}(\bar{x})\leftrightarrow \theta$, 
${\mf A}={\mf A}'|_{{\mc L}\cup \{\tilde{p}_\mbbm{i}\}}$; (b) holds.

So, in all Cases 1, 2.1, 2.2, (a--e) hold.
The induction is completed.
\qed
\end{proof}

\begin{lemma}
\label{le1}
Let $n_\phi\in \mbb{N}$ and $\phi\in \mi{Form}_{\mc L}$.
There exist either $J_\phi=\emptyset$, or $n_{J_\phi}$, $J_\phi=\{(n_\phi,j) \,|\, j\leq n_{J_\phi}\}$, 
$J_\phi\subseteq \{(n_\phi,j) \,|\, j\in \mbb{N}\}\subseteq \mbb{I}$, and
$S_\phi\subseteq_{\mc F} \mi{OrdCl}_{{\mc L}\cup \{\tilde{p}_\mbbm{j} \,|\, \mbbm{j}\in J_\phi\}}$ such that 
\begin{enumerate}[\rm (a)]
\item
$\|J_\phi\|\leq 2\cdot |\phi|$; 
\item
either $J_\phi=\emptyset$, $S_\phi=\{\square\}$, or $J_\phi=S_\phi=\emptyset$, or $J_\phi\neq \emptyset$, $\square\not\in S_\phi\neq \emptyset$; 
\item
there exists an interpretation ${\mf A}$ for ${\mc L}$ and ${\mf A}\models \phi$                 
if and only if there exists an interpretation ${\mf A}'$ for ${\mc L}\cup \{\tilde{p}_\mbbm{j} \,|\, \mbbm{j}\in J_\phi\}$ and
${\mf A}'\models S_\phi$, satisfying ${\mf A}={\mf A}'|_{\mc L}$; 
\item
$|S_\phi|\in O(|\phi|^2)$; 
the number of all elementary operations of the translation of $\phi$ to $S_\phi$ is in $O(|\phi|^2)$;
the time complexity of the translation of $\phi$ to $S_\phi$ is in $O(|\phi|^2\cdot (\log (1+n_\phi)+\log |\phi|))$;
\item
if $S_\phi\neq \emptyset, \{\square\}$, then $J_\phi\neq \emptyset$,
for all $C\in S_\phi$, $\emptyset\neq \mi{preds}(C)\cap \tilde{\mbb{P}}\subseteq \{\tilde{p}_\mbbm{j} \,|\, \mbbm{j}\in J_\phi\}$; 
\item
$\mi{tcons}(S_\phi)-\{\gz,\gu\}\subseteq \mi{tcons}(\phi)-\{\gz,\gu\}$.
\end{enumerate}
\end{lemma}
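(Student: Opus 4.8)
The plan is to reduce Lemma \ref{le1} to the two preceding lemmata by a two-stage translation: first normalise $\phi$ via Lemma \ref{le111}, then encode the normalised formula via Lemma \ref{le11} under a \emph{positive start}. First I would apply Lemma \ref{le111} to $\theta=\phi$ with $n_\theta=n_\phi$, obtaining $\phi'\eqvl \phi$ with $|\phi'|\leq 2\cdot |\phi|$, containing neither $\neg$ nor $\del$, satisfying the structural conditions {\rm (c,d)} of that lemma, and with $\mi{tcons}(\phi')-\{\gz,\gu\}\subseteq \mi{tcons}(\phi)-\{\gz,\gu\}$. This supplies exactly the hypotheses needed to invoke Lemma \ref{le11} and already prepares condition (f).

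I would then split on the degenerate cases. If $\phi'=\gz$, then $\phi\eqvl \gz$ is unsatisfiable, so I set $J_\phi=\emptyset$ and $S_\phi=\{\square\}$; if $\phi'=\gu$, then $\phi\eqvl \gu$ is satisfiable by every interpretation, so I set $J_\phi=S_\phi=\emptyset$. In both cases (a), (b), (d), (f) are immediate and (e) is vacuous, while (c) follows since $\square$ is unsatisfiable and the empty theory is satisfiable. The substantive case is $\phi'\in \mi{Form}_{\mc L}-\{\gz,\gu\}$. Here I take $\bar{x}$ to enumerate the distinct variables of $\phi'$, so $\mi{vars}(\phi')\subseteq \mi{vars}(\bar{x})$ and $|\bar{x}|\in O(|\phi|)$, fix the starting predicate $\tilde{p}_\mbbm{i}$ with $\mbbm{i}=(n_\phi,0)$, $j_\mbbm{i}=0$, $\mi{ar}(\tilde{p}_\mbbm{i})=|\bar{x}|$, and apply Lemma \ref{le11} to obtain $J$ and $S$. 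I set $J_\phi=\{\mbbm{i}\}\cup J=\{(n_\phi,j) \,|\, j\leq n_J\}$ and $S_\phi=\{\tilde{p}_\mbbm{i}(\bar{x})\geql \gu\}\cup S$, the leading unit clause being the positive start that forces $\tilde{p}_\mbbm{i}(\bar{x})$ to the value $1$.

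The core of the argument is condition (c), which I would establish by relaying the biconditional of Lemma \ref{le11}(b) through the three language levels ${\mc L}\subseteq {\mc L}\cup \{\tilde{p}_\mbbm{i}\}\subseteq {\mc L}\cup \{\tilde{p}_\mbbm{i}\}\cup \{\tilde{p}_\mbbm{j} \,|\, \mbbm{j}\in J\}$. Given ${\mf A}\models \phi$, hence ${\mf A}\models \phi'$, I expand ${\mf A}$ to ${\mf B}$ on ${\mc L}\cup \{\tilde{p}_\mbbm{i}\}$ by interpreting $\tilde{p}_\mbbm{i}$ as the constant $1$; since $\|\phi'\|_e^{\mf B}=1$ for all $e$, both ${\mf B}\models \tilde{p}_\mbbm{i}(\bar{x})\geql \gu$ and ${\mf B}\models \tilde{p}_\mbbm{i}(\bar{x})\leftrightarrow \phi'$ hold, and Lemma \ref{le11}(b) yields ${\mf A}'\models S$ with ${\mf A}'|_{{\mc L}\cup \{\tilde{p}_\mbbm{i}\}}={\mf B}$, so ${\mf A}'\models S_\phi$ and ${\mf A}'|_{\mc L}={\mf A}$. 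Conversely, from ${\mf A}'\models S_\phi$ Lemma \ref{le11}(b) gives ${\mf A}'|_{{\mc L}\cup \{\tilde{p}_\mbbm{i}\}}\models \tilde{p}_\mbbm{i}(\bar{x})\leftrightarrow \phi'$, while the start clause forces $\|\tilde{p}_\mbbm{i}(\bar{x})\|_e=1$, whence $\|\phi'\|_e=1$ for all $e$; as $\phi'$ does not mention $\tilde{p}_\mbbm{i}$, the reduct ${\mf A}={\mf A}'|_{\mc L}$ satisfies $\phi'\eqvl \phi$. I expect this relay—tracking which symbols each interpretation interprets and ensuring the positive start survives each restriction—to be the main obstacle; the remaining conditions are bookkeeping: (a) from $\|J_\phi\|=1+\|J\|\leq |\phi'|\leq 2\cdot |\phi|$ via Lemma \ref{le11}(a); (b) and (e) from Lemma \ref{le11}(d), which forces every $C\in S$ to contain a fresh predicate, so $\square\not\in S_\phi\neq \emptyset$; (d) from $|S|\leq 27\cdot |\phi'|\cdot (1+|\bar{x}|)\in O(|\phi|^2)$ together with (\ref{eq00t}); and (f) by combining Lemma \ref{le11}(e) with Lemma \ref{le111}(e).
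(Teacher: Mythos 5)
Your proposal is correct and follows essentially the same route as the paper: normalise via Lemma \ref{le111}, split on degenerate truth-constant values of $\phi'$, and otherwise invoke Lemma \ref{le11} with the positive start clause $\tilde{p}_\mbbm{i}(\bar{x})\geql \gu$, relaying Lemma \ref{le11}(b) through the reducts exactly as the paper does. The only (harmless) deviation is that the paper's degenerate case is $\phi'\in \overline{C}_{\mc L}-\{\gu\}$ rather than $\phi'=\gz$, so an intermediate truth constant is sent to $\{\square\}$ directly, whereas under your split it passes through Lemma \ref{le11} and yields an unsatisfiable nonempty $S_\phi$ containing a fresh predicate --- which still satisfies conclusions (a)--(f).
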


\begin{proof}
By Lemma \ref{le111} for $n_\phi$, $\phi$, 
there exists $\phi'\in \mi{Form}_{\mc L}$ such that (a--e) of Lemma \ref{le111} hold for $n_\phi$, $\phi$, $\phi'$.
We distinguish three cases for $\phi'$.

Case 1:
$\phi'\in \overline{C}_{\mc L}-\{\gu\}$.
We put $J_\phi=\emptyset\subseteq \{(n_\phi,j) \,|\, j\in \mbb{N}\}\subseteq \mbb{I}$ and
$S_\phi=\{\square\}\subseteq_{\mc F} \mi{OrdCl}_{\mc L}$.

(a), (b), and (d--f) can be proved straightforwardly.

For every interpretation ${\mf A}$ for ${\mc L}$,
${\mf A}\not\models \phi'\overset{\text{Lemma \ref{le111}(a)}}{\eqvl\!\!\eqvl\!\!\eqvl\!\!\eqvl\!\!\eqvl\!\!\eqvl\!\!\eqvl\!\!\eqvl\!\!\eqvl} 
                    \phi$,
${\mf A}\not\models S_\phi$;
trivially, 
there exists an interpretation ${\mf A}$ for ${\mc L}$ and ${\mf A}\models \phi$ if and only if 
there exists an interpretation ${\mf A}'$ for ${\mc L}$ and ${\mf A}'\models S_\phi$, 
satisfying ${\mf A}={\mf A}'|_{\mc L}$; (c) holds.

Case 2:
$\phi'=\gu$.
We put $J_\phi=\emptyset\subseteq \{(n_\phi,j) \,|\, j\in \mbb{N}\}\subseteq \mbb{I}$ and
$S_\phi=\emptyset\subseteq_{\mc F} \mi{OrdCl}_{\mc L}$.

(a), (b), and (d--f) can be proved straightforwardly.

For every interpretation ${\mf A}$ for ${\mc L}$,
${\mf A}\models \phi'\overset{\text{Lemma \ref{le111}(a)}}{\eqvl\!\!\eqvl\!\!\eqvl\!\!\eqvl\!\!\eqvl\!\!\eqvl\!\!\eqvl\!\!\eqvl\!\!\eqvl} \phi$,
${\mf A}\models S_\phi$;
there exists an interpretation ${\mf A}$ for ${\mc L}$ and ${\mf A}\models \phi$ if and only if 
there exists an interpretation ${\mf A}'$ for ${\mc L}$ and ${\mf A}'\models S_\phi$, 
satisfying ${\mf A}={\mf A}'={\mf A}'|_{\mc L}$; (c) holds.

Case 3: 
$\phi'\not\in \overline{C}_{\mc L}$.  
We have that $\phi'\in \mi{Form}_{\mc L}$, (c,d) of Lemma \ref{le111} hold for $\phi'$.
We put $\bar{x}=\mi{varseq}(\phi')$.
Then $\phi'\in \mi{Form}_{\mc L}-\overline{C}_{\mc L}\subseteq \mi{Form}_{\mc L}-\{\gz,\gu\}$, 
$\mi{vars}(\phi')=\mi{vars}(\bar{x})$, $|\bar{x}|\leq |\phi'|$.
We put $j_\mbbm{i}=0$ and $\mbbm{i}=(n_\phi,j_\mbbm{i})\in \{(n_\phi,j) \,|\, j\in \mbb{N}\}\subseteq \mbb{I}$.
$\tilde{p}_\mbbm{i}\in \tilde{\mbb{P}}$.
We put $\mi{ar}(\tilde{p}_\mbbm{i})=|\bar{x}|$.
We get by Lemma~\ref{le11} for $n_\phi$, $\phi'$ that
there exist $n_J\geq j_\mbbm{i}$,
$J=\{(n_\phi,j) \,|\, 1\leq j\leq n_J\}\subseteq \{(n_\phi,j) \,|\, j\in \mbb{N}\}\subseteq \mbb{I}$, $\mbbm{i}\not\in J$,
$S\subseteq_{\mc F} \mi{OrdCl}_{{\mc L}\cup \{\tilde{p}_\mbbm{i}\}\cup \{\tilde{p}_\mbbm{j} \,|\, \mbbm{j}\in J\}}$, and
(a--e) of Lemma \ref{le11} hold for $\phi'$.
We put $n_{J_\phi}=n_J$ and $J_\phi=\{(n_\phi,j) \,|\, j\leq n_{J_\phi}\}\subseteq \{(n_\phi,j) \,|\, j\in \mbb{N}\}\subseteq \mbb{I}$.
Then 
\begin{alignat}{1}
\label{eq1k}
& J_\phi=\{(n_\phi,j_\mbbm{i})\}\cup \{(n_\phi,j) \,|\, 1\leq j\leq n_J\}=\{\mbbm{i}\}\cup J, \\[1mm]
\label{eq1l}
& \{\mbbm{i}\}\cap J=\emptyset.
\end{alignat}
We put $S_\phi=\{\tilde{p}_\mbbm{i}(\bar{x})\geql \gu\}\cup S\subseteq_{\mc F} 
               \mi{OrdCl}_{{\mc L}\cup \{\tilde{p}_\mbbm{j} \,|\, \mbbm{j}\in \{\mbbm{i}\}\cup J\overset{\text{(\ref{eq1k})}}{=\!\!=} J_\phi\}}$.
\begin{equation}
\label{eq1m}
\{\tilde{p}_\mbbm{i}(\bar{x})\geql \gu\}\cap S\overset{\text{Lemma \ref{le11}(d)}}{=\!\!=\!\!=\!\!=\!\!=\!\!=\!\!=\!\!=\!\!=} \emptyset.
\end{equation}

(a), (b), (e), and (f) can be proved straightforwardly.

Let ${\mf A}$ be an interpretation for ${\mc L}$ such that ${\mf A}\models \phi$.
Then 
${\mf A}\models \phi\overset{\text{Lemma \ref{le111}(a)}}{\eqvl\!\!\eqvl\!\!\eqvl\!\!\eqvl\!\!\eqvl\!\!\eqvl\!\!\eqvl\!\!\eqvl\!\!\eqvl} \phi'$.
We define an expansion ${\mf A}^\#$ of ${\mf A}$ to ${\mc L}\cup \{\tilde{p}_\mbbm{i}\}$ as follows:
\begin{equation*}
\tilde{p}_\mbbm{i}^{{\mf A}^\#}(u_1,\dots,u_{|\bar{x}|})=
\left\{\begin{array}{ll}
       \|\phi'\|_e^{\mf A} &\ \text{\it if there exists}\ e\in {\mc S}_{\mf A}\ \text{\it such that} \\  
                           &\ \quad \|\bar{x}\|_e^{\mf A}=u_1,\dots,u_{|\bar{x}|}, \\[1mm]
       0                   &\ \text{\it else}.
       \end{array}
\right.
\end{equation*}
Then, for all $e\in {\mc S}_{{\mf A}^\#}$, 
$\|\tilde{p}_\mbbm{i}(\bar{x})\|_e^{{\mf A}^\#}=\tilde{p}_\mbbm{i}^{{\mf A}^\#}(\|\bar{x}\|_e^{{\mf A}^\#})=
 \tilde{p}_\mbbm{i}^{{\mf A}^\#}(\|\bar{x}\|_e^{\mf A})=\|\phi'\|_e^{\mf A}=\|\phi'\|_e^{{\mf A}^\#}=1$,
$\|\tilde{p}_\mbbm{i}(\bar{x})\geql \gu\|_e^{{\mf A}^\#}=\|\tilde{p}_\mbbm{i}(\bar{x})\|_e^{{\mf A}^\#}\feql 1=1\feql 1=1$,
\begin{alignat*}{1}
& \|\tilde{p}_\mbbm{i}(\bar{x})\leftrightarrow \phi'\|_e^{{\mf A}^\#}= \\
& (\|\tilde{p}_\mbbm{i}(\bar{x})\|_e^{{\mf A}^\#}\frightarrow \|\phi'\|_e^{{\mf A}^\#})\fwedge
  (\|\phi'\|_e^{{\mf A}^\#}\frightarrow \|\tilde{p}_\mbbm{i}(\bar{x})\|_e^{{\mf A}^\#})= \\
& 1\frightarrow 1=1;
\end{alignat*}
${\mf A}^\#\models \tilde{p}_\mbbm{i}(\bar{x})\geql \gu$,
${\mf A}^\#\models \tilde{p}_\mbbm{i}(\bar{x})\leftrightarrow \phi'$,
by Lemma \ref{le11}(b) for ${\mf A}^\#$, 
there exists an interpretation ${\mf A}'$ 
for ${\mc L}\cup \{\tilde{p}_\mbbm{j} \,|\, \mbbm{j}\in \{\mbbm{i}\}\cup J\overset{\text{(\ref{eq1k})}}{=\!\!=} J_\phi\}$, and
${\mf A}'\models S$, ${\mf A}'|_{{\mc L}\cup \{\tilde{p}_\mbbm{i}\}}={\mf A}^\#$;
${\mf A}'|_{{\mc L}\cup \{\tilde{p}_\mbbm{i}\}}={\mf A}^\#\models \tilde{p}_\mbbm{i}(\bar{x})\geql \gu$;
${\mf A}'\models S_\phi$, ${\mf A}'|_{\mc L}={\mf A}$.

Let ${\mf A}'$ be an interpretation 
for ${\mc L}\cup \{\tilde{p}_\mbbm{j} \,|\, \mbbm{j}\in J_\phi\overset{\text{(\ref{eq1k})}}{=\!\!=} \{\mbbm{i}\}\cup J\}$ such that
${\mf A}'\models S_\phi$.
Then ${\mf A}'\models \tilde{p}_\mbbm{i}(\bar{x})\geql \gu, S$,
by Lemma \ref{le11}(b),
${\mf A}'|_{{\mc L}\cup \{\tilde{p}_\mbbm{i}\}}\models \tilde{p}_\mbbm{i}(\bar{x})\leftrightarrow \phi'$,
for all $e\in {\mc S}_{{\mf A}'}$,
$1=\|\tilde{p}_\mbbm{i}(\bar{x})\geql \gu\|_e^{{\mf A}'}=\|\tilde{p}_\mbbm{i}(\bar{x})\|_e^{{\mf A}'}\feql 1$,
$\|\tilde{p}_\mbbm{i}(\bar{x})\|_e^{{\mf A}'}=1$,
$1=\|\tilde{p}_\mbbm{i}(\bar{x})\leftrightarrow \phi'\|_e^{{\mf A}'}=
   (\|\tilde{p}_\mbbm{i}(\bar{x})\|_e^{{\mf A}'}\frightarrow \|\phi'\|_e^{{\mf A}'})\fwedge                                        
   (\|\phi'\|_e^{{\mf A}'}\frightarrow \|\tilde{p}_\mbbm{i}(\bar{x})\|_e^{{\mf A}'})=
   (1\frightarrow \|\phi'\|_e^{{\mf A}'})\fwedge (\|\phi'\|_e^{{\mf A}'}\frightarrow 1)$,
$1\frightarrow \|\phi'\|_e^{{\mf A}'}=1$, $\|\phi'\|_e^{{\mf A}'}=1$;                                                             
${\mf A}'\models \phi'\overset{\text{Lemma \ref{le111}(a)}}{\eqvl\!\!\eqvl\!\!\eqvl\!\!\eqvl\!\!\eqvl\!\!\eqvl\!\!\eqvl\!\!\eqvl\!\!\eqvl} \phi$.
We put ${\mf A}={\mf A}'|_{\mc L}$, an interpretation for ${\mc L}$.
Then ${\mf A}\models \phi$, ${\mf A}={\mf A}'|_{\mc L}$; (c) holds.

We have $|\bar{x}|\leq |\phi'|$.
Then
$|S_\phi|\overset{\text{(\ref{eq1m})}}{=\!\!=} |\{\tilde{p}_\mbbm{i}(\bar{x})\geql \gu\}|+|S|=
 3+|\bar{x}|+|S|\underset{\text{Lemma \ref{le11}(c)}}{\leq} 
 3+|\bar{x}|+27\cdot |\phi'|\cdot (1+|\bar{x}|)\leq 
 3+|\phi'|+27\cdot |\phi'|\cdot (1+|\phi'|)\leq 58\cdot |\phi'|^2\underset{\text{Lemma \ref{le111}(b)}}{\leq} 232\cdot |\phi|^2\in O(|\phi|^2)$;
the translation of $\phi$ to $S_\phi$ uses the input $\phi$, the output $S_\phi$,
auxiliary $\phi'$, $\tilde{f}_0(\bar{x})$, $\{\tilde{p}_\mbbm{i}(\bar{x})\geql \gu\}$, $S$;
we have that $\phi'$ can be built up from $\phi$ via a postorder traversal of $\phi$ with $\#{\mc O}(\phi)\in O(|\phi|)$;
the test $\phi'\not\in \overline{C}_{\mc L}$ is with $\#{\mc O}(\phi')\in O(1)$;
$\tilde{f}_0(\bar{x})$ can be built up from $\phi'$ via the left-right preorder traversal of $\phi'$
with $\#{\mc O}(\phi')\in O(|\phi'|)\underset{\text{Lemma \ref{le111}(b)}}{\subseteq} O(|\phi|)$;
$\{\tilde{p}_\mbbm{i}(\bar{x})\geql \gu\}$ can be built up from $\tilde{f}_0(\bar{x})$
with $\#{\mc O}(\tilde{f}_0(\bar{x}))\in O(|\{\tilde{p}_\mbbm{i}(\bar{x})\geql \gu\}|)=O(1+|\bar{x}|)\subseteq 
      O(|\phi'|)\underset{\text{Lemma \ref{le111}(b)}}{\subseteq} O(|\phi|)$;
by Lemma \ref{le11}(c), $S$ can be built up from $\phi'$ and $\tilde{f}_0(\bar{x})$ via a preorder traversal of $\phi'$
with $\#{\mc O}(\phi',\tilde{f}_0(\bar{x}))\in O(|\phi'|\cdot (1+|\bar{x}|))\subseteq O(|\phi'|\cdot (1+|\phi'|))=
      O(|\phi'|^2)\underset{\text{Lemma \ref{le111}(b)}}{\subseteq} O(|\phi|^2)$;
$S_\phi$ can be built up from $\{\tilde{p}_\mbbm{i}(\bar{x})\geql \gu\}$ and $S$ by copying and concatenating
with $\#{\mc O}(\{\tilde{p}_\mbbm{i}(\bar{x})\geql \gu\},S)\in O(|S_\phi|)\subseteq O(|\phi|^2)$;
the number of all elementary operations of the translation of $\phi$ to $S_\phi$ $\#{\mc O}(\phi)\in O(|\phi|^2)$;
by (\ref{eq00t}) for $n_\phi$, $\phi$, $\emptyset$,
$\phi'$, $\tilde{f}_0(\bar{x})$, $\{\tilde{p}_\mbbm{i}(\bar{x})\geql \gu\}$, $S$, $S_\phi$, $q=6$, $r=2$,
the time complexity of the translation of $\phi$ to $S_\phi$ is
in $O(\#{\mc O}(\phi)\cdot (\log (1+n_\phi)+\log (\#{\mc O}(\phi)+|\phi|)))\subseteq O(|\phi|^2\cdot (\log (1+n_\phi)+\log |\phi|))$;
by (\ref{eq00s}) for $n_\phi$, $\phi$, $\emptyset$, 
$\phi'$, $\tilde{f}_0(\bar{x})$, $\{\tilde{p}_\mbbm{i}(\bar{x})\geql \gu\}$, $S$, $S_\phi$, $q=6$, $r=2$,
the space complexity of the translation of $\phi$ to $S_\phi$ is
in $O((\#{\mc O}(\phi)+|\phi|^2)\cdot (\log (1+n_\phi)+\log |\phi|))\subseteq O(|\phi|^2\cdot (\log (1+n_\phi)+\log |\phi|))$; (d) holds.

So, in all Cases 1--3, (a--f) hold.
\qed
\end{proof}

\begin{corollary}
\label{cor12}
Let $n_0\in \mbb{N}$ and $T\subseteq \mi{Form}_{\mc L}$.
There exist $J_T\subseteq \{(i,j) \,|\, i\geq n_0\}\subseteq \mbb{I}$ and 
$S_T\subseteq \mi{OrdCl}_{{\mc L}\cup \{\tilde{p}_\mbbm{j} \,|\, \mbbm{j}\in J_T\}}$ such that 
\begin{enumerate}[\rm (a)]
\item
either $J_T=\emptyset$, $S_T=\{\square\}$, or $J_T=S_T=\emptyset$, or $J_T\neq \emptyset$, $\square\not\in S_T\neq \emptyset$; 
\item
there exists an interpretation ${\mf A}$ for ${\mc L}$ and ${\mf A}\models T$ if and only if 
there exists an interpretation ${\mf A}'$ for ${\mc L}\cup \{\tilde{p}_\mbbm{j} \,|\, \mbbm{j}\in J_T\}$ and
${\mf A}'\models S_T$, satisfying ${\mf A}={\mf A}'|_{\mc L}$; 
\item
if $T\subseteq_{\mc F} \mi{Form}_{\mc L}$, then $J_T\subseteq_{\mc F} \{(i,j) \,|\, i\geq n_0\}\subseteq \mbb{I}$, $\|J_T\|\leq 2\cdot |T|$, 
$S_T\subseteq_{\mc F} \mi{OrdCl}_{{\mc L}\cup \{\tilde{p}_\mbbm{j} \,|\, \mbbm{j}\in J_T\}}$, $|S_T|\in O(|T|^2)$; 
the number of all elementary operations of the translation of $T$ to $S_T$ is in $O(|T|^2)$;
the time complexity of the translation of $T$ to $S_T$ is in $O(|T|^2\cdot \log (1+n_0+|T|))$; 
\item
if $S_T\neq \emptyset, \{\square\}$, then $J_T\neq \emptyset$,
for all $C\in S_T$, $\emptyset\neq \mi{preds}(C)\cap \tilde{\mbb{P}}\subseteq \{\tilde{p}_\mbbm{j} \,|\, \mbbm{j}\in J_T\}$;
\item
$\mi{tcons}(S_T)-\{\gz,\gu\}\subseteq \mi{tcons}(T)-\{\gz,\gu\}$.
\end{enumerate}
\end{corollary}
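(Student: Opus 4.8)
The plan is to reduce the theory case to the single-formula Lemma~\ref{le1} by translating each member of $T$ separately, using a distinct ``row'' of the index set $\mbb{I}=\mbb{N}\times \mbb{N}$ for the fresh predicate symbols of each formula so that no two translations interfere. Since ${\mc L}$ is countable, $\mi{Form}_{\mc L}$ is countable, hence so is $T$; let $\delta : \gamma\longrightarrow T$, $\gamma\leq \omega$, be a sequence of $T$, and write $\phi_k=\delta(k)$ for $k<\gamma$. For each $k$ I would apply Lemma~\ref{le1} to $\phi_k$ with the offset $n_{\phi_k}=n_0+k$, obtaining $J_{\phi_k}\subseteq \{(n_0+k,j) \,|\, j\in \mbb{N}\}$ and $S_{\phi_k}\subseteq_{\mc F} \mi{OrdCl}_{{\mc L}\cup \{\tilde{p}_\mbbm{j} \,|\, \mbbm{j}\in J_{\phi_k}\}}$ satisfying Lemma~\ref{le1}(a--f). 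The crucial bookkeeping observation is that, since $n_0+k\neq n_0+k'$ for $k\neq k'$, the sets $J_{\phi_k}$, $k<\gamma$, live in pairwise different rows and are therefore pairwise disjoint; consequently the fresh predicate symbols $\tilde{p}_\mbbm{j}$ introduced for distinct formulae are pairwise distinct.

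Next I would split into two cases governing condition~(a). If $S_{\phi_{k_0}}=\{\square\}$ for some $k_0$, then by Lemma~\ref{le1}(c) the formula $\phi_{k_0}$ has no model, hence neither does $T$; I put $J_T=\emptyset$ and $S_T=\{\square\}$, so that both sides of the equivalence in~(b) are false and (a),(c--e) hold trivially. Otherwise each $S_{\phi_k}$ is either $\emptyset$ or a nonempty set not containing $\square$, and I set $J_T=\bigcup_{k<\gamma} J_{\phi_k}$ and $S_T=\bigcup_{k<\gamma} S_{\phi_k}$. If all $S_{\phi_k}=\emptyset$ (in particular if $T=\emptyset$) this gives $J_T=S_T=\emptyset$; otherwise some $S_{\phi_k}\neq \emptyset,\{\square\}$ forces $J_{\phi_k}\neq \emptyset$ by Lemma~\ref{le1}(e), whence $J_T\neq \emptyset$, while $\square\not\in S_T\neq \emptyset$; this establishes~(a).

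The heart of the argument is the model-theoretic equivalence~(b) in the second case, which I would prove by gluing. For the forward direction, given ${\mf A}\models T$ I have ${\mf A}\models \phi_k$ for every $k$, so Lemma~\ref{le1}(c) yields an expansion ${\mf A}_k'$ of ${\mf A}$ to ${\mc L}\cup \{\tilde{p}_\mbbm{j} \,|\, \mbbm{j}\in J_{\phi_k}\}$ with ${\mf A}_k'\models S_{\phi_k}$ and ${\mf A}_k'|_{\mc L}={\mf A}$; since all the ${\mf A}_k'$ share the universe ${\mc U}_{\mf A}$ and the $J_{\phi_k}$ are pairwise disjoint, I define a single expansion ${\mf A}'$ of ${\mf A}$ to ${\mc L}\cup \{\tilde{p}_\mbbm{j} \,|\, \mbbm{j}\in J_T\}$ by setting $\tilde{p}_\mbbm{j}^{{\mf A}'}=\tilde{p}_\mbbm{j}^{{\mf A}_k'}$ for $\mbbm{j}\in J_{\phi_k}$; then ${\mf A}'|_{{\mc L}\cup \{\tilde{p}_\mbbm{j} \,|\, \mbbm{j}\in J_{\phi_k}\}}={\mf A}_k'\models S_{\phi_k}$ for each $k$, so ${\mf A}'\models S_T$ with ${\mf A}'|_{\mc L}={\mf A}$. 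For the converse, given ${\mf A}'\models S_T$ I put ${\mf A}={\mf A}'|_{\mc L}$; for each $k$, Lemma~\ref{le1}(e) guarantees that the clauses of $S_{\phi_k}$ mention no fresh predicate outside $\{\tilde{p}_\mbbm{j} \,|\, \mbbm{j}\in J_{\phi_k}\}$, so the reduct ${\mf A}'|_{{\mc L}\cup \{\tilde{p}_\mbbm{j} \,|\, \mbbm{j}\in J_{\phi_k}\}}$ still models $S_{\phi_k}$, and Lemma~\ref{le1}(c) gives $\big({\mf A}'|_{{\mc L}\cup \{\tilde{p}_\mbbm{j} \,|\, \mbbm{j}\in J_{\phi_k}\}}\big)|_{\mc L}={\mf A}\models \phi_k$; hence ${\mf A}\models T$.

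Finally I would read off the remaining clauses. Property~(e) follows from $\mi{tcons}(S_T)=\bigcup_k \mi{tcons}(S_{\phi_k})$ together with Lemma~\ref{le1}(f) and $\mi{tcons}(\phi_k)\subseteq \mi{tcons}(T)$; property~(d) follows clausewise from Lemma~\ref{le1}(e). For~(c), when $T\subseteq_{\mc F}\mi{Form}_{\mc L}$ there are at most $|T|$ formulae, each assigned a row $n_0+k\leq n_0+|T|$, so $\|J_T\|=\sum_k \|J_{\phi_k}\|\leq \sum_k 2\cdot |\phi_k|=2\cdot |T|$ and, using the superadditivity estimate $\sum_k |\phi_k|^2\leq \big(\sum_k |\phi_k|\big)^2=|T|^2$, both $|S_T|$ and the operation count lie in $O(|T|^2)$; summing the per-formula time bounds of Lemma~\ref{le1}(d) and bounding $\log (1+n_0+k)+\log |\phi_k|$ by $O(\log (1+n_0+|T|))$ yields the total time $O(|T|^2\cdot \log (1+n_0+|T|))$. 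The main obstacle is not any single step but the need to keep the many fresh vocabularies strictly separated: once the row-disjointness of the $J_{\phi_k}$ is secured, the gluing of the expansions in~(b) and the restriction argument for its converse go through uniformly, and everything else is routine bookkeeping.
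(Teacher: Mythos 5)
Your proposal is correct and is exactly the argument the paper intends: the paper's proof of Corollary~\ref{cor12} is just the remark that it is a straightforward consequence of Lemma~\ref{le1}, and your enumeration of $T$, the use of distinct index rows $n_0+k$ to keep the fresh vocabularies disjoint, the gluing/reduct argument for (b), and the summation bounds $\sum_k\|J_{\phi_k}\|\leq 2\cdot |T|$ and $\sum_k |\phi_k|^2\leq |T|^2$ for (c) are precisely the routine details being elided. No gaps.
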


\begin{proof}
A straightforward consequence of Lemma \ref{le1}.
\qed
\end{proof}

The deduction problem of a formula $\phi\in \mi{Form}_{\mc L}$ from a theory $T\subseteq \mi{Form}_{\mc L}$ can be reduced 
to the unsatisfiability of a certain order clausal theory $S_T^\phi$ so that
$T\models \phi$ if and only if $S_T^\phi$ is unsatisfiable.
The theory $S_T^\phi$ can be obtained by the proposed translation.
The main idea is that all the formulae from $T$ are translated positively -- by Lemma~\ref{le1}, Corollary~\ref{cor12}.
We always introduce fresh predicates for auxiliary atoms corresponding to subformulae and 
start translation of every formula $\psi\in T$ with the first clause in the initial theory of the form $\tilde{p}_\mi{int}(\bar{x})\geql \gu$ 
where $\tilde{p}_\mi{int}(\bar{x})$ is the auxiliary atom corresponding to the entire formula $\psi$, 
$\mi{vars}(\psi)=\mi{vars}(\bar{x})$.
Hence, the resulting order clausal theory for every $\psi\in T$ will be equisatisfiable to $\psi$.
However, in contrast, the translation of $\phi$ has to start negatively.
Let $\bar{x}$ be a sequence of all the variables occurring in $\phi$. 
We shall translate its universal closure $\forall \bar{x}\, \phi$, 
$\mi{vars}(\forall \bar{x}\, \phi)=\mi{vars}(\phi)=\mi{vars}(\bar{x})$, so that
the first clause in the initial theory will be of the form $\tilde{p}_\mi{int}(\bar{x})\gle \gu$;
the auxiliary atom $\tilde{p}_\mi{int}(\bar{x})$ corresponds to $\forall \bar{x}\, \phi$.
Then the resulting order clausal theory for $\phi$ will be equisatisfiable to $\forall \bar{x}\, \phi\gle \gu$.
Notice that for the closed formula $\phi$ from the informal example at the beginning of this section,
we just replace the first clause $\tilde{p}_0(x,y,z)\geql \gu$ with $\tilde{p}_0(x,y,z)\gle \gu$ in the resulting theory, 
which then becomes equisatisfiable to $\phi\gle \gu$.
Finally, the resulting order clausal theory $S_T^\phi$ is obtained as the union of all the partial theories for every $\psi\in T$ and $\phi$.

\begin{theorem}
\label{T1}
Let $n_0\in \mbb{N}$, $\phi\in \mi{Form}_{\mc L}$, $T\subseteq \mi{Form}_{\mc L}$. 
There exist $J_T^\phi\subseteq \{(i,j) \,|\, i\geq n_0\}\subseteq \mbb{I}$ and
$S_T^\phi\subseteq \mi{OrdCl}_{{\mc L}\cup \{\tilde{p}_\mbbm{j} \,|\, \mbbm{j}\in J_T^\phi\}}$ such that 
\begin{enumerate}[\rm (i)]
\item
there exists an interpretation ${\mf A}$ for ${\mc L}$, and ${\mf A}\models T$, ${\mf A}\not\models \phi$, if and only if 
there exists an interpretation ${\mf A}'$ for ${\mc L}\cup \{\tilde{p}_\mbbm{j} \,|\, \mbbm{j}\in J_T^\phi\}$ and ${\mf A}'\models S_T^\phi$, 
satisfying ${\mf A}={\mf A}'|_{\mc L}$; 
\item
$T\models \phi$ if and only if $S_T^\phi$ is unsatisfiable;
\item
if $T\subseteq_{\mc F} \mi{Form}_{\mc L}$, then $J_T^\phi\subseteq_{\mc F} \{(i,j) \,|\, i\geq n_0\}\subseteq \mbb{I}$, 
$\|J_T^\phi\|\in O(|T|+|\phi|)$, 
$S_T^\phi\subseteq_{\mc F} \mi{OrdCl}_{{\mc L}\cup \{\tilde{p}_\mbbm{j} \,|\, \mbbm{j}\in J_T^\phi\}}$, $|S_T^\phi|\in O(|T|^2+|\phi|^2)$; 
the number of all elementary operations of the translation of $T$ and $\phi$ to $S_T^\phi$ is in $O(|T|^2+|\phi|^2)$;
the time complexity of the translation of $T$ and $\phi$ to $S_T^\phi$ is 
in $O(|T|^2\cdot \log (1+n_0+|T|)+|\phi|^2\cdot (\log (1+n_0)+\log |\phi|))$;
\item
$\mi{tcons}(S_T^\phi)-\{\gz,\gu\}\subseteq (\mi{tcons}(\phi)\cup \mi{tcons}(T))-\{\gz,\gu\}$.
\end{enumerate}
\end{theorem}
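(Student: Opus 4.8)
The plan is to build $S_T^\phi$ as an index-disjoint union of a \emph{positively} translated clausal theory for $T$ and a \emph{negatively} translated one for $\phi$, and then to read item (ii) off item (i) by contraposition. First I would apply Corollary~\ref{cor12} to $T$ with the offset $n_0+1$, obtaining $J_T\subseteq \{(i,j)\,|\,i\geq n_0+1\}$ and $S_T$ for which Corollary~\ref{cor12}(b) gives the correspondence: ${\mf A}\models T$ iff some expansion ${\mf A}'$ over ${\mc L}\cup \{\tilde p_\mbbm{j}\,|\,\mbbm{j}\in J_T\}$ with ${\mf A}={\mf A}'|_{\mc L}$ satisfies $S_T$. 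The size, complexity, and truth-constant data for this part are exactly Corollary~\ref{cor12}(c,e).

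For $\phi$ I would first reduce its failure to a single unit order literal on a closed formula. Let $\bar x=\mi{varseq}(\phi)$, so that $\mi{vars}(\phi)=\mi{vars}(\bar x)$ and $\forall \bar x\,\phi$ is closed. Because two assignments agreeing on $\mi{freevars}(\phi)$ give $\phi$ the same value and $\mi{freevars}(\phi)\subseteq \mi{vars}(\bar x)$, we have $\|\forall \bar x\,\phi\|^{\mf A}=\bigfwedge_{e\in {\mc S}_{\mf A}}\|\phi\|_e^{\mf A}$, which is strictly below $1$ exactly when $\|\phi\|_e^{\mf A}<1$ for some $e$, i.e.\ exactly when ${\mf A}\not\models \phi$; equivalently ${\mf A}\not\models \phi$ iff ${\mf A}\models \forall \bar x\,\phi\gle \gu$. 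I would then run the construction of Lemma~\ref{le1} on the closed formula $\forall \bar x\,\phi$ with offset $n_0$, but force the negative reading: in its principal Case~3 the single initial clause $\tilde p_\mbbm{i}(\bar y)\geql \gu$ is replaced by $\tilde p_\mbbm{i}(\bar y)\gle \gu$, and in the two degenerate cases the roles of $\emptyset$ and $\{\square\}$ are exchanged (reflecting that $\forall \bar x\,\phi$ equivalent to $\gu$ never fails, while $\forall \bar x\,\phi$ equivalent to a constant below $\gu$ always fails). Call the outcome $S_\phi^-$, with fresh indices $J_\phi^-\subseteq \{(n_0,j)\,|\,j\in \mbb{N}\}$. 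Since $\forall \bar x\,\phi$ is closed, Lemma~\ref{le11}(b) makes the top auxiliary atom constant and equal to $\|\forall \bar x\,\phi\|^{\mf A}$ on tuples coming from assignments, so $S_\phi^-$ admits an expansion of ${\mf A}$ exactly when $\|\forall \bar x\,\phi\|^{\mf A}<1$, i.e.\ when ${\mf A}\not\models \phi$. Re-verifying this Case~3 equivalence with $\geql$ replaced by $\gle$ is the single nonroutine point and the main obstacle.

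Finally I would set $S_T^\phi=S_T\cup S_\phi^-$ and $J_T^\phi=J_T\cup J_\phi^-\subseteq \{(i,j)\,|\,i\geq n_0\}$. The offsets $n_0+1$ and $n_0$ keep $J_T$ and $J_\phi^-$ disjoint, so the two families of fresh predicates do not clash: a model of $S_T$ and a model of $S_\phi^-$ that agree on ${\mc L}$ glue into one model of $S_T^\phi$, and conversely every model of $S_T^\phi$ restricts to a model of each part. Chaining the two correspondences then yields (i). Item (ii) is immediate from (i) by contraposition, since $T\models \phi$ means precisely that no model of $T$ fails $\phi$. For (iii), in the finite case I would add the bounds of Corollary~\ref{cor12}(c) to those of Lemma~\ref{le1}(a,d) applied to $\forall \bar x\,\phi$, using $|\bar x|\leq |\phi|$ and $|\forall \bar x\,\phi|=2\cdot |\bar x|+|\phi|\leq 3\cdot |\phi|$, so that the $\phi$-part contributes $O(|\phi|)$ fresh indices, clause size $O(|\phi|^2)$, and time $O(|\phi|^2\cdot (\log (1+n_0)+\log |\phi|))$, matching the stated totals. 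Item (iv) follows from Corollary~\ref{cor12}(e) and Lemma~\ref{le1}(f) together with the remark that neither the universal prefix $\forall \bar x$ nor the replacement of $\geql$ by $\gle$ introduces intermediate truth constants, whence $\mi{tcons}(S_\phi^-)-\{\gz,\gu\}\subseteq \mi{tcons}(\phi)-\{\gz,\gu\}$.
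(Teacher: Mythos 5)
Your proposal is correct and follows essentially the same route as the paper: translate $T$ positively via Corollary~\ref{cor12} with offset $n_0+1$, translate the universal closure of $\phi$ with the initial unit clause flipped to $\tilde{p}_\mbbm{i}(\bar{x})\gle \gu$ (using the Lemma~\ref{le11} machinery, with the degenerate truth-constant cases swapping $\emptyset$ and $\{\square\}$), glue the index-disjoint parts for (i), and obtain (ii) by contraposition. The only cosmetic difference is that the paper normalises $\phi$ to $\phi'$ via Lemma~\ref{le111} before forming the universal closure and invokes Lemma~\ref{le11} directly rather than a modified Lemma~\ref{le1}, which changes nothing of substance.
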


\begin{proof}
We get by Corollary \ref{cor12} for $n_0+1$ that 
there exist $J_T\subseteq \{(i,j) \,|\, i\geq n_0+1\}\subseteq \mbb{I}$,
$S_T\subseteq \mi{OrdCl}_{{\mc L}\cup \{\tilde{p}_\mbbm{j} \,|\, \mbbm{j}\in J_T\}}$, and
(a--e) of Corollary \ref{cor12} hold for $n_0+1$.
By Lemma~\ref{le111} for $n_0$, $\phi$,
there exists $\phi'\in \mi{Form}_{\mc L}$ such that (a--e) of Lemma \ref{le111} hold for $n_0$, $\phi$, $\phi'$.
We distinguish three cases for $\phi'$.

Case 1: 
$\phi'\in \overline{C}_{\mc L}-\{\gu\}$.
We put $J_T^\phi=J_T\subseteq \{(i,j) \,|\, i\geq n_0+1\}\subseteq \{(i,j) \,|\, i\geq n_0\}\subseteq \mbb{I}$ and
$S_T^\phi=S_T\subseteq \mi{OrdCl}_{{\mc L}\cup \{\tilde{p}_\mbbm{j} \,|\, \mbbm{j}\in J_T^\phi\}}$.

For every interpretation ${\mf A}$ for ${\mc L}$, 
${\mf A}\not\models \phi'\overset{\text{Lemma \ref{le111}(a)}}{\eqvl\!\!\eqvl\!\!\eqvl\!\!\eqvl\!\!\eqvl\!\!\eqvl\!\!\eqvl\!\!\eqvl\!\!\eqvl}
                    \phi$;
by Corollary \ref{cor12}(b),
there exists an interpretation ${\mf A}$ for ${\mc L}$, and ${\mf A}\models T$, ${\mf A}\not\models \phi$, if and only if 
there exists an interpretation ${\mf A}'$ for ${\mc L}\cup \{\tilde{p}_\mbbm{j} \,|\, \mbbm{j}\in J_T^\phi\}$ and ${\mf A}'\models S_T^\phi$, 
satisfying ${\mf A}={\mf A}'|_{\mc L}$; (i) holds.

(iii) and (iv) can be proved straightforwardly.

Case 2:
$\phi'=\gu$.
We put $J_T^\phi=\emptyset\subseteq \{(i,j) \,|\, i\geq n_0\}\subseteq \mbb{I}$ and $S_T^\phi=\{\square\}\subseteq \mi{OrdCl}_{\mc L}$.

For every interpretation ${\mf A}$ for ${\mc L}$,
${\mf A}\models \phi'\overset{\text{Lemma \ref{le111}(a)}}{\eqvl\!\!\eqvl\!\!\eqvl\!\!\eqvl\!\!\eqvl\!\!\eqvl\!\!\eqvl\!\!\eqvl\!\!\eqvl} \phi$,
${\mf A}\not\models S_T^\phi$;
trivially, 
there exists an interpretation ${\mf A}$ for ${\mc L}$, and ${\mf A}\models T$, ${\mf A}\not\models \phi$, if and only if 
there exists an interpretation ${\mf A}'$ for ${\mc L}$ and ${\mf A}'\models S_T^\phi$, 
satisfying ${\mf A}={\mf A}'={\mf A}'|_{\mc L}$; (i) holds.

(iii) and (iv) can be proved straightforwardly.

Case 3:
$\phi'\not\in \overline{C}_{\mc L}$.
We have that $\phi'\in \mi{Form}_{\mc L}$, (c,d) of Lemma \ref{le111} hold for $\phi'$.
We put $\bar{x}=\mi{varseq}(\phi')$.
Then $\phi'\in \mi{Form}_{\mc L}-\overline{C}_{\mc L}\subseteq \mi{Form}_{\mc L}-\{\gz,\gu\}$,
$\mi{vars}(\phi')=\mi{vars}(\bar{x})$, 
$\forall \bar{x}\, \phi'\in \mi{Form}_{\mc L}-\{\gz,\gu\}$, (c,d) of Lemma \ref{le111} hold for $\forall \bar{x}\, \phi'$,
$\mi{vars}(\forall \bar{x}\, \phi')=\mi{vars}(\bar{x})\cup \mi{vars}(\phi')=\mi{vars}(\bar{x})$,
$|\bar{x}|\leq |\phi'|\leq |\forall \bar{x}\, \phi'|$.
We put $j_\mbbm{i}=0$ and $\mbbm{i}=(n_0,j_\mbbm{i})\in \{(n_0,j) \,|\, j\in \mbb{N}\}\subseteq \mbb{I}$.
$\tilde{p}_\mbbm{i}\in \tilde{\mbb{P}}$.
We put $\mi{ar}(\tilde{p}_\mbbm{i})=|\bar{x}|$.
We get by Lemma \ref{le11} for $n_0$, $\forall \bar{x}\, \phi'$ that
there exist $n_J\geq j_\mbbm{i}$,
$J=\{(n_0,j) \,|\, 1\leq j\leq n_J\}\subseteq \{(n_0,j) \,|\, j\in \mbb{N}\}\subseteq \mbb{I}$, $\mbbm{i}\not\in J$,
$S\subseteq_{\mc F} \mi{OrdCl}_{{\mc L}\cup \{\tilde{p}_\mbbm{i}\}\cup \{\tilde{p}_\mbbm{j} \,|\, \mbbm{j}\in J\}}$, and  
(a--e) of Lemma \ref{le11} hold for $\forall \bar{x}\, \phi'$.
We put $J_T^\phi=J_T\cup \{\mbbm{i}\}\cup J\subseteq \{(i,j) \,|\, i\geq n_0\}\subseteq \mbb{I}$.
Then $J_T\cap (\{\mbbm{i}\}\cup J)\subseteq \{(i,j) \,|\, i\geq n_0+1\}\cap \{(n_0,j) \,|\, j\in \mbb{N}\}=\emptyset$,
\begin{equation}
\label{eq2c}
J_T, \{\mbbm{i}\}, J\ \text{are pairwise disjoint}.
\end{equation}
We put $S_T^\phi=S_T\cup \{\tilde{p}_\mbbm{i}(\bar{x})\gle \gu\}\cup S\subseteq
                 \mi{OrdCl}_{{\mc L}\cup \{\tilde{p}_\mbbm{j} \,|\, \mbbm{j}\in J_T^\phi\}}$.
It can be proved that
\begin{equation}
\label{eq2d}
S_T, \{\tilde{p}_\mbbm{i}(\bar{x})\gle \gu\}, S\ \text{are pairwise disjoint}.
\end{equation}

Let ${\mf A}$ be an interpretation for ${\mc L}$ such that ${\mf A}\models T$ and ${\mf A}\not\models \phi$.
Then, by Corollary~\ref{cor12}(b),
there exists an interpretation ${\mf A}_T$ for ${\mc L}\cup \{\tilde{p}_\mbbm{j} \,|\, \mbbm{j}\in J_T\}$, and
${\mf A}_T\models S_T$, ${\mf A}_T|_{\mc L}={\mf A}$;
we have $\mi{vars}(\bar{x})=\mi{vars}(\phi')$;
$\forall \bar{x}\, \phi'$ is closed,
${\mf A}\not\models \phi\overset{\text{Lemma \ref{le111}(a)}}{\eqvl\!\!\eqvl\!\!\eqvl\!\!\eqvl\!\!\eqvl\!\!\eqvl\!\!\eqvl\!\!\eqvl\!\!\eqvl}
                    \phi'$,
${\mf A}\not\models \forall \bar{x}\, \phi'$, $\|\forall \bar{x}\, \phi'\|^{\mf A}<1$.
We define an expansion ${\mf A}^\#$ of ${\mf A}$ to ${\mc L}\cup \{\tilde{p}_\mbbm{i}\}$ as follows:
\begin{equation*}
\tilde{p}_\mbbm{i}^{{\mf A}^\#}(u_1,\dots,u_{|\bar{x}|})=\|\forall \bar{x}\, \phi'\|^{\mf A}.
\end{equation*}
Then, for all $e\in {\mc S}_{{\mf A}^\#}$, 
$\|\tilde{p}_\mbbm{i}(\bar{x})\|_e^{{\mf A}^\#}=\tilde{p}_\mbbm{i}^{{\mf A}^\#}(\|\bar{x}\|_e^{{\mf A}^\#})=\|\forall \bar{x}\, \phi'\|^{\mf A}=
 \|\forall \bar{x}\, \phi'\|^{{\mf A}^\#}<1$,
$\|\tilde{p}_\mbbm{i}(\bar{x})\gle \gu\|_e^{{\mf A}^\#}=\|\tilde{p}_\mbbm{i}(\bar{x})\|_e^{{\mf A}^\#}\fle 1=1$,
$\|\tilde{p}_\mbbm{i}(\bar{x})\leftrightarrow \forall \bar{x}\, \phi'\|_e^{{\mf A}^\#}=
 (\|\tilde{p}_\mbbm{i}(\bar{x})\|_e^{{\mf A}^\#}\frightarrow \|\forall \bar{x}\, \phi'\|_e^{{\mf A}^\#})\fwedge                     
 (\|\forall \bar{x}\, \phi'\|_e^{{\mf A}^\#}\frightarrow \|\tilde{p}_\mbbm{i}(\bar{x})\|_e^{{\mf A}^\#})=
 \|\forall \bar{x}\, \phi'\|^{{\mf A}^\#}\frightarrow \|\forall \bar{x}\, \phi'\|^{{\mf A}^\#}=1$; 
${\mf A}^\#\models \tilde{p}_\mbbm{i}(\bar{x})\gle \gu$,
${\mf A}^\#\models \tilde{p}_\mbbm{i}(\bar{x})\leftrightarrow \forall \bar{x}\, \phi'$,  
by Lemma \ref{le11}(b) for ${\mf A}^\#$,
there exists an interpretation ${\mf A}_\phi$ for ${\mc L}\cup \{\tilde{p}_\mbbm{i}\}\cup \{\tilde{p}_\mbbm{j} \,|\, \mbbm{j}\in J\}$, and
${\mf A}_\phi\models S$, ${\mf A}_\phi|_{{\mc L}\cup \{\tilde{p}_\mbbm{i}\}}={\mf A}^\#$.
$\{\tilde{p}_\mbbm{j} \,|\, \mbbm{j}\in J_T\}\cap 
 (\{\tilde{p}_\mbbm{i}\}\cup \{\tilde{p}_\mbbm{j} \,|\, \mbbm{j}\in J\})\overset{\text{(\ref{eq2c})}}{=\!\!=} \emptyset$.
We define an expansion ${\mf A}'$ of ${\mf A}$ to ${\mc L}\cup \{\tilde{p}_\mbbm{j} \,|\, \mbbm{j}\in J_T^\phi\}$ as follows:
\begin{equation*}
\tilde{p}_\mbbm{j}^{{\mf A}'}=\left\{\begin{array}{ll} 
                                     \tilde{p}_\mbbm{j}^{{\mf A}_T}    &\ \text{\it if}\ \mbbm{j}\in J_T, \\[1mm]
                                     \tilde{p}_\mbbm{j}^{{\mf A}_\phi} &\ \text{\it if}\ \mbbm{j}\in \{\mbbm{i}\}\cup J.
                                     \end{array}
                              \right.
\end{equation*}
We get ${\mf A}'|_{{\mc L}\cup \{\tilde{p}_\mbbm{j} \,|\, \mbbm{j}\in J_T\}}={\mf A}_T\models S_T$,
${\mf A}'|_{{\mc L}\cup \{\tilde{p}_\mbbm{i}\}}={\mf A}^\#\models \tilde{p}_\mbbm{i}(\bar{x})\gle \gu$,
${\mf A}'|_{{\mc L}\cup \{\tilde{p}_\mbbm{i}\}\cup \{\tilde{p}_\mbbm{j} \,|\, \mbbm{j}\in J\}}={\mf A}_\phi\models S$;
${\mf A}'\models S_T^\phi$, ${\mf A}'|_{\mc L}={\mf A}$.

Let ${\mf A}'$ be an interpretation for ${\mc L}\cup \{\tilde{p}_\mbbm{j} \,|\, \mbbm{j}\in J_T^\phi\}$ such that ${\mf A}'\models S_T^\phi$.
Then                                                                                                                               
${\mf A}'|_{{\mc L}\cup \{\tilde{p}_\mbbm{j} \,|\, \mbbm{j}\in J_T\}}\models S_T$,
${\mf A}'|_{{\mc L}\cup \{\tilde{p}_\mbbm{i}\}}\models \tilde{p}_\mbbm{i}(\bar{x})\gle \gu$,
${\mf A}'|_{{\mc L}\cup \{\tilde{p}_\mbbm{i}\}\cup \{\tilde{p}_\mbbm{j} \,|\, \mbbm{j}\in J\}}\models S$,
by Corollary~\ref{cor12}(b) for ${\mf A}'|_{{\mc L}\cup \{\tilde{p}_\mbbm{j} \,|\, \mbbm{j}\in J_T\}}$, 
${\mf A}'|_{\mc L}\models T$,
by Lemma \ref{le11}(b) for ${\mf A}'|_{{\mc L}\cup \{\tilde{p}_\mbbm{i}\}\cup \{\tilde{p}_\mbbm{j} \,|\, \mbbm{j}\in J\}}$, 
${\mf A}'|_{{\mc L}\cup \{\tilde{p}_\mbbm{i}\}}\models \tilde{p}_\mbbm{i}(\bar{x})\leftrightarrow \forall \bar{x}\, \phi'$;
for all $e\in {\mc S}_{{\mf A}'}$, 
$1=\|\tilde{p}_\mbbm{i}(\bar{x})\gle \gu\|_e^{{\mf A}'}=\|\tilde{p}_\mbbm{i}(\bar{x})\|_e^{{\mf A}'}\fle 1$,
$\|\tilde{p}_\mbbm{i}(\bar{x})\|_e^{{\mf A}'}<1$,
\begin{alignat*}{1}
& 1=\|\tilde{p}_\mbbm{i}(\bar{x})\leftrightarrow \forall \bar{x}\, \phi'\|_e^{{\mf A}'}= \\
& \phantom{1=\mbox{}}
    (\|\tilde{p}_\mbbm{i}(\bar{x})\|_e^{{\mf A}'}\frightarrow \|\forall \bar{x}\, \phi'\|^{{\mf A}'})\fwedge
    (\|\forall \bar{x}\, \phi'\|^{{\mf A}'}\frightarrow \|\tilde{p}_\mbbm{i}(\bar{x})\|_e^{{\mf A}'}),
\end{alignat*}
$\|\forall \bar{x}\, \phi'\|^{{\mf A}'}\frightarrow \|\tilde{p}_\mbbm{i}(\bar{x})\|_e^{{\mf A}'}=1$,
$\|\forall \bar{x}\, \phi'\|^{{\mf A}'}\leq \|\tilde{p}_\mbbm{i}(\bar{x})\|_e^{{\mf A}'}<1$;
${\mf A}'|_{\mc L}\not\models \forall \bar{x}\, \phi'$,
${\mf A}'|_{\mc L}\not\models \phi'\overset{\text{Lemma \ref{le111}(a)}}{\eqvl\!\!\eqvl\!\!\eqvl\!\!\eqvl\!\!\eqvl\!\!\eqvl\!\!\eqvl\!\!\eqvl\!\!\eqvl}
                              \phi$.
We put ${\mf A}={\mf A}'|_{\mc L}$, an interpretation for ${\mc L}$.
Then ${\mf A}\models T$ and ${\mf A}\not\models \phi$, ${\mf A}={\mf A}'|_{\mc L}$; (i) holds.

Let $T\subseteq_{\mc F} \mi{Form}_{\mc L}$.
Then, by Corollary \ref{cor12}(c), $J_T\subseteq_{\mc F} \{(i,j) \,|\, i\geq n_0+1\}\subseteq \{(i,j) \,|\, i\geq n_0\}\subseteq \mbb{I}$, 
$\|J_T\|\leq 2\cdot |T|$,
$S_T\subseteq_{\mc F} \mi{OrdCl}_{{\mc L}\cup \{\tilde{p}_\mbbm{j} \,|\, \mbbm{j}\in J_T\}}$, $|S_T|\in O(|T|^2)$;
we have $\mbbm{i}\in \{(n_0,j) \,|\, j\in \mbb{N}\}\subseteq \mbb{I}$, $|\bar{x}|\leq |\phi'|$;
$J\subseteq_{\mc F} \{(n_0,j) \,|\, j\in \mbb{N}\}\subseteq \mbb{I}$,
$\|J\|\underset{\text{Lemma \ref{le11}(a)}}{\leq} |\forall \bar{x}\, \phi'|-1<|\forall \bar{x}\, \phi'|=
 2\cdot |\bar{x}|+|\phi'|\leq 3\cdot |\phi'|\underset{\text{Lemma \ref{le111}(b)}}{\leq} 6\cdot |\phi|$;
we have $S\subseteq_{\mc F} \mi{OrdCl}_{{\mc L}\cup \{\tilde{p}_\mbbm{i}\}\cup \{\tilde{p}_\mbbm{j} \,|\, \mbbm{j}\in J\}}$;
$|S|, |\forall \bar{x}\, \phi'|\cdot (1+|\bar{x}|)\underset{\text{Lemma \ref{le11}(c)}}{\leq} 
 27\cdot |\forall \bar{x}\, \phi'|\cdot (1+|\bar{x}|)\leq 27\cdot 3\cdot |\phi'|\cdot 2\cdot |\phi'|=
 162\cdot |\phi'|^2\underset{\text{Lemma \ref{le111}(b)}}{\leq} 648\cdot |\phi|^2\in O(|\phi|^2)$;
$J_T^\phi\subseteq_{\mc F} \{(i,j) \,|\, i\geq n_0\}\subseteq \mbb{I}$,
$\|J_T^\phi\|\overset{\text{(\ref{eq2c})}}{=\!\!=} \|J_T\|+\|\{\mbbm{i}\}\|+\|J\|\leq 2\cdot |T|+1+6\cdot |\phi|\in O(|T|+|\phi|)$,
$S_T^\phi\subseteq_{\mc F} \mi{OrdCl}_{{\mc L}\cup \{\tilde{p}_\mbbm{j} \,|\, \mbbm{j}\in J_T^\phi\}}$,
$|S_T^\phi|\overset{\text{(\ref{eq2d})}}{=\!\!=} |S_T|+|\{\tilde{p}_\mbbm{i}(\bar{x})\gle \gu\}|+|S|=
 |S_T|+|\bar{x}|+3+|S|\leq |S_T|+4\cdot |\phi'|+|S|\underset{\text{Lemma \ref{le111}(b)}}{\leq} |S_T|+8\cdot |\phi|+|S|\in O(|T|^2+|\phi|^2)$;
the translation of $T$ and $\phi$ to $S_T^\phi$ uses the input $T$, $\phi$, the output $S_T^\phi$, 
auxiliary $S_T$, $\phi'$, $\tilde{f}_0(\bar{x})$, $\forall \bar{x}\, \phi'$, $\{\tilde{p}_\mbbm{i}(\bar{x})\gle \gu\}$, $S$;
we have that $\phi'$ can be built up from $\phi$ via a postorder traversal of $\phi$ with $\#{\mc O}_1(\phi)\in O(|\phi|)$;
the test $\phi'\not\in \overline{C}_{\mc L}$ is with $\#{\mc O}_2(\phi')\in O(1)$;
by Corollary~\ref{cor12}(c), the number of all elementary operations of the translation of $T$ to $S_T$ is in $O(|T|^2)$;
the time complexity of the translation of $T$ to $S_T$ is in $O(|T|^2\cdot \log (1+n_0+|T|))$;
$\tilde{f}_0(\bar{x})$ can be built up from $\phi'$ via the left-right preorder traversal of $\phi'$ 
with $\#{\mc O}_3(\phi')\in O(|\phi'|)\underset{\text{Lemma \ref{le111}(b)}}{\subseteq} O(|\phi|)$;
$\forall \bar{x}\, \phi'$ can be built up from $\phi'$ and $\tilde{f}_0(\bar{x})$ 
with $\#{\mc O}_4(\phi',\tilde{f}_0(\bar{x}))\in O(|\forall \bar{x}\, \phi'|)\subseteq O(|\phi|)$;
$\{\tilde{p}_\mbbm{i}(\bar{x})\gle \gu\}$ can be built up from $\tilde{f}_0(\bar{x})$ 
with $\#{\mc O}_5(\tilde{f}_0(\bar{x}))\in O(|\{\tilde{p}_\mbbm{i}(\bar{x})\gle \gu\}|)=O(1+|\bar{x}|)\subseteq
                                           O(|\phi'|)\underset{\text{Lemma \ref{le111}(b)}}{\subseteq} O(|\phi|)$;
by Lemma \ref{le11}(c), $S$ can be built up from $\forall \bar{x}\, \phi'$ and $\tilde{f}_0(\bar{x})$ 
via a preorder traversal of $\forall \bar{x}\, \phi'$
with $\#{\mc O}_6(\forall \bar{x}\, \phi',\tilde{f}_0(\bar{x}))\in O(|\forall \bar{x}\, \phi'|\cdot (1+|\bar{x}|))\subseteq O(|\phi|^2)$;
$S_T^\phi$ can be built up from $\{\tilde{p}_\mbbm{i}(\bar{x})\gle \gu\}$ and $S$ by copying and appending to $S_T$
with $\#{\mc O}_7(\{\tilde{p}_\mbbm{i}(\bar{x})\gle \gu\},S)\in O(|\{\tilde{p}_\mbbm{i}(\bar{x})\gle \gu\}|+|S|)\subseteq O(|\phi|^2)$;
$\sum_{i=1}^7 \#{\mc O}_i\in O(|\phi|^2)$,
by (\ref{eq00t}) for $n_0$, $\phi$, $\emptyset$, 
$\phi'$, $\tilde{f}_0(\bar{x})$, $\forall \bar{x}\, \phi'$, $\{\tilde{p}_\mbbm{i}(\bar{x})\gle \gu\}$, $S$,
$\{\tilde{p}_\mbbm{i}(\bar{x})\gle \gu\}$ (a copy), $S$ (a copy), $q=8$, $r=2$,
the time complexity of elementary operations at the stages $1,\dots,7$ is
in $O((\sum_{i=1}^7 \#{\mc O}_i)\cdot (\log (1+n_0)+\log ((\sum_{i=1}^7 \#{\mc O}_i)+|\phi|)))\subseteq
    O(|\phi|^2\cdot (\log (1+n_0)+\log |\phi|))$;
by (\ref{eq00s}) for $n_0$, $\phi$, $\emptyset$,
$\phi'$, $\tilde{f}_0(\bar{x})$, $\forall \bar{x}\, \phi'$, $\{\tilde{p}_\mbbm{i}(\bar{x})\gle \gu\}$, $S$,
$\{\tilde{p}_\mbbm{i}(\bar{x})\gle \gu\}$ (a copy), $S$ (a copy), $q=8$, $r=2$,
the space complexity of elementary operations at the stages $1,\dots,7$ is
in $O(((\sum_{i=1}^7 \#{\mc O}_i)+|\phi|^2)\cdot (\log (1+n_0)+\log |\phi|))\subseteq O(|\phi|^2\cdot (\log (1+n_0)+\log |\phi|))$;
the total number of all elementary operations of the translation of $T$ and $\phi$ to $S_T^\phi$ is in $O(|T|^2+|\phi|^2)$;
the total time complexity of the translation of $T$ and $\phi$ to $S_T^\phi$ is
in $O(|T|^2\cdot \log (1+n_0+|T|)+|\phi|^2\cdot (\log (1+n_0)+\log |\phi|))$; (iii) holds.

(iv) can be proved straightforwardly.

Thus, in all Cases 1--3, (i), (iii), (iv) hold.

Let $T\models \phi$.
Then there does not exist an interpretation ${\mf A}$ for ${\mc L}$, and ${\mf A}\models T$, ${\mf A}\not\models \phi$;
by (i), there does not exist an interpretation ${\mf A}'$ for ${\mc L}\cup \{\tilde{p}_\mbbm{j} \,|\, \mbbm{j}\in J_T^\phi\}$ and
${\mf A}'\models S_T^\phi$; $S_T^\phi$ is unsatisfiable.
Let $S_T^\phi$ be unsatisfiable.
Then there does not exist an interpretation ${\mf A}'$ for ${\mc L}\cup \{\tilde{p}_\mbbm{j} \,|\, \mbbm{j}\in J_T^\phi\}$ and
${\mf A}'\models S_T^\phi$;
by (i), there does not exist an interpretation ${\mf A}$ for ${\mc L}$, and ${\mf A}\models T$, ${\mf A}\not\models \phi$;
$T\models \phi$; (ii) holds.
\qed
\end{proof}

\section{Multi-step fuzzy inference}
\label{S4}

In this section, we provide some implementation of the Mamdani-Assilian type of fuzzy rules and inference in G\"{o}del logic.
We pose three fundamental problems: reachability, stability, and the existence of a $k$-cycle in multi-step fuzzy inference and 
show their reductions to certain deduction and unsatisfiability problems.
The implementation will be illustrated by an example.

Let $\mbb{U}$ be a non-empty set.
We call $\mbb{U}$ the universum.
A fuzzy set $A$ over $\mbb{U}$ is a mapping $A : \mbb{U}\longrightarrow [0,1]$.
We denote the set of all fuzzy sets over $\mbb{U}$ as ${\mc F}_\mbb{U}$.
Let $c\in [0,1]$ and $A_1, A_2\in {\mc F}_\mbb{U}$.
We define the height of $A_1$ as $\mi{height}(A_1)=\bigfvee_{u\in \mbb{U}} A_1(u)\in [0,1]$;
the cut $\mi{cut}(c,A_1)\in {\mc F}_\mbb{U}$ of $A_1$ by $c$ as 
$\mi{cut}(c,A_1) : \mbb{U}\longrightarrow [0,1],\ \mi{cut}(c,A_1)(u)=\mi{min}(c,A_1(u))$;
the union $A_1\cup A_2\in {\mc F}_\mbb{U}$ of $A_1$ and $A_2$ as 
$A_1\cup A_2 : \mbb{U}\longrightarrow [0,1],\ A_1\cup A_2(u)=\mi{max}(A_1(u),A_2(u))$;
the intersection $A_1\cap A_2\in {\mc F}_\mbb{U}$ of $A_1$ and $A_2$ as 
$A_1\cap A_2 : \mbb{U}\longrightarrow [0,1],\ A_1\cap A_2(u)=\mi{min}(A_1(u),A_2(u))$.
Let $\emptyset\neq \mbb{A}\subseteq_{\mc F} {\mc F}_\mbb{U}$.
Let $\mbb{X}$ be a non-empty finite set of variables having fuzzy sets over $\mbb{U}$ as values.
A fuzzy rule $r$ of the Mamdani-Assilian type is an expression of the form
$\mib{IF}\, X_0\, \mi{is}\, A_0\, \mi{and}\, \dots\, \mi{and}\, X_n\, \mi{is}\, A_n\, \mib{THEN}\, X\, \mi{is}\, A$,
$X_i, X\in \mbb{X}$, $A_i, A\in \mbb{A}$ \cite{MAAS75,Mam76}.
We say that $X_i$ are input variables, whereas $X$ is the output variable.
We denote $\mi{in}(r)=\{X_i \,|\, i\leq n\}\subseteq \mbb{X}$ and $\mi{out}(r)=X\in \mbb{X}$.
A fuzzy rule base is a non-empty finite set of fuzzy rules.
A fuzzy variable assignment is a mapping $\mbb{X}\longrightarrow {\mc F}_\mbb{U}$.
We denote the set of all fuzzy variable assignments as ${\mc S}_{\mc F}$.
Let $e\in {\mc S}_{\mc F}$. 
We define the value of $X$ with respect to $e$ and $r$ as 
$\|X\|_e^r=\mi{cut}(\bigfwedge_{i=0}^n \mi{height}(e(X_i)\cap A_i),A)\in {\mc F}_\mbb{U}$.
Let $B$ be a fuzzy rule base and $X\in \mbb{X}$.
We define the value of $X$ with respect to $e$ and $B$ as $\|X\|_e^B=\bigcup \{\|X\|_e^r \,|\, r\in B, \mi{out}(r)=X\}\in {\mc F}_\mbb{U}$.

As another step, we propose translation of fuzzy rules to formulae of G\"{o}del logic.
Assume that the universum $\mbb{U}$ is countable.
Notice that this restriction is reasonable. 
In many cases, it is sufficient to consider fuzzy sets over $\mbb{R}$ which are continuous; 
two such fuzzy sets are equal if their restrictions onto $\mbb{Q}$ are equal.
We shall assume a fresh constant symbol $\tilde{z}$, a fresh unary function symbol $\tilde{s}$, and 
two fresh binary function symbols $\mi{frac}$, $\mi{-frac}$ such that 
$\tilde{z}, \tilde{s}, \mi{frac}, \mi{-frac}\not\in \mi{Func}_{\mc L}\cup \{\tilde{f}_0\}$.
We denote $\tilde{\mbb{Z}}=\{\tilde{z},\tilde{s},\mi{frac},\mi{-frac}\}$.
Using $\tilde{z}$, $\tilde{s}$, and $\mi{(-)frac}$, we can build natural and rational numerals
for representation of natural and rational numbers, respectively.
Let $t\in \mi{GTerm}_{\tilde{\mbb{Z}}}$.
$t$ is a natural numeral iff $t=\tilde{s}^n(\tilde{z})$.
$t$ is a rational numeral iff either $t=\mi{frac}(\tilde{s}^m(\tilde{z}),\tilde{s}^n(\tilde{z}))$, $n>0$, or
$t=\mi{-frac}(\tilde{s}^m(\tilde{z}),\tilde{s}^n(\tilde{z}))$, $m, n>0$.
We shall assume a set of four fresh unary predicate symbols $\tilde{\mbb{D}}=\{\mi{nat},\mi{rat},\mi{time},\mi{uni}\}$ such that
$\tilde{\mbb{D}}\cap (\mi{Pred}_{\mc L}\cup \tilde{\mbb{P}})=\emptyset$.
We shall use these predicate symbols for axiomatisation of certain domain properties.
We shall assume a non-empty finite set of fresh unary predicate symbols 
$\tilde{\mbb{G}}=\{\tilde{G}_A \,|\, A\in \mbb{A}\}$ such that 
$\tilde{\mbb{G}}\cap (\mi{Pred}_{\mc L}\cup \tilde{\mbb{P}}\cup \tilde{\mbb{D}})=\emptyset$.
We shall assume a non-empty finite set of fresh binary predicate symbols 
$\tilde{\mbb{H}}=\{\tilde{H}_X^r \,|\, r\in B, X\in \mbb{X}, \mi{out}(r)=X\}\cup \{\tilde{H}_X \,|\, X\in \mbb{X}\}$ such that 
$\tilde{\mbb{H}}\cap (\mi{Pred}_{\mc L}\cup \tilde{\mbb{P}}\cup \tilde{\mbb{D}}\cup \tilde{\mbb{G}})=\emptyset$.
We put $C_{\mc L}=\{0,1\}\cup \bigcup \{A[\mbb{U}] \,|\, A\in \mbb{A}\}\cup \bigcup \{e(X)[\mbb{U}] \,|\, X\in \mbb{X}\}$; 
$\{0,1\}\subseteq C_{\mc L}\subseteq [0,1]$ is countable.
We shall assume a fixed first-order language ${\mc L}^*$ which is an expansion 
of ${\mc L}\cup \{\tilde{f}_0\}\cup \tilde{\mbb{Z}}\cup \tilde{\mbb{D}}\cup \tilde{\mbb{G}}\cup \tilde{\mbb{H}}$.
Note that $\mi{GTerm}_{{\mc L}^*}\neq \emptyset$; $\tilde{z}\in \mi{Func}_{{\mc L}^*}$.
We denote ${\mc K}=\{{\mc I} \,|\, {\mc I}\ \text{\it is an interpretation for}\ {\mc L}^*, {\mc U}_{\mc I}=\mi{GTerm}_{{\mc L}^*}\}$. 
We shall confine our considerations concerning logical consequence and satisfiability onto ${\mc K}$.
Let $\phi, \phi'\in \mi{Form}_{{\mc L}^*}$, $T, T'\subseteq \mi{Form}_{{\mc L}^*}$, 
$C, C'\in \mi{OrdCl}_{{\mc L}^*}$, $S, S'\subseteq \mi{OrdCl}_{{\mc L}^*}$.
Let $\varepsilon_1\in \{\phi,T,C,S\}$ and $\varepsilon_2\in \{\phi',T',C',S'\}$.
$\varepsilon_2$ is a logical consequence of $\varepsilon_1$ with respect to ${\mc K}$, in symbols $\varepsilon_1\models_{\mc K} \varepsilon_2$,
iff, for every interpretation ${\mc I}\in {\mc K}$, if ${\mc I}\models \varepsilon_1$, then ${\mc I}\models \varepsilon_2$.
$\varepsilon_1$ is satisfiable in ${\mc K}$ iff there exists a model ${\mc I}\in {\mc K}$ of $\varepsilon_1$.
We can axiomatise the domains of natural, rational numbers, 
a domain of time (a linear discrete time with the starting point $\tilde{z}$ and without an endpoint), and the universum $\mbb{U}$ as follows.
Let $\delta$ be a sequence of $\mi{Var}_{{\mc L}^*}$. 
We define $T_D=\{\mi{nat}(\tilde{z}), \mi{nat}(\tilde{s}(x))\leftrightarrow \mi{nat}(x), 
                 \mi{rat}(\mi{frac}(x,\tilde{s}(y)))\leftrightarrow \mi{nat}(x)\wedge \mi{nat}(y),
                 \mi{rat}(\mi{-frac}(\tilde{s}(x),\tilde{s}(y)))\leftrightarrow \mi{nat}(x)\wedge \mi{nat}(y)\}\cup
               \{\mi{nat}(f(\delta|_{\mi{ar}_{{\mc L}^*}(f)}))\geql \gz \,|\, f\in \mi{Func}_{{\mc L}^*}-\{\tilde{z},\tilde{s}\}\}\cup
               \{\mi{rat}(f(\delta|_{\mi{ar}_{{\mc L}^*}(f)}))\geql \gz \,|\, f\in \mi{Func}_{{\mc L}^*}-\{\mi{frac},\mi{-frac}\}\}\cup
               \{\mi{time}(x)\leftrightarrow \mi{nat}(x),\mi{uni}(x)\rightarrow \mi{rat}(x)\}\subseteq \mi{Form}_{{\mc L}^*}$. 
The first four formulae of $T_D$ obviously axiomatise the domains of natural and rational numbers by respective natural and rational numerals.
The next two axioms express some kind of the closed world assumption;
the domains of natural and rational numbers consist only of respective natural and rational numerals in every ${\mc I}\in {\mc K}$.
More precisely, we can prove that 
for all $t\in {\mc U}_{\mc I}$, if ${\mc I}\models \mi{nat}(t)$, then $t$ is a natural numeral.
The proof is by straightforward induction on $|t|$.
Analogously for rational numerals, for all $t\in {\mc U}_{\mc I}$, if ${\mc I}\models \mi{rat}(t)$, then $t$ is a rational numeral.
The proof is by immediate case analysis.  
The last two axioms define the time domain as the domain of natural numbers and the universum $\mbb{U}$ as a subdomain of rational numbers.
We have that $\mbb{U}$ is countable; therefore, its axiomatisation could just be some subdomain of natural numbers.
However, for application purposes, a subdomain of rational numbers is much more convenient. 
Hence, there exists an injection $\gamma : \mbb{U}\longrightarrow \mbb{Q}$.
Let $t\in \mi{GTerm}_{\tilde{\mbb{Z}}}$ be a rational numeral.
If $t=\mi{frac}(\tilde{s}^m(\tilde{z}),\tilde{s}^n(\tilde{z}))$, $n>0$, 
then we define the value of $t$ as $\|t\|=\dfrac{m}{n}\in \mbb{Q}$.
If $t=\mi{-frac}(\tilde{s}^m(\tilde{z}),\tilde{s}^n(\tilde{z}))$, $m, n>0$,
then we define the value of $t$ as $\|t\|=-\dfrac{m}{n}\in \mbb{Q}$.
So, the universum $\mbb{U}$ can be axiomatised as follows.
Let $\tilde{\mbb{U}}\subseteq \{t \,|\, t\in \mi{GTerm}_{\tilde{\mbb{Z}}}\ \text{\it is a rational numeral},\ \|t\|\in \gamma[\mbb{U}]\}$
such that $\{\|\tilde{u}\| \,|\, \tilde{u}\in \tilde{\mbb{U}}\}=\gamma[\mbb{U}]$.
We define $S_U=\{\mi{uni}(\tilde{u})\geql \gu \,|\, \tilde{u}\in \tilde{\mbb{U}}\}\cup 
               \{\mi{uni}(t)\geql \gz \,|\, t\in \mi{GTerm}_{{\mc L}^*}-\tilde{\mbb{U}}\}\subseteq \mi{OrdCl}_{{\mc L}^*}$.
Note that the universum $\mbb{U}$ is interpreted as $\tilde{\mbb{U}}$ in every ${\mc I}\in {\mc K}$;
for all $t\in {\mc U}_{\mc I}$, ${\mc I}\models \mi{uni}(t)$ if and only if $t\in \tilde{\mbb{U}}$. 
Let $\tilde{u}\in \tilde{\mbb{U}}$.
We denote $\langle\tilde{u}\rangle=\gamma^{-1}(\|\tilde{u}\|)\in \mbb{U}$; 
$\langle\tilde{u}\rangle$ denotes a unique element $u\in \mbb{U}$ such that $\gamma(u)=\|\tilde{u}\|\in \mbb{Q}$.
The translation of $\mbb{A}$ is defined as 
$S_\mbb{A}=\{\tilde{G}_A(\tilde{u})\geql \overline{A(\langle\tilde{u}\rangle)} \,|\, A\in \mbb{A}, \tilde{u}\in \tilde{\mbb{U}}\}\subseteq 
           \mi{OrdCl}_{{\mc L}\cup \tilde{\mbb{Z}}\cup \tilde{\mbb{G}}}$. 
The translation of $e$ is defined as
$S_e=\{\tilde{H}_X(\tau,\tilde{u})\geql \overline{e(X)(\langle\tilde{u}\rangle)} \,|\, X\in \mbb{X}, \tilde{u}\in \tilde{\mbb{U}}\}\subseteq
     \mi{OrdCl}_{{\mc L}\cup \tilde{\mbb{Z}}\cup \tilde{\mbb{H}}}$.
The translation of $r$ is defined as
$\phi_r(\tau,y)=\big(\mi{time}(\tau)\wedge \mi{uni}(y)\rightarrow
                     \big(\tilde{H}_X^r(\tilde{s}(\tau),y)\geql 
                          ((\bigwedge_{i=0}^n \exists x\, (\mi{uni}(x)\wedge \tilde{H}_{X_i}(\tau,x)\wedge \tilde{G}_{A_i}(x)))\wedge 
                           \tilde{G}_A(y))\big)\big)\in
                \mi{Form}_{\tilde{\mbb{Z}}\cup \tilde{\mbb{D}}\cup \tilde{\mbb{G}}\cup \tilde{\mbb{H}}}$.
The translation of $B$ is defined as 
$T_B=\{\phi_r(\tau,y) \,|\, r\in B\}\cup 
     \big\{\mi{time}(\tau)\wedge \mi{uni}(y)\rightarrow \big(\tilde{H}_X(\tilde{s}(\tau),y)\geql 
                                                             \bigvee_{r\in B, \mi{out}(r)=X} \tilde{H}_X^r(\tilde{s}(\tau),y)\big) \,|\, 
           X\in \mbb{X}\big\}\subseteq_{\mc F} 
     \mi{Form}_{\tilde{\mbb{Z}}\cup \tilde{\mbb{D}}\cup \tilde{\mbb{G}}\cup \tilde{\mbb{H}}}$.

Let ${\mc D}=e_0,\dots,e_n$, $e_i\in {\mc S}_{\mc F}$.
${\mc D}$ is a fuzzy derivation of $e_n$ from $e_0$ using $B$ iff, 
for all $1\leq i\leq n$, $e_i=\{(X,\|X\|_{e_{i-1}}^B) \,|\, X\in \mbb{X}\}$.

\begin{lemma}
\label{le2}
Let ${\mc D}=e_0,\dots,e_\eta$ be a fuzzy derivation.
$T_D\cup S_U\cup S_\mbb{A}\cup T_B\cup S_{e_0}(\tau/\tilde{z})\models_{\mc K} S_{e_\eta}(\tau/\tilde{s}^\eta(\tilde{z}))$.
\end{lemma}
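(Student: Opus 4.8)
The plan is to prove the stronger pointwise claim by induction on the time step: for every model ${\mc I}\in {\mc K}$ of the left-hand theory $T_D\cup S_U\cup S_\mbb{A}\cup T_B\cup S_{e_0}(\tau/\tilde{z})$, for every $0\le i\le \eta$, every $X\in \mbb{X}$ and every $\tilde{u}\in \tilde{\mbb{U}}$,
\[
\|\tilde{H}_X(\tilde{s}^i(\tilde{z}),\tilde{u})\|^{\mc I}=e_i(X)(\langle\tilde{u}\rangle).
\]
This is exactly the assertion ${\mc I}\models S_{e_i}(\tau/\tilde{s}^i(\tilde{z}))$, so the case $i=\eta$ delivers the lemma.

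First I would fix such an ${\mc I}$ and record the domain facts forced by the axioms: from $T_D$ every natural numeral $\tilde{s}^i(\tilde{z})$ satisfies $\mi{time}$ with value $1$; from $S_U$ the predicate $\mi{uni}(t)$ has value $1$ for $t\in \tilde{\mbb{U}}$ and value $0$ for $t\in \mi{GTerm}_{{\mc L}^*}-\tilde{\mbb{U}}$; and from $S_\mbb{A}$ one has $\|\tilde{G}_A(\tilde{u})\|^{\mc I}=A(\langle\tilde{u}\rangle)$. The base case $i=0$ is nothing but the hypothesis $S_{e_0}(\tau/\tilde{z})$. For the inductive step I assume the claim at $i-1$. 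Since $\tilde{s}^{i-1}(\tilde{z})$ is a natural numeral and $\tilde{u}\in \tilde{\mbb{U}}$, the antecedent $\mi{time}(\tilde{s}^{i-1}(\tilde{z}))\fwedge \mi{uni}(\tilde{u})$ has value $1$; hence the G\"{o}del implications in each $\phi_r$ and in the aggregation axiom of $T_B$ force their $\geql$-consequents to value $1$, so both $\tilde{H}_X^r$ and $\tilde{H}_X$ at time $\tilde{s}^i(\tilde{z})$ are \emph{equal} to (not merely bounded by) the truth values of the respective right-hand sides.

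The key computation is to evaluate those right-hand sides. Using that $\fwedge$, $\fvee$ and $\exists$ are interpreted by $\min$, $\max$ and supremum, and that the guard $\mi{uni}(x)$ collapses the supremum in each $\exists x\,(\mi{uni}(x)\wedge \tilde{H}_{X_{i'}}(\tilde{s}^{i-1}(\tilde{z}),x)\wedge \tilde{G}_{A_{i'}}(x))$ to a supremum over $\tilde{\mbb{U}}$, I substitute the induction hypothesis for $\tilde{H}_{X_{i'}}(\tilde{s}^{i-1}(\tilde{z}),\tilde{u}')$ and the value $A_{i'}(\langle\tilde{u}'\rangle)$ for $\tilde{G}_{A_{i'}}(\tilde{u}')$; since $\langle\cdot\rangle$ maps $\tilde{\mbb{U}}$ onto all of $\mbb{U}$, this existential evaluates to $\bigfvee_{v\in \mbb{U}}\min(e_{i-1}(X_{i'})(v),A_{i'}(v))=\mi{height}(e_{i-1}(X_{i'})\cap A_{i'})$. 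Taking the outer $\fwedge$ over $i'$ and then the $\fwedge$ with $\tilde{G}_A(\tilde{u})=A(\langle\tilde{u}\rangle)$ reproduces $\mi{cut}(\bigfwedge_{i'=0}^{n}\mi{height}(e_{i-1}(X_{i'})\cap A_{i'}),A)(\langle\tilde{u}\rangle)=\|X\|_{e_{i-1}}^{r}(\langle\tilde{u}\rangle)$ for $X=\mi{out}(r)$. Finally the aggregation axiom yields $\|\tilde{H}_X(\tilde{s}^i(\tilde{z}),\tilde{u})\|^{\mc I}=\bigfvee_{r\in B,\,\mi{out}(r)=X}\|X\|_{e_{i-1}}^{r}(\langle\tilde{u}\rangle)=\|X\|_{e_{i-1}}^{B}(\langle\tilde{u}\rangle)$, which equals $e_i(X)(\langle\tilde{u}\rangle)$ by the definition of a fuzzy derivation, closing the induction.

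The main obstacle I anticipate is the faithful bookkeeping in this key computation: one must verify that the supremum defining each $\exists$ is genuinely restricted to $\tilde{\mbb{U}}$ — so that the induction hypothesis, which only constrains $\tilde{H}_{X_{i'}}$ on $\tilde{\mbb{U}}$, is sufficient — and that the correspondence $\langle\cdot\rangle:\tilde{\mbb{U}}\longrightarrow \mbb{U}$ lets one translate the term-level suprema evaluated in ${\mc I}$ into the fuzzy-set operations $\mi{height}$, $\cap$, $\mi{cut}$, $\cup$ that define $\|X\|_e^r$ and $\|X\|_e^B$. By contrast, the step that the guarded implications fire with value $1$ — and thus that $\geql$ forces equality rather than mere inequality — depends only on $T_D$ and $S_U$ fixing $\mi{time}$ and $\mi{uni}$, and should be routine once those domain facts have been established at the outset.
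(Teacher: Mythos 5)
Your proposal is correct and follows essentially the same route as the paper: fix a model of the left-hand theory and prove by induction on the time step $\kappa$ that it satisfies $S_{e_\kappa}(\tau/\tilde{s}^\kappa(\tilde{z}))$, using $T_D$ and $S_U$ to fire the guarded implications, the $\mi{uni}$ guard to collapse each existential to a supremum over $\tilde{\mbb{U}}$, and the surjectivity of $\langle\cdot\rangle$ to identify that supremum with $\mi{height}(e_{\kappa-1}(X_i)\cap A_i)$, then aggregating over rules to recover $\|X\|_{e_{\kappa-1}}^B=e_\kappa(X)$. The only difference is presentational: you make explicit the pointwise-equality reading of the unit $\geql$-clauses and the bookkeeping of the existential evaluation, which the paper leaves implicit in its chain of $\models$ assertions.
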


\begin{proof}
Let ${\mf A}\in {\mc K}$ such that ${\mf A}\models T_D\cup S_U\cup S_\mbb{A}\cup T_B\cup S_{e_0}(\tau/\tilde{z})\subseteq \mi{Form}_{{\mc L}^*}$.
We show that for all $\kappa\leq \eta$, ${\mf A}\models S_{e_\kappa}(\tau/\tilde{s}^\kappa(\tilde{z}))\subseteq \mi{OrdCl}_{{\mc L}^*}$.
We proceed by induction on $\kappa\leq \eta$.

Case 1 (the base case):
$\kappa=0$.
${\mf A}\models S_{e_0}(\tau/\tilde{z})$.
The statement holds.

Case 2 (the induction case):
$0<\kappa\leq \eta$.
By induction hypothesis for $\kappa-1$, ${\mf A}\models S_{e_{\kappa-1}}(\tau/\tilde{s}^{\kappa-1}(\tilde{z}))$.
Then $e_\kappa=\{(X,\|X\|_{e_{\kappa-1}}^B) \,|\, X\in \mbb{X}\}$,
for all $X\in \mbb{X}$,
$e_\kappa(X)=\|X\|_{e_{\kappa-1}}^B=\bigcup \{\|X\|_{e_{\kappa-1}}^r \,|\, r\in B, \mi{out}(r)=X\}$,
${\mf A}\models \mi{time}(\tau)\wedge \mi{uni}(y)\rightarrow 
                \big(\tilde{H}_X(\tilde{s}(\tau),y)\geql \bigvee_{r\in B, \mi{out}(r)=X} \tilde{H}_X^r(\tilde{s}(\tau),y)\big)\in T_B$,
for all $\tilde{u}\in \tilde{\mbb{U}}$,
for all $r\in B$ and $\mi{out}(r)=X$,
$\|X\|_{e_{\kappa-1}}^r=\mi{cut}(\bigfwedge_{i=0}^n \mi{height}(e_{\kappa-1}(X_i)\cap A_i),A)$, $X_i\in \mbb{X}$, $A_i, A\in \mbb{A}$,
${\mf A}\models \phi_r(\tau,y)\in T_B$,                                                                                            
${\mf A}\models \phi_r(\tilde{s}^{\kappa-1}(\tilde{z}),\tilde{u})=
                \big(\mi{time}(\tilde{s}^{\kappa-1}(\tilde{z}))\wedge \mi{uni}(\tilde{u})\rightarrow
                \big(\tilde{H}_X^r(\tilde{s}^\kappa(\tilde{z}),\tilde{u})\geql
                     ((\bigwedge_{i=0}^n \exists x\, (\mi{uni}(x)\wedge \tilde{H}_{X_i}(\tilde{s}^{\kappa-1}(\tilde{z}),x)\wedge 
                                                      \tilde{G}_{A_i}(x)))\wedge 
                      \tilde{G}_A(\tilde{u}))\big)\big)$,
${\mf A}\models T_D\models \mi{time}(\tilde{s}^{\kappa-1}(\tilde{z}))$, 
${\mf A}\models S_U\models \mi{uni}(\tilde{u})$,
${\mf A}\models \tilde{H}_X^r(\tilde{s}^\kappa(\tilde{z}),\tilde{u})\geql
                ((\bigwedge_{i=0}^n \exists x\, (\mi{uni}(x)\wedge \tilde{H}_{X_i}(\tilde{s}^{\kappa-1}(\tilde{z}),x)\wedge 
                                                 \tilde{G}_{A_i}(x)))\wedge
                 \tilde{G}_A(\tilde{u}))$,
for all $i\leq n$ and $\tilde{v}\in \tilde{\mbb{U}}$,
${\mf A}\models \tilde{H}_{X_i}(\tilde{s}^{\kappa-1}(\tilde{z}),\tilde{v})\geql \overline{e_{\kappa-1}(X_i)(\langle\tilde{v}\rangle)}\in 
                S_{e_{\kappa-1}}(\tau/\tilde{s}^{\kappa-1}(\tilde{z}))$,
${\mf A}\models \tilde{G}_{A_i}(\tilde{v})\geql \overline{A_i(\langle\tilde{v}\rangle)}\in S_\mbb{A}$,
${\mf A}\models \tilde{G}_A(\tilde{u})\geql \overline{A(\langle\tilde{u}\rangle)}\in S_\mbb{A}$,
${\mf A}\models \tilde{H}_X^r(\tilde{s}^\kappa(\tilde{z}),\tilde{u})\geql \overline{\|X\|_{e_{\kappa-1}}^r(\langle\tilde{u}\rangle)}$;
${\mf A}\models \mi{time}(\tilde{s}^{\kappa-1}(\tilde{z}))\wedge \mi{uni}(\tilde{u})\rightarrow 
                \big(\tilde{H}_X(\tilde{s}^\kappa(\tilde{z}),\tilde{u})\geql
                     \bigvee_{r\in B, \mi{out}(r)=X} \tilde{H}_X^r(\tilde{s}^\kappa(\tilde{z}),\tilde{u})\big)$,
${\mf A}\models \tilde{H}_X(\tilde{s}^\kappa(\tilde{z}),\tilde{u})\geql 
                \bigvee_{r\in B, \mi{out}(r)=X} \tilde{H}_X^r(\tilde{s}^\kappa(\tilde{z}),\tilde{u})$,
${\mf A}\models \tilde{H}_X(\tilde{s}^\kappa(\tilde{z}),\tilde{u})\geql
                \bigvee_{r\in B, \mi{out}(r)=X} \overline{\|X\|_{e_{\kappa-1}}^r(\langle\tilde{u}\rangle)}$,                       
${\mf A}\models \tilde{H}_X(\tilde{s}^\kappa(\tilde{z}),\tilde{u})\geql \overline{\|X\|_{e_{\kappa-1}}^B(\langle\tilde{u}\rangle)}$,
${\mf A}\models \tilde{H}_X(\tilde{s}^\kappa(\tilde{z}),\tilde{u})\geql \overline{e_\kappa(X)(\langle\tilde{u}\rangle)}$;
${\mf A}\models S_{e_\kappa}(\tau/\tilde{s}^\kappa(\tilde{z}))$.
The statement holds.

So, in both Cases 1 and 2, the statement holds.
The induction is completed.

For $\kappa=\eta$, ${\mf A}\models S_{e_\eta}(\tau/\tilde{s}^\eta(\tilde{z}))$;
$T_D\cup S_U\cup S_\mbb{A}\cup T_B\cup S_{e_0}(\tau/\tilde{z})\models_{\mc K} S_{e_\eta}(\tau/\tilde{s}^\eta(\tilde{z}))$.
\qed 
\end{proof}

We are in position to formulate the reachability, stability, and the existence of a $k$-cycle problems.
Let $X_0,\dots,X_n\in \mbb{X}$, $A_0,\dots,A_n\in \mbb{A}$, ${\mc D}=e_0,\dots,e_\eta$ be a fuzzy derivation, $k\geq 1$.
$(A_0,\dots,A_n)$ is reachable in ${\mc D}$ iff there exists $\kappa\leq \eta$ such that for all $i\leq n$, $e_\kappa(X_i)=A_i$.  
${\mc D}$ is stable iff there exists $\kappa<\eta$ such that $e_\kappa=e_{\kappa+1}$.
There exists a $k$-cycle in ${\mc D}$ iff there exists $\kappa<\eta$ such that $\kappa+k\leq \eta$ and $e_\kappa=e_{\kappa+k}$.
Obviously, the stability problem is the existence of a $1$-cycle problem. 
The formulations of the problems can be translated to formulae of G\"{o}del logic.
$(A_0,\dots,A_n)$ is reachable in ${\mc D}$ as
$\phi_r=\exists \tau\, (\mi{time}(\tau)\wedge 
                        \bigwedge_{i=0}^n \forall x\, (\mi{uni}(x)\rightarrow \tilde{H}_{X_i}(\tau,x)\geql \tilde{G}_{A_i}(x)))$.
${\mc D}$ is stable as 
$\phi_s=\exists \tau\, (\mi{time}(\tau)\wedge 
                        \bigwedge_{X\in \mbb{X}} \forall x\, (\mi{uni}(x)\rightarrow \tilde{H}_X(\tau,x)\geql \tilde{H}_X(\tilde{s}(\tau),x)))$.
There exists a $k$-cycle in ${\mc D}$ as
$\phi_{k-c}=\exists \tau\, (\mi{time}(\tau)\wedge
                            \bigwedge_{X\in \mbb{X}} \forall x\, (\mi{uni}(x)\rightarrow 
                                                                  \tilde{H}_X(\tau,x)\geql \tilde{H}_X(\tilde{s}^k(\tau),x)))$.
The problems can be reduced to deduction problems with respect to ${\mc K}$ as follows.

\begin{theorem}
\label{T2}
Let $X_0,\dots,X_n\in \mbb{X}$, $A_0,\dots,A_n\in \mbb{A}$, ${\mc D}=e_0,\dots,e_\eta$ be a fuzzy derivation, $k\geq 1$.
\begin{enumerate}[\rm (i)]
\item
$(A_0,\dots,A_n)$ is reachable in ${\mc D}$ if and only if 
$T_D\cup S_U\cup S_\mbb{A}\cup T_B\cup S_{e_0}(\tau/\tilde{z})\models_{\mc K} \phi_r$.
\item
${\mc D}$ is stable if and only if 
$T_D\cup S_U\cup S_\mbb{A}\cup T_B\cup S_{e_0}(\tau/\tilde{z})\models_{\mc K} \phi_s$.
\item
There exists a $k$-cycle in ${\mc D}$ if and only if 
$T_D\cup S_U\cup S_\mbb{A}\cup T_B\cup S_{e_0}(\tau/\tilde{z})\models_{\mc K} \phi_{k-c}$.
\end{enumerate}
\end{theorem}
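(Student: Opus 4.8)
The three parts share a common skeleton, so the plan is to prove them in parallel, the only difference being which pair of atoms is compared under the inner $\forall x\,(\mi{uni}(x)\rightarrow\cdots)$: for (i) it is $\tilde{H}_{X_i}(\tau,x)$ against $\tilde{G}_{A_i}(x)$, for (ii) it is $\tilde{H}_X(\tau,x)$ against $\tilde{H}_X(\tilde{s}(\tau),x)$, and for (iii) against $\tilde{H}_X(\tilde{s}^k(\tau),x)$. The engine throughout is Lemma~\ref{le2}, together with the observation that its inductive proof in fact works for every $\kappa\in\mbb{N}$ (not merely $\kappa\leq\eta$): in every ${\mf A}\in{\mc K}$ with ${\mf A}\models T_D\cup S_U\cup S_\mbb{A}\cup T_B\cup S_{e_0}(\tau/\tilde{z})$ and every $\tilde{u}\in\tilde{\mbb{U}}$ one has $\|\tilde{H}_X(\tilde{s}^\kappa(\tilde{z}),\tilde{u})\|^{\mf A}=e_\kappa(X)(\langle\tilde{u}\rangle)$, where $e_\kappa$ denotes the deterministically extended derivation. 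I will also use that $S_U$ makes $\mi{uni}$ crisp, equal to $1$ on $\tilde{\mbb{U}}$ and $0$ elsewhere, and that $S_\mbb{A}$ pins $\|\tilde{G}_A(\tilde{u})\|^{\mf A}=A(\langle\tilde{u}\rangle)$.

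For the forward implication of each part I would assume the combinatorial property holds at some index $\kappa$ and instantiate the outer existential of the corresponding formula by $\tau:=\tilde{s}^\kappa(\tilde{z})$, a natural numeral, so that $\|\mi{time}(\tilde{s}^\kappa(\tilde{z}))\|^{\mf A}=1$. For the inner $\forall x$, the case $\|\mi{uni}(x)\|^{\mf A}=0$ makes the implication evaluate to $1$ outright, while for $x=\tilde{u}\in\tilde{\mbb{U}}$ the two compared truth values coincide by the determination fact: in (i), $e_\kappa(X_i)(\langle\tilde{u}\rangle)=A_i(\langle\tilde{u}\rangle)$ since $e_\kappa(X_i)=A_i$, so $\tilde{H}_{X_i}(\tau,\tilde{u})\geql\tilde{G}_{A_i}(\tilde{u})$ evaluates to $1$; in (ii) and (iii) the equalities $e_\kappa=e_{\kappa+1}$, respectively $e_\kappa=e_{\kappa+k}$, make the corresponding $\geql$-literals evaluate to $1$. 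Hence the body is $1$ in every ${\mf A}\in{\mc K}$ modelling the theory, which is exactly $T_D\cup S_U\cup S_\mbb{A}\cup T_B\cup S_{e_0}(\tau/\tilde{z})\models_{\mc K}\phi_\bullet$.

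For the backward implication I would exhibit one concrete model and read the witness off it. Take the Herbrand-style ${\mf A}^*\in{\mc K}$ on ${\mc U}=\mi{GTerm}_{{\mc L}^*}$ that interprets $\mi{nat},\mi{rat},\mi{time},\mi{uni}$ by their intended crisp predicates, sets $\tilde{G}_A(\tilde{u})=A(\langle\tilde{u}\rangle)$, and defines $\tilde{H}_X(\tilde{s}^\kappa(\tilde{z}),\tilde{u})=e_\kappa(X)(\langle\tilde{u}\rangle)$ (with $\tilde{H}_X^r$ fixed by the recurrence and all remaining values set to $0$). One verifies ${\mf A}^*\models T_D\cup S_U\cup S_\mbb{A}\cup T_B\cup S_{e_0}(\tau/\tilde{z})$, the clause-by-clause check for $T_B$ being precisely the computation carried out in the proof of Lemma~\ref{le2}. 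From the hypothesis $\models_{\mc K}\phi_\bullet$ we obtain ${\mf A}^*\models\phi_\bullet$, so the outer existential is witnessed by some ground term $t$ with $\|\mi{time}(t)\|^{{\mf A}^*}=1$; crispness of $\mi{time}$ forces $t=\tilde{s}^\kappa(\tilde{z})$, and evaluating the inner $\forall x$ on each $\tilde{u}\in\tilde{\mbb{U}}$ returns, through the determination fact, the required equalities $e_\kappa(X_i)=A_i$, $e_\kappa=e_{\kappa+1}$, or $e_\kappa=e_{\kappa+k}$.

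The main obstacle is the backward direction. First, one must check that ${\mf A}^*$ genuinely lies in ${\mc K}$ and satisfies the closed-world domain axioms of $T_D$ — that $\mi{nat}$ and $\mi{rat}$ hold on exactly the numerals and that the $\geql\gz$ axioms for the remaining function symbols are met — which is the delicate, if routine, part of the model construction. Second, and more conceptually, the outer $\exists\tau$ of $\phi_\bullet$ ranges over all natural numerals, i.e.\ over the whole deterministically forced trajectory, whereas the combinatorial properties are phrased through an index tied to the length of ${\mc D}$; the step needing care is therefore the identification of the logical time witness $\tilde{s}^\kappa(\tilde{z})$ with a step of the derivation, which rests on the determinacy of ${\mc D}$ making the trajectory encoded by $\tilde{H}$ coincide with $e_0,e_1,\dots$ at every time point.
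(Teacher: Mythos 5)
Your overall strategy is the one the paper intends: its entire proof of Theorem~\ref{T2} consists of invoking Lemma~\ref{le2} and declaring (i)--(iii) provable ``straightforwardly'', so your elaboration --- forward direction by instantiating $\exists\tau$ at $\tilde{s}^\kappa(\tilde{z})$ and using that Lemma~\ref{le2} pins $\|\tilde{H}_X(\tilde{s}^\kappa(\tilde{z}),\tilde{u})\|$ to $e_\kappa(X)(\langle\tilde{u}\rangle)$ in \emph{every} model of the theory; backward direction by exhibiting the intended Herbrand-style model and reading the witness off it --- supplies exactly the content the paper omits. Your two auxiliary observations are the right ones: the body of $\phi_\bullet$ is crisp (so the supremum realizing $\exists\tau$ is attained when it equals $1$), and $T_D$ forces $\mi{time}$ to hold on precisely the natural numerals, so the witness must be some $\tilde{s}^\kappa(\tilde{z})$. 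The satisfiability check for ${\mf A}^*$ is indeed just the computation of Lemma~\ref{le2} re-run, and it is genuinely needed: without a model in ${\mc K}$ the consequence $\models_{\mc K}\phi_\bullet$ would hold vacuously and the backward direction would collapse.

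The point you flag in your last paragraph but do not close is a genuine gap, and you should be precise about why ``determinacy'' does not close it. The backward direction only delivers a witness $\kappa\in\mbb{N}$ for which the \emph{deterministic extension} of the trajectory satisfies $e_\kappa(X_i)=A_i$ (resp.\ $e_\kappa=e_{\kappa+1}$, $e_\kappa=e_{\kappa+k}$), whereas reachability, stability and the existence of a $k$-cycle are defined with the bounds $\kappa\leq\eta$, $\kappa<\eta$, $\kappa+k\leq\eta$ tied to the given finite derivation ${\mc D}=e_0,\dots,e_\eta$. If the property first occurs at a step beyond $\eta$, then $\models_{\mc K}\phi_\bullet$ holds (apply your forward argument to the longer prefix) while the stated combinatorial property of ${\mc D}$ fails, so the equivalence as written needs either an additional hypothesis on $\eta$ or a reading of the combinatorial notions over the full infinite trajectory. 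This defect is inherited from the theorem as literally stated and is equally present in the paper's one-line proof; apart from it, your argument is sound and is the same argument the paper gestures at.
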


\begin{proof}
By Lemma \ref{le2}, for all $\kappa\leq \eta$, 
$T_D\cup S_U\cup S_\mbb{A}\cup T_B\cup S_{e_0}(\tau/\tilde{z})\models_{\mc K} S_{e_\kappa}(\tau/\tilde{s}^\kappa(\tilde{z}))$.
Then (i--iii) can be proved straightforwardly.
\qed 
\end{proof}

Subsequently, the deduction problems can be reduced to unsatisfiability problems in ${\mc K}$ using Theorem \ref{T1}.

We illustrate the proposed translation by the following example on a fuzzy inference system modelling a simple thermodynamic system.
We shall model an engine with inner combustion and cooling medium.
We shall consider physical quantities such as the temperature (t), density (d) of the cooling medium, and the rotation (r) of the engine 
together with their first derivatives.
For simplicity, we put the universum $\mbb{U}=\{0,1,2,3,4\}$ (in this case, $\gamma$ is just the identity mapping on $\mbb{U}$).
For every physical quantity $k$, we define fuzzy sets $\mi{low}_k$, $\mi{medium}_k$, $\mi{high}_k$, and
for its derivative $\dot{k}$, fuzzy sets $\mi{negative}_{\dot{k}}$, $\mi{zero}_{\dot{k}}$, $\mi{positive}_{\dot{k}}$ in Table~\ref{tab4}.
Hence, we put 
$\mbb{A}=\bigcup_{k\in \{t,d,r\}} \{\mi{low}_k,\mi{medium}_k,\mi{high}_k,\mi{negative}_{\dot{k}},\mi{zero}_{\dot{k}},              \linebreak[4]
                                    \mi{positive}_{\dot{k}}\}$ and 
$\mbb{X}=\{X_i \,|\, i\leq 5\}$ where the variables $X_0$ and $X_3$ correspond to the temperature and its derivative, 
                                      $X_1$ and $X_4$ to the density and its derivative,
                                      $X_2$ and $X_5$ to the rotation and its derivative, respectively.
In Table \ref{tab4}, an underlying fuzzy rule base $B$ is devised. 
\begin{table*}[p]
\vspace{-6mm}
\caption{Fuzzy inference system}\label{tab4}
\vspace{-6mm}
\centering
\begin{minipage}[t]{\linewidth-15mm}
\footnotesize
\begin{IEEEeqnarray*}{LLLL}
\hline \hline \\[1mm]
\IEEEeqnarraymulticol{4}{c}{\text{\bf \normalsize{Fuzzy sets}}} \\[2mm]
\text{\bf \small{Quantity:}}\quad & \mi{low}_k=\Big\{\frac{1}{0},\frac{0.5}{1},\frac{0}{2},\frac{0}{3},\frac{0}{4}\Big\} \quad
                                  & \mi{medium}_k=\Big\{\frac{0}{0},\frac{0.5}{1},\frac{1}{2},\frac{0.5}{3},\frac{0}{4}\Big\} \quad
                                  & \mi{high}_k=\Big\{\frac{0}{0},\frac{0}{1},\frac{0}{2},\frac{0.5}{3},\frac{1}{4}\Big\} \\[1mm]
\IEEEeqnarraymulticol{4}{l}{\text{$k$ stands for $t$ -- temperature, $d$ -- density, $r$ -- rotation}} \\[2mm] 
\text{\bf \small{Derivative:}}\quad & \mi{negative}_{\dot{k}}=\Big\{\frac{1}{0},\frac{0.5}{1},\frac{0}{2},\frac{0}{3},\frac{0}{4}\Big\} \quad
                                    & \mi{zero}_{\dot{k}}=\Big\{\frac{0}{0},\frac{0.5}{1},\frac{1}{2},\frac{0.5}{3},\frac{0}{4}\Big\} \quad
                                    & \mi{positive}_{\dot{k}}=\Big\{\frac{0}{0},\frac{0}{1},\frac{0}{2},\frac{0.5}{3},\frac{1}{4}\Big\} \\[1mm]  
\IEEEeqnarraymulticol{4}{l}{\text{$\dot{k}$ stands for $\dot{t}$ -- derivative of temperature, 
                                                       $\dot{d}$ -- derivative of density, 
                                                       $\dot{r}$ -- derivative of rotation}} 
\end{IEEEeqnarray*}
\vspace{-2mm}
\begin{IEEEeqnarray*}{LL}
\IEEEeqnarraymulticol{2}{c}{\text{\bf \normalsize{Fuzzy rule base}}} \\[2mm]
\IEEEeqnarraymulticol{2}{c}{\text{\small{$B=\big\{R_i \,|\, i\in \{1,\dots,29\}\cup \{30a,\dots,32a,30b,\dots,32b\}\cup
                                                                 \{33,\dots,47\}\cup \{48a,\dots,50a,48b,\dots,50b\}\cup \{51,\dots,56\}
                                            \big\}$}}} \\[4mm]
\IEEEeqnarraymulticol{2}{l}{\text{\bf \small{Inversion rules:}}} \\[0mm]
R_1:    & 
\mib{IF}\, X_0\, \mi{is}\, \mi{low}_t\, \mib{THEN}\, X_1\, \mi{is}\, \mi{high}_d \\[0mm]
R_2:    &
\mib{IF}\, X_0\, \mi{is}\, \mi{medium}_t\, \mib{THEN}\, X_1\, \mi{is}\, \mi{medium}_d \\[0mm]
R_3:    &
\mib{IF}\, X_0\, \mi{is}\, \mi{high}_t\, \mib{THEN}\, X_1\, \mi{is}\, \mi{low}_d \\[-16.85mm]
\IEEEeqnarraymulticol{2}{l}{\hspace{80mm}\text{\bf \small{Thermic rules:}}} \\[0mm]
\IEEEeqnarraymulticol{2}{l}{\hspace{80mm}
R_4:    \ 
\mib{IF}\, X_2\, \mi{is}\, \mi{low}_r\, \mib{THEN}\, X_3\, \mi{is}\, \mi{negative}_{\dot{t}}} \\[0mm]
\IEEEeqnarraymulticol{2}{l}{\hspace{80mm}
R_5:    \ 
\mib{IF}\, X_2\, \mi{is}\, \mi{medium}_r\, \mib{THEN}\, X_3\, \mi{is}\, \mi{zero}_{\dot{t}}} \\[0mm]
\IEEEeqnarraymulticol{2}{l}{\hspace{80mm}
R_6:    \
\mib{IF}\, X_2\, \mi{is}\, \mi{high}_r\, \mib{THEN}\, X_3\, \mi{is}\, \mi{positive}_{\dot{t}}} \\[2mm] 
\IEEEeqnarraymulticol{2}{l}{\text{\bf \small{Friction rules:}}} \\[0mm]
R_7:    & 
\mib{IF}\, X_1\, \mi{is}\, \mi{medium}_d\, \mi{and}\, X_2\, \mi{is}\, \mi{high}_r\, \mib{THEN}\, X_5\, \mi{is}\, \mi{zero}_{\dot{r}} \\[0mm]
R_8:    & 
\mib{IF}\, X_1\, \mi{is}\, \mi{high}_d\, \mi{and}\, X_2\, \mi{is}\, \mi{high}_r\, \mib{THEN}\, X_5\, \mi{is}\, \mi{negative}_{\dot{r}} \\[2mm]
\IEEEeqnarraymulticol{2}{l}{\text{\bf \small{Dynamic rules:}}} \\[0mm]
R_{9-11}:    & 
\mib{IF}\, X_i\, \mi{is}\, \mi{medium}_k\, \mi{and}\, X_{i+3}\, \mi{is}\, \mi{negative}_{\dot{k}}\,
\mib{THEN}\, X_i\, \mi{is}\, \mi{low}_k,\ (i,k)\in \{(0,t),(1,d),(2,r)\} \\[0mm]
R_{12-14}: & 
\mib{IF}\, X_i\, \mi{is}\, \mi{high}_k\, \mi{and}\, X_{i+3}\, \mi{is}\, \mi{negative}_{\dot{k}}\,
\mib{THEN}\, X_i\, \mi{is}\, \mi{medium}_k,\ (i,k)\in \{(0,t),(1,d),(2,r)\} \\[0mm]
R_{15-23}: &
\mib{IF}\, X_i\, \mi{is}\, \mi{W}_k\, \mi{and}\, X_{i+3}\, \mi{is}\, \mi{zero}_{\dot{k}}\,
\mib{THEN}\, X_i\, \mi{is}\, \mi{W}_k,\ (i,k)\in \{(0,t),(1,d),(2,r)\}, \mi{W}\in \{\mi{low},\mi{medium},\mi{high}\} \\[0mm]
R_{24-26}: &
\mib{IF}\, X_i\, \mi{is}\, \mi{low}_k\, \mi{and}\, X_{i+3}\, \mi{is}\, \mi{positive}_{\dot{k}}\,
\mib{THEN}\, X_i\, \mi{is}\, \mi{medium}_k,\ (i,k)\in \{(0,t),(1,d),(2,r)\} \\[0mm]
R_{27-29}: &
\mib{IF}\, X_i\, \mi{is}\, \mi{medium}_k\, \mi{and}\, X_{i+3}\, \mi{is}\, \mi{positive}_{\dot{k}}\,  
\mib{THEN}\, X_i\, \mi{is}\, \mi{high}_k,\ (i,k)\in \{(0,t),(1,d),(2,r)\} \\[2mm]
\IEEEeqnarraymulticol{2}{l}{\text{\bf \small{Limit rules:}}} \\[0mm]
R_{30a-32a}: &
\mib{IF}\, X_i\, \mi{is}\, \mi{low}_k\, \mi{and}\, X_{i+3}\, \mi{is}\, \mi{negative}_{\dot{k}}\, 
\mib{THEN}\, X_i\, \mi{is}\, \mi{low}_k,\ (i,k)\in \{(0,t),(1,d),(2,r)\} \\[0mm]
R_{30b-32b}: &
\mib{IF}\, X_i\, \mi{is}\, \mi{low}_k\, \mi{and}\, X_{i+3}\, \mi{is}\, \mi{negative}_{\dot{k}}\,
\mib{THEN}\, X_{i+3}\, \mi{is}\, \mi{positive}_{\dot{k}},\ (i,k)\in \{(0,t),(1,d),(2,r)\} \\[0mm]
R_{33-38}: &
\mib{IF}\, X_i\, \mi{is}\, \mi{low}_k\, \mi{and}\, X_{i+3}\, \mi{is}\, \mi{W}_{\dot{k}}\,
\mib{THEN}\, X_{i+3}\, \mi{is}\, \mi{W}_{\dot{k}},\ (i,k)\in \{(0,t),(1,d),(2,r)\}, \mi{W}\in \{\mi{zero},\mi{positive}\} \\[0mm]
R_{39-47}: &
\mib{IF}\, X_i\, \mi{is}\, \mi{medium}_k\, \mi{and}\, X_{i+3}\, \mi{is}\, \mi{W}_{\dot{k}}\,       
\mib{THEN}\, X_{i+3}\, \mi{is}\, \mi{W}_{\dot{k}},\ (i,k)\in \{(0,t),(1,d),(2,r)\}, \mi{W}\in \{\mi{negative},\mi{zero},\mi{positive}\} \\[0mm]
R_{48a-50a}: &
\mib{IF}\, X_i\, \mi{is}\, \mi{high}_k\, \mi{and}\, X_{i+3}\, \mi{is}\, \mi{positive}_{\dot{k}}\,
\mib{THEN}\, X_i\, \mi{is}\, \mi{high}_k,\ (i,k)\in \{(0,t),(1,d),(2,r)\} \\[0mm]
R_{48b-50b}: &
\mib{IF}\, X_i\, \mi{is}\, \mi{high}_k\, \mi{and}\, X_{i+3}\, \mi{is}\, \mi{positive}_{\dot{k}}\,
\mib{THEN}\, X_{i+3}\, \mi{is}\, \mi{negative}_{\dot{k}},\ (i,k)\in \{(0,t),(1,d),(2,r)\} \\[0mm]
R_{51-56}: &
\mib{IF}\, X_i\, \mi{is}\, \mi{high}_k\, \mi{and}\, X_{i+3}\, \mi{is}\, \mi{W}_{\dot{k}}\,
\mib{THEN}\, X_{i+3}\, \mi{is}\, \mi{W}_{\dot{k}},\ (i,k)\in \{(0,t),(1,d),(2,r)\}, \mi{W}\in \{\mi{negative},\mi{zero}\} \\[1mm]
\hline \hline 
\end{IEEEeqnarray*}
\end{minipage}  
\vspace{-2mm}
\end{table*}
We put $\tilde{\mbb{U}}=\{\mi{frac}(\tilde{z},\tilde{s}(\tilde{z})),\mi{frac}(\tilde{s}(\tilde{z}),\tilde{s}(\tilde{z})),
                          \mi{frac}(\tilde{s}^2(\tilde{z}),\tilde{s}(\tilde{z})),                                                  \linebreak[4]      
                          \mi{frac}(\tilde{s}^3(\tilde{z}),\tilde{s}(\tilde{z})),
                          \mi{frac}(\tilde{s}^4(\tilde{z}),\tilde{s}(\tilde{z}))\}$ 
(for simplicity, we denote $\mi{frac}(\tilde{s}^n(\tilde{z}),\tilde{s}(\tilde{z}))$ as $\tilde{n}$) and
$S_U=\{\mi{uni}(\tilde{u})\geql \gu \,|\, \tilde{u}\in \tilde{\mbb{U}}\}\cup 
     \{\mi{uni}(t)\geql \gz \,|\, t\in \mi{GTerm}_{{\mc L}^*}-\tilde{\mbb{U}}\}$.
Subsequently, we can translate the proposed fuzzy sets and fuzzy rules 
to obtain $S_\mbb{A}$ (Table~\ref{tab5}) and $T_B$ (Tables~\ref{tab55}--\ref{tab555}).
\begin{table*}[p]
\vspace{-6mm}
\caption{Translation of the fuzzy sets}\label{tab5}
\vspace{-6mm}
\centering
\begin{minipage}[t]{\linewidth-15mm}
\scriptsize
\begin{IEEEeqnarray*}{LLL}
\hline \hline \\[1mm]
\scaleto{S_\mbb{A}}{10pt}= &
\scaleto{\bigcup_{k\in \{t,d,r\}}}{27pt} \Big\{ & 
\tilde{G}_{\mi{low}_k}(\tilde{0})\geql \bar{1}, 
\tilde{G}_{\mi{low}_k}(\tilde{1})\geql \overline{0.5},
\tilde{G}_{\mi{low}_k}(\tilde{2})\geql \bar{0},
\tilde{G}_{\mi{low}_k}(\tilde{3})\geql \bar{0},
\tilde{G}_{\mi{low}_k}(\tilde{4})\geql \bar{0}, \\[0mm]
& &
\tilde{G}_{\mi{medium}_k}(\tilde{0})\geql \bar{0},
\tilde{G}_{\mi{medium}_k}(\tilde{1})\geql \overline{0.5},
\tilde{G}_{\mi{medium}_k}(\tilde{2})\geql \bar{1},
\tilde{G}_{\mi{medium}_k}(\tilde{3})\geql \overline{0.5},
\tilde{G}_{\mi{medium}_k}(\tilde{4})\geql \bar{0}, \\[0mm]
& &
\tilde{G}_{\mi{high}_k}(\tilde{0})\geql \bar{0},
\tilde{G}_{\mi{high}_k}(\tilde{1})\geql \bar{0},
\tilde{G}_{\mi{high}_k}(\tilde{2})\geql \bar{0},
\tilde{G}_{\mi{high}_k}(\tilde{3})\geql \overline{0.5},
\tilde{G}_{\mi{high}_k}(\tilde{4})\geql \bar{1}\Big\}\operatorname{\scaleto{\cup}{7pt}} \\[2mm]
& 
\scaleto{\bigcup_{k\in \{t,d,r\}}}{27pt} \Big\{ &
\tilde{G}_{\mi{negative}_{\dot{k}}}(\tilde{0})\geql \bar{1},
\tilde{G}_{\mi{negative}_{\dot{k}}}(\tilde{1})\geql \overline{0.5},
\tilde{G}_{\mi{negative}_{\dot{k}}}(\tilde{2})\geql \bar{0},  
\tilde{G}_{\mi{negative}_{\dot{k}}}(\tilde{3})\geql \bar{0},  
\tilde{G}_{\mi{negative}_{\dot{k}}}(\tilde{4})\geql \bar{0}, \\[0mm]
& &
\tilde{G}_{\mi{zero}_{\dot{k}}}(\tilde{0})\geql \bar{0},
\tilde{G}_{\mi{zero}_{\dot{k}}}(\tilde{1})\geql \overline{0.5},
\tilde{G}_{\mi{zero}_{\dot{k}}}(\tilde{2})\geql \bar{1},  
\tilde{G}_{\mi{zero}_{\dot{k}}}(\tilde{3})\geql \overline{0.5},
\tilde{G}_{\mi{zero}_{\dot{k}}}(\tilde{4})\geql \bar{0}, \\[0mm]
& &
\tilde{G}_{\mi{positive}_{\dot{k}}}(\tilde{0})\geql \bar{0},
\tilde{G}_{\mi{positive}_{\dot{k}}}(\tilde{1})\geql \bar{0},
\tilde{G}_{\mi{positive}_{\dot{k}}}(\tilde{2})\geql \bar{0},
\tilde{G}_{\mi{positive}_{\dot{k}}}(\tilde{3})\geql \overline{0.5},
\tilde{G}_{\mi{positive}_{\dot{k}}}(\tilde{4})\geql \bar{1}\Big\} \\[1mm]
\hline \hline 
\end{IEEEeqnarray*}
\end{minipage}
\vspace{-2mm}
\end{table*}
\begin{table*}[p]
\vspace{-6mm}
\caption{Translation of the fuzzy rule base $B$}\label{tab55}
\vspace{-6mm}
\centering
\hspace*{-2.5mm}
\begin{minipage}[t]{\linewidth+5mm}
\scriptsize
\begin{IEEEeqnarray*}{L}
\hline \hline \\[1mm]
\scaleto{T_B=\big\{\phi_i(\tau,y) \,|\, i\in \{1,\dots,29\}\cup \{30a,\dots,32a,30b,\dots,32b\}\cup 
                                             \{33,\dots,47\}\cup \{48a,\dots,50a,48b,\dots,50b\}\cup \{51,\dots,56\}\big\}\cup \mbox{}}{10pt} \\[0mm]
\phantom{\scaleto{T_B=\vphantom{\big\{}}{10pt}}
\scaleto{\big\{\text{\bf Aggregation rules}\big\}}{10pt} \\[2mm]
\text{\bf \small{Inversion rules:}} \\[0mm]
\phi_1(\tau,y)=\big(\mi{time}(\tau)\wedge \mi{uni}(y)\rightarrow 
                    \big(\tilde{H}_{X_1}^1(\tilde{s}(\tau),y)\geql
                         (\exists x\, (\mi{uni}(x)\wedge \tilde{H}_{X_0}(\tau,x)\wedge \tilde{G}_{\mi{low}_t}(x))\wedge \tilde{G}_{\mi{high}_d}(y))\big)\big) \\[0mm]
\phi_2(\tau,y)=\big(\mi{time}(\tau)\wedge \mi{uni}(y)\rightarrow 
                    \big(\tilde{H}_{X_1}^2(\tilde{s}(\tau),y)\geql
                         (\exists x\, (\mi{uni}(x)\wedge \tilde{H}_{X_0}(\tau,x)\wedge \tilde{G}_{\mi{medium}_t}(x))\wedge \tilde{G}_{\mi{medium}_d}(y))\big)\big) \\[0mm]
\phi_3(\tau,y)=\big(\mi{time}(\tau)\wedge \mi{uni}(y)\rightarrow 
                    \big(\tilde{H}_{X_1}^3(\tilde{s}(\tau),y)\geql
                         (\exists x\, (\mi{uni}(x)\wedge \tilde{H}_{X_0}(\tau,x)\wedge \tilde{G}_{\mi{high}_t}(x))\wedge \tilde{G}_{\mi{low}_d}(y))\big)\big) \\[2mm]
\text{\bf \small{Thermic rules:}} \\[0mm]
\phi_4(\tau,y)=\big(\mi{time}(\tau)\wedge \mi{uni}(y)\rightarrow 
                    \big(\tilde{H}_{X_3}^4(\tilde{s}(\tau),y)\geql
                         (\exists x\, (\mi{uni}(x)\wedge \tilde{H}_{X_2}(\tau,x)\wedge \tilde{G}_{\mi{low}_r}(x))\wedge \tilde{G}_{\mi{negative}_{\dot{t}}}(y))\big)\big) \\[0mm]
\phi_5(\tau,y)=\big(\mi{time}(\tau)\wedge \mi{uni}(y)\rightarrow 
                    \big(\tilde{H}_{X_3}^5(\tilde{s}(\tau),y)\geql
                         (\exists x\, (\mi{uni}(x)\wedge \tilde{H}_{X_2}(\tau,x)\wedge \tilde{G}_{\mi{medium}_r}(x))\wedge \tilde{G}_{\mi{zero}_{\dot{t}}}(y))\big)\big) \\[0mm]
\phi_6(\tau,y)=\big(\mi{time}(\tau)\wedge \mi{uni}(y)\rightarrow 
                    \big(\tilde{H}_{X_3}^6(\tilde{s}(\tau),y)\geql
                         (\exists x\, (\mi{uni}(x)\wedge \tilde{H}_{X_2}(\tau,x)\wedge \tilde{G}_{\mi{high}_r}(x))\wedge \tilde{G}_{\mi{positive}_{\dot{t}}}(y))\big)\big) \\[2mm]
\text{\bf \small{Friction rules:}} \\[0mm]
\phi_7(\tau,y)=\big(\mi{time}(\tau)\wedge \mi{uni}(y)\rightarrow \\[0mm]
\phantom{\phi_7(\tau,y)=\big(}
                    \big(\tilde{H}_{X_5}^7(\tilde{s}(\tau),y)\geql
                         (\exists x\, (\mi{uni}(x)\wedge \tilde{H}_{X_1}(\tau,x)\wedge \tilde{G}_{\mi{medium}_d}(x))\wedge
                          \exists x\, (\mi{uni}(x)\wedge \tilde{H}_{X_2}(\tau,x)\wedge \tilde{G}_{\mi{high}_r}(x))\wedge 
                          \tilde{G}_{\mi{zero}_{\dot{r}}}(y))\big)\big) \\[0mm]
\phi_8(\tau,y)=\big(\mi{time}(\tau)\wedge \mi{uni}(y)\rightarrow \\[0mm]
\phantom{\phi_8(\tau,y)=\big(}
                    \big(\tilde{H}_{X_5}^8(\tilde{s}(\tau),y)\geql
                         (\exists x\, (\mi{uni}(x)\wedge \tilde{H}_{X_1}(\tau,x)\wedge \tilde{G}_{\mi{high}_d}(x))\wedge
                          \exists x\, (\mi{uni}(x)\wedge \tilde{H}_{X_2}(\tau,x)\wedge \tilde{G}_{\mi{high}_r}(x))\wedge   
                          \tilde{G}_{\mi{negative}_{\dot{r}}}(y))\big)\big) \\[2mm]
\text{\bf \small{Dynamic rules:}} \\[0mm]
\phi_{9-11}(\tau,y)=\big(\mi{time}(\tau)\wedge \mi{uni}(y)\rightarrow \\[0mm]
\phantom{\phi_{9-11}(\tau,y)=\big(}
                         \big(\tilde{H}_{X_i}^r(\tilde{s}(\tau),y)\geql
                              (\exists x\, (\mi{uni}(x)\wedge \tilde{H}_{X_i}(\tau,x)\wedge \tilde{G}_{\mi{medium}_k}(x))\wedge
                               \exists x\, (\mi{uni}(x)\wedge \tilde{H}_{X_{i+3}}(\tau,x)\wedge \tilde{G}_{\mi{negative}_{\dot{k}}}(x))\wedge
                               \tilde{G}_{\mi{low}_k}(y))\big)\big), \\[0mm]
\mbox{}\hfill (r,i,k)\in \{(9,0,t),(10,1,d),(11,2,r)\} \\[0mm]
\phi_{12-14}(\tau,y)=\big(\mi{time}(\tau)\wedge \mi{uni}(y)\rightarrow \\[0mm]
\phantom{\phi_{12-14}(\tau,y)=\big(}
                          \big(\tilde{H}_{X_i}^r(\tilde{s}(\tau),y)\geql
                               (\exists x\, (\mi{uni}(x)\wedge \tilde{H}_{X_i}(\tau,x)\wedge \tilde{G}_{\mi{high}_k}(x))\wedge
                                \exists x\, (\mi{uni}(x)\wedge \tilde{H}_{X_{i+3}}(\tau,x)\wedge \tilde{G}_{\mi{negative}_{\dot{k}}}(x))\wedge
                                \tilde{G}_{\mi{medium}_k}(y))\big)\big), \\[0mm] 
\mbox{}\hfill (r,i,k)\in \{(12,0,t),(13,1,d),(14,2,r)\} \\[0mm]
\phi_{15-23}(\tau,y)=\big(\mi{time}(\tau)\wedge \mi{uni}(y)\rightarrow \\[0mm]
\phantom{\phi_{15-23}(\tau,y)=\big(}
                          \big(\tilde{H}_{X_i}^r(\tilde{s}(\tau),y)\geql
                               (\exists x\, (\mi{uni}(x)\wedge \tilde{H}_{X_i}(\tau,x)\wedge \tilde{G}_{\mi{W}_k}(x))\wedge  
                                \exists x\, (\mi{uni}(x)\wedge \tilde{H}_{X_{i+3}}(\tau,x)\wedge \tilde{G}_{\mi{zero}_{\dot{k}}}(x))\wedge
                                \tilde{G}_{\mi{W}_k}(y))\big)\big), \\[0mm] 
\mbox{}\hfill \begin{alignedat}{1}
              & (r,i,k,\mi{W})\in \{(15,0,t,\mi{low}),(16,0,t,\mi{medium}),(17,0,t,\mi{high}),
                                    (18,1,d,\mi{low}),(19,1,d,\mi{medium}),(20,1,d,\mi{high}), \\[0mm]
              & \phantom{(r,i,k,\mi{W})\in \{}
                                    (21,2,r,\mi{low}),(22,2,r,\mi{medium}),(23,2,r,\mi{high})\} 
              \end{alignedat} \\[0mm]
\phi_{24-26}(\tau,y)=\big(\mi{time}(\tau)\wedge \mi{uni}(y)\rightarrow \\[0mm]
\phantom{\phi_{24-26}(\tau,y)=\big(}
                          \big(\tilde{H}_{X_i}^r(\tilde{s}(\tau),y)\geql
                               (\exists x\, (\mi{uni}(x)\wedge \tilde{H}_{X_i}(\tau,x)\wedge \tilde{G}_{\mi{low}_k}(x))\wedge
                                \exists x\, (\mi{uni}(x)\wedge \tilde{H}_{X_{i+3}}(\tau,x)\wedge \tilde{G}_{\mi{positive}_{\dot{k}}}(x))\wedge
                                \tilde{G}_{\mi{medium}_k}(y))\big)\big), \\[0mm] 
\mbox{}\hfill (r,i,k)\in \{(24,0,t),(25,1,d),(26,2,r)\} \\[0mm]
\phi_{27-29}(\tau,y)=\big(\mi{time}(\tau)\wedge \mi{uni}(y)\rightarrow \\[0mm]
\phantom{\phi_{27-29}(\tau,y)=\big(}
                          \big(\tilde{H}_{X_i}^r(\tilde{s}(\tau),y)\geql
                               (\exists x\, (\tilde{H}_{X_i}(\tau,x)\wedge \tilde{G}_{\mi{medium}_k}(x))\wedge
                                \exists x\, (\tilde{H}_{X_{i+3}}(\tau,x)\wedge \tilde{G}_{\mi{positive}_{\dot{k}}}(x))\wedge
                                \tilde{G}_{\mi{high}_k}(y))\big)\big), \\[0mm] 
\mbox{}\hfill (r,i,k)\in \{(27,0,t),(28,1,d),(29,2,r)\} \\[1mm]
\hline \hline
\end{IEEEeqnarray*}
\end{minipage}
\vspace{-2mm}
\end{table*}
\begin{table*}[p]
\vspace{-6mm}
\caption{Translation of the fuzzy rule base $B$}\label{tab555}
\vspace{-6mm}
\centering
\hspace*{-2.5mm}
\begin{minipage}[t]{\linewidth+5mm}
\scriptsize
\begin{IEEEeqnarray*}{L}
\hline \hline \\[1mm]
\text{\bf \small{Limit rules:}} \\[0mm]
\phi_{30a-32a}(\tau,y)=\big(\mi{time}(\tau)\wedge \mi{uni}(y)\rightarrow \\[0mm]
\phantom{\phi_{30a-32a}(\tau,y)=\big(}
                            \big(\tilde{H}_{X_{i+3}}^r(\tilde{s}(\tau),y)\geql
                                 (\exists x\, (\mi{uni}(x)\wedge \tilde{H}_{X_i}(\tau,x)\wedge \tilde{G}_{\mi{low}_k}(x))\wedge
                                  \exists x\, (\mi{uni}(x)\wedge \tilde{H}_{X_{i+3}}(\tau,x)\wedge \tilde{G}_{\mi{negative}_{\dot{k}}}(x))\wedge
                                  \tilde{G}_{\mi{low}_k}(y))\big)\big), \\[0mm]
\mbox{}\hfill (r,i,k)\in \{(30a,0,t),(31a,1,d),(32a,2,r)\} \\[0mm]
\phi_{30b-32b}(\tau,y)=\big(\mi{time}(\tau)\wedge \mi{uni}(y)\rightarrow \\[0mm]
\phantom{\phi_{30b-32b}(\tau,y)=\big(}
                            \big(\tilde{H}_{X_{i+3}}^r(\tilde{s}(\tau),y)\geql
                                 (\exists x\, (\mi{uni}(x)\wedge \tilde{H}_{X_i}(\tau,x)\wedge \tilde{G}_{\mi{low}_k}(x))\wedge
                                  \exists x\, (\mi{uni}(x)\wedge \tilde{H}_{X_{i+3}}(\tau,x)\wedge \tilde{G}_{\mi{negative}_{\dot{k}}}(x))\wedge
                                  \tilde{G}_{\mi{positive}_{\dot{k}}}(y))\big)\big), \\[0mm]
\mbox{}\hfill (r,i,k)\in \{(30b,0,t),(31b,1,d),(32b,2,r)\} \\[0mm]
\phi_{33-38}(\tau,y)=\big(\mi{time}(\tau)\wedge \mi{uni}(y)\rightarrow \\[0mm]
\phantom{\phi_{33-38}(\tau,y)=\big(}
                          \big(\tilde{H}_{X_{i+3}}^r(\tilde{s}(\tau),y)\geql
                               (\exists x\, (\mi{uni}(x)\wedge \tilde{H}_{X_i}(\tau,x)\wedge \tilde{G}_{\mi{low}_k}(x))\wedge
                                \exists x\, (\mi{uni}(x)\wedge \tilde{H}_{X_{i+3}}(\tau,x)\wedge \tilde{G}_{\mi{W}_{\dot{k}}}(x))\wedge
                                \tilde{G}_{\mi{W}_{\dot{k}}}(y))\big)\big), \\[0mm]
\mbox{}\hfill (r,i,k,\mi{W})\in \{(33,0,t,\mi{zero}),(34,0,t,\mi{positive}),(35,1,d,\mi{zero}),(36,1,d,\mi{positive}), 
                                  (37,2,r,\mi{zero}),(38,2,r,\mi{positive})\} \\[0mm]
\phi_{39-47}(\tau,y)=\big(\mi{time}(\tau)\wedge \mi{uni}(y)\rightarrow \\[0mm]
\phantom{\phi_{39-47}(\tau,y)=\big(}
                          \big(\tilde{H}_{X_{i+3}}^r(\tilde{s}(\tau),y)\geql
                               (\exists x\, (\mi{uni}(x)\wedge \tilde{H}_{X_i}(\tau,x)\wedge \tilde{G}_{\mi{medium}_k}(x))\wedge
                                \exists x\, (\mi{uni}(x)\wedge \tilde{H}_{X_{i+3}}(\tau,x)\wedge \tilde{G}_{\mi{W}_{\dot{k}}}(x))\wedge
                                \tilde{G}_{\mi{W}_{\dot{k}}}(y))\big)\big), \\[0mm]
\mbox{}\hfill \begin{alignedat}{1}
              & (r,i,k,\mi{W})\in \{(39,0,t,\mi{negative}),(40,0,t,\mi{zero}),(41,0,t,\mi{positive}), 
                                    (42,1,d,\mi{negative}),(43,1,d,\mi{zero}),(44,1,d,\mi{positive}), \\[0mm]
              & \phantom{(r,i,k,\mi{W})\in \{}
                                    (45,2,r,\mi{negative}),(46,2,r,\mi{zero}),(47,2,r,\mi{positive})\} 
              \end{alignedat} \\[0mm]
\phi_{48a-50a}(\tau,y)=\big(\mi{time}(\tau)\wedge \mi{uni}(y)\rightarrow \\[0mm]
\phantom{\phi_{48a-50a}(\tau,y)=\big(}
                            \big(\tilde{H}_{X_{i+3}}^r(\tilde{s}(\tau),y)\geql
                                 (\exists x\, (\mi{uni}(x)\wedge \tilde{H}_{X_i}(\tau,x)\wedge \tilde{G}_{\mi{high}_k}(x))\wedge
                                  \exists x\, (\mi{uni}(x)\wedge \tilde{H}_{X_{i+3}}(\tau,x)\wedge \tilde{G}_{\mi{positive}_{\dot{k}}}(x))\wedge
                                  \tilde{G}_{\mi{high}_k}(y))\big)\big), \\[0mm]
\mbox{}\hfill (r,i,k)\in \{(48a,0,t),(49a,1,d),(50a,2,r)\} \\[0mm]
\phi_{48b-50b}(\tau,y)=\big(\mi{time}(\tau)\wedge \mi{uni}(y)\rightarrow \\[0mm]
\phantom{\phi_{48b-50b}(\tau,y)=\big(}
                            \big(\tilde{H}_{X_{i+3}}^r(\tilde{s}(\tau),y)\geql  
                                 (\exists x\, (\mi{uni}(x)\wedge \tilde{H}_{X_i}(\tau,x)\wedge \tilde{G}_{\mi{high}_k}(x))\wedge 
                                  \exists x\, (\mi{uni}(x)\wedge \tilde{H}_{X_{i+3}}(\tau,x)\wedge \tilde{G}_{\mi{positive}_{\dot{k}}}(x))\wedge  
                                  \tilde{G}_{\mi{negative}_{\dot{k}}}(y))\big)\big), \\[0mm]  
\mbox{}\hfill (r,i,k)\in \{(48b,0,t),(49b,1,d),(50b,2,r)\} \\[0mm]
\phi_{51-56}(\tau,y)=\big(\mi{time}(\tau)\wedge \mi{uni}(y)\rightarrow \\[0mm]
\phantom{\phi_{51-56}(\tau,y)=\big(}
                          \big(\tilde{H}_{X_{i+3}}^r(\tilde{s}(\tau),y)\geql
                               (\exists x\, (\mi{uni}(x)\wedge \tilde{H}_{X_i}(\tau,x)\wedge \tilde{G}_{\mi{high}_k}(x))\wedge
                                \exists x\, (\mi{uni}(x)\wedge \tilde{H}_{X_{i+3}}(\tau,x)\wedge \tilde{G}_{\mi{W}_{\dot{k}}}(x))\wedge
                                \tilde{G}_{\mi{W}_{\dot{k}}}(y))\big)\big), \\[0mm]
\mbox{}\hfill (r,i,k,\mi{W})\in \{(51,0,t,\mi{negative}),(52,0,t,\mi{zero}),(53,1,d,\mi{negative}),(54,1,d,\mi{zero}),
                                  (55,2,r,\mi{negative}),(56,2,r,\mi{zero})\} \\[2mm]
\text{\bf \small{Aggregation rules:}} \\[0mm]
\mi{time}(\tau)\wedge \mi{uni}(y)\rightarrow \big(\tilde{H}_X(\tilde{s}(\tau),y)\geql 
                                                  \bigvee_{R_i\in B, \mi{out}(R_i)=X} \tilde{H}_X^i(\tilde{s}(\tau),y)\big),\ 
X\in \mbb{X} \\[1mm]
\hline \hline
\end{IEEEeqnarray*}
\end{minipage}     
\vspace{-2mm}      
\end{table*}        
The fuzzy rules -- formulae from $T_B$ can further be translated to clausal form, 
e.g. $\phi_1(\tau,y)$ (Tables~\ref{tab6}--\ref{tab666}) and $\phi_7(\tau,y)$ (Tables~\ref{tab7}--\ref{tab7777}).
\begin{table*}[p]
\vspace{-6mm}
\caption{Translation of the formula $\phi_1(\tau,y)\in T_B$ to clausal form}\label{tab6}
\vspace{-6mm}
\centering
\begin{minipage}[t]{\linewidth}
\footnotesize
\begin{IEEEeqnarray*}{LR} 
\hline \hline \\[2mm]
\Big\{
  \tilde{p}_{1,0}(\tau,x,y)\geql \gu,
& \\
\phantom{\Big\{}
  \tilde{p}_{1,0}(\tau,x,y)\leftrightarrow
  \big(\underbrace{\mi{time}(\tau)\wedge \mi{uni}(y)}_{\tilde{p}_{1,1}(\tau,x,y)}\rightarrow        
       \big(\underbrace{\tilde{H}_{X_1}^1(\tilde{s}(\tau),y)\geql        
                        (\exists x\, (\mi{uni}(x)\wedge \tilde{H}_{X_0}(\tau,x)\wedge \tilde{G}_{\mi{low}_t}(x))\wedge 
                         \tilde{G}_{\mi{high}_d}(y))}_{\tilde{p}_{1,2}(\tau,x,y)}\big)\big)\Big\}
& \quad (\ref{eq0rr3+}) \\
\Big\{
  \tilde{p}_{1,0}(\tau,x,y)\geql \gu,
& \\
\phantom{\Big\{}
  \tilde{p}_{1,1}(\tau,x,y)\gle \tilde{p}_{1,2}(\tau,x,y)\vee \tilde{p}_{1,1}(\tau,x,y)\geql \tilde{p}_{1,2}(\tau,x,y)\vee 
  \tilde{p}_{1,0}(\tau,x,y)\geql \tilde{p}_{1,2}(\tau,x,y),
& \\
\phantom{\Big\{}
  \tilde{p}_{1,2}(\tau,x,y)\gle \tilde{p}_{1,1}(\tau,x,y)\vee \tilde{p}_{1,0}(\tau,x,y)\geql \gu,
& \\
\phantom{\Big\{}
  \tilde{p}_{1,1}(\tau,x,y)\leftrightarrow \underbrace{\mi{time}(\tau)}_{\tilde{p}_{1,3}(\tau,x,y)}\wedge 
                                           \underbrace{\mi{uni}(y)}_{\tilde{p}_{1,4}(\tau,x,y)},
& \\
\phantom{\Big\{}
  \tilde{p}_{1,2}(\tau,x,y)\leftrightarrow 
  \big(\underbrace{\tilde{H}_{X_1}^1(\tilde{s}(\tau),y)}_{\tilde{p}_{1,5}(\tau,x,y)}\geql
       (\underbrace{\exists x\, (\mi{uni}(x)\wedge \tilde{H}_{X_0}(\tau,x)\wedge \tilde{G}_{\mi{low}_t}(x))\wedge 
                    \tilde{G}_{\mi{high}_d}(y)}_{\tilde{p}_{1,6}(\tau,x,y)})\big)\Big\}
& \quad (\ref{eq0rr1+}), (\ref{eq0rr7+}) \\
\Big\{
  \tilde{p}_{1,0}(\tau,x,y)\geql \gu,
& \\
\phantom{\Big\{}
  \tilde{p}_{1,1}(\tau,x,y)\gle \tilde{p}_{1,2}(\tau,x,y)\vee \tilde{p}_{1,1}(\tau,x,y)\geql \tilde{p}_{1,2}(\tau,x,y)\vee
  \tilde{p}_{1,0}(\tau,x,y)\geql \tilde{p}_{1,2}(\tau,x,y),
& \\
\phantom{\Big\{}
  \tilde{p}_{1,2}(\tau,x,y)\gle \tilde{p}_{1,1}(\tau,x,y)\vee \tilde{p}_{1,0}(\tau,x,y)\geql \gu,
& \\
\phantom{\Big\{}
  \tilde{p}_{1,3}(\tau,x,y)\gle \tilde{p}_{1,4}(\tau,x,y)\vee \tilde{p}_{1,3}(\tau,x,y)\geql \tilde{p}_{1,4}(\tau,x,y)\vee
  \tilde{p}_{1,1}(\tau,x,y)\geql \tilde{p}_{1,4}(\tau,x,y),
& \\
\phantom{\Big\{}
  \tilde{p}_{1,4}(\tau,x,y)\gle \tilde{p}_{1,3}(\tau,x,y)\vee \tilde{p}_{1,1}(\tau,x,y)\geql \tilde{p}_{1,3}(\tau,x,y),  
& \\
\phantom{\Big\{}
  \tilde{p}_{1,3}(\tau,x,y)\geql \mi{time}(\tau), \tilde{p}_{1,4}(\tau,x,y)\geql \mi{uni}(y),
& \\
\phantom{\Big\{}
  \tilde{p}_{1,5}(\tau,x,y)\geql \tilde{p}_{1,6}(\tau,x,y)\vee \tilde{p}_{1,2}(\tau,x,y)\geql \gz,
  \tilde{p}_{1,5}(\tau,x,y)\gle \tilde{p}_{1,6}(\tau,x,y)\vee \tilde{p}_{1,6}(\tau,x,y)\gle \tilde{p}_{1,5}(\tau,x,y)\vee 
  \tilde{p}_{1,2}(\tau,x,y)\geql \gu,
& \\
\phantom{\Big\{}
  \tilde{p}_{1,5}(\tau,x,y)\geql \tilde{H}_{X_1}^1(\tilde{s}(\tau),y),
  \tilde{p}_{1,6}(\tau,x,y)\leftrightarrow \big(\underbrace{\exists x\, (\mi{uni}(x)\wedge \tilde{H}_{X_0}(\tau,x)\wedge \tilde{G}_{\mi{low}_t}(x))}_{\tilde{p}_{1,7}(\tau,x,y)}\wedge
                                                \underbrace{\tilde{G}_{\mi{high}_d}(y)}_{\tilde{p}_{1,8}(\tau,x,y)}\big)\Big\}
& \quad (\ref{eq0rr1+}) \\
\Big\{
  \tilde{p}_{1,0}(\tau,x,y)\geql \gu,
& \\
\phantom{\Big\{}
  \tilde{p}_{1,1}(\tau,x,y)\gle \tilde{p}_{1,2}(\tau,x,y)\vee \tilde{p}_{1,1}(\tau,x,y)\geql \tilde{p}_{1,2}(\tau,x,y)\vee
  \tilde{p}_{1,0}(\tau,x,y)\geql \tilde{p}_{1,2}(\tau,x,y),
& \\
\phantom{\Big\{}
  \tilde{p}_{1,2}(\tau,x,y)\gle \tilde{p}_{1,1}(\tau,x,y)\vee \tilde{p}_{1,0}(\tau,x,y)\geql \gu,
& \\
\phantom{\Big\{}
  \tilde{p}_{1,3}(\tau,x,y)\gle \tilde{p}_{1,4}(\tau,x,y)\vee \tilde{p}_{1,3}(\tau,x,y)\geql \tilde{p}_{1,4}(\tau,x,y)\vee
  \tilde{p}_{1,1}(\tau,x,y)\geql \tilde{p}_{1,4}(\tau,x,y),
& \\
\phantom{\Big\{}
  \tilde{p}_{1,4}(\tau,x,y)\gle \tilde{p}_{1,3}(\tau,x,y)\vee \tilde{p}_{1,1}(\tau,x,y)\geql \tilde{p}_{1,3}(\tau,x,y),
& \\
\phantom{\Big\{}
  \tilde{p}_{1,3}(\tau,x,y)\geql \mi{time}(\tau), \tilde{p}_{1,4}(\tau,x,y)\geql \mi{uni}(y),
& \\
\phantom{\Big\{}
  \tilde{p}_{1,5}(\tau,x,y)\geql \tilde{p}_{1,6}(\tau,x,y)\vee \tilde{p}_{1,2}(\tau,x,y)\geql \gz,
  \tilde{p}_{1,5}(\tau,x,y)\gle \tilde{p}_{1,6}(\tau,x,y)\vee \tilde{p}_{1,6}(\tau,x,y)\gle \tilde{p}_{1,5}(\tau,x,y)\vee
  \tilde{p}_{1,2}(\tau,x,y)\geql \gu,
& \\
\phantom{\Big\{}
  \tilde{p}_{1,5}(\tau,x,y)\geql \tilde{H}_{X_1}^1(\tilde{s}(\tau),y),
& \\
\phantom{\Big\{}
  \tilde{p}_{1,7}(\tau,x,y)\gle \tilde{p}_{1,8}(\tau,x,y)\vee \tilde{p}_{1,7}(\tau,x,y)\geql \tilde{p}_{1,8}(\tau,x,y)\vee 
  \tilde{p}_{1,6}(\tau,x,y)\geql \tilde{p}_{1,8}(\tau,x,y),
& \\
\phantom{\Big\{}
  \tilde{p}_{1,8}(\tau,x,y)\gle \tilde{p}_{1,7}(\tau,x,y)\vee \tilde{p}_{1,6}(\tau,x,y)\geql \tilde{p}_{1,7}(\tau,x,y),
& \\
\phantom{\Big\{}
  \tilde{p}_{1,7}(\tau,x,y)\leftrightarrow \exists x\, (\underbrace{\mi{uni}(x)\wedge \tilde{H}_{X_0}(\tau,x)\wedge \tilde{G}_{\mi{low}_t}(x)}_{\tilde{p}_{1,9}(\tau,x,y)}),
  \tilde{p}_{1,8}(\tau,x,y)\geql \tilde{G}_{\mi{high}_d}(y)\Big\}
& \quad (\ref{eq0rr6+}) \\[1mm]
\hline \hline
\end{IEEEeqnarray*}
\end{minipage}     
\vspace{-2mm}      
\end{table*}        
\begin{table*}[p]
\vspace{-6mm}
\caption{Translation of the formula $\phi_1(\tau,y)\in T_B$ to clausal form}\label{tab66}
\vspace{-6mm}
\centering
\begin{minipage}[t]{\linewidth}
\footnotesize   
\begin{IEEEeqnarray*}{LR} 
\hline \hline \\[2mm]
\Big\{
  \tilde{p}_{1,0}(\tau,x,y)\geql \gu,
& \\
\phantom{\Big\{}
  \tilde{p}_{1,1}(\tau,x,y)\gle \tilde{p}_{1,2}(\tau,x,y)\vee \tilde{p}_{1,1}(\tau,x,y)\geql \tilde{p}_{1,2}(\tau,x,y)\vee
  \tilde{p}_{1,0}(\tau,x,y)\geql \tilde{p}_{1,2}(\tau,x,y),
& \\
\phantom{\Big\{}
  \tilde{p}_{1,2}(\tau,x,y)\gle \tilde{p}_{1,1}(\tau,x,y)\vee \tilde{p}_{1,0}(\tau,x,y)\geql \gu,
& \\
\phantom{\Big\{}
  \tilde{p}_{1,3}(\tau,x,y)\gle \tilde{p}_{1,4}(\tau,x,y)\vee \tilde{p}_{1,3}(\tau,x,y)\geql \tilde{p}_{1,4}(\tau,x,y)\vee
  \tilde{p}_{1,1}(\tau,x,y)\geql \tilde{p}_{1,4}(\tau,x,y),
& \\
\phantom{\Big\{}
  \tilde{p}_{1,4}(\tau,x,y)\gle \tilde{p}_{1,3}(\tau,x,y)\vee \tilde{p}_{1,1}(\tau,x,y)\geql \tilde{p}_{1,3}(\tau,x,y),
& \\
\phantom{\Big\{}
  \tilde{p}_{1,3}(\tau,x,y)\geql \mi{time}(\tau), \tilde{p}_{1,4}(\tau,x,y)\geql \mi{uni}(y),
& \\
\phantom{\Big\{}
  \tilde{p}_{1,5}(\tau,x,y)\geql \tilde{p}_{1,6}(\tau,x,y)\vee \tilde{p}_{1,2}(\tau,x,y)\geql \gz,
  \tilde{p}_{1,5}(\tau,x,y)\gle \tilde{p}_{1,6}(\tau,x,y)\vee \tilde{p}_{1,6}(\tau,x,y)\gle \tilde{p}_{1,5}(\tau,x,y)\vee
  \tilde{p}_{1,2}(\tau,x,y)\geql \gu,
& \\
\phantom{\Big\{}
  \tilde{p}_{1,5}(\tau,x,y)\geql \tilde{H}_{X_1}^1(\tilde{s}(\tau),y),
& \\
\phantom{\Big\{}
  \tilde{p}_{1,7}(\tau,x,y)\gle \tilde{p}_{1,8}(\tau,x,y)\vee \tilde{p}_{1,7}(\tau,x,y)\geql \tilde{p}_{1,8}(\tau,x,y)\vee
  \tilde{p}_{1,6}(\tau,x,y)\geql \tilde{p}_{1,8}(\tau,x,y),
& \\
\phantom{\Big\{}
  \tilde{p}_{1,8}(\tau,x,y)\gle \tilde{p}_{1,7}(\tau,x,y)\vee \tilde{p}_{1,6}(\tau,x,y)\geql \tilde{p}_{1,7}(\tau,x,y),
& \\
\phantom{\Big\{}
  \tilde{p}_{1,7}(\tau,x,y)\geql \exists x\, \tilde{p}_{1,9}(\tau,x,y),
  \tilde{p}_{1,9}(\tau,x,y)\leftrightarrow \underbrace{\mi{uni}(x)}_{\tilde{p}_{1,10}(\tau,x,y)}\wedge 
                                           \underbrace{\tilde{H}_{X_0}(\tau,x)\wedge \tilde{G}_{\mi{low}_t}(x)}_{\tilde{p}_{1,11}(\tau,x,y)}
& \\
\phantom{\Big\{}
  \tilde{p}_{1,8}(\tau,x,y)\geql \tilde{G}_{\mi{high}_d}(y)\Big\}
& \quad (\ref{eq0rr1+}) \\
\Big\{
  \tilde{p}_{1,0}(\tau,x,y)\geql \gu,
& \\
\phantom{\Big\{}
  \tilde{p}_{1,1}(\tau,x,y)\gle \tilde{p}_{1,2}(\tau,x,y)\vee \tilde{p}_{1,1}(\tau,x,y)\geql \tilde{p}_{1,2}(\tau,x,y)\vee
  \tilde{p}_{1,0}(\tau,x,y)\geql \tilde{p}_{1,2}(\tau,x,y),
& \\
\phantom{\Big\{}
  \tilde{p}_{1,2}(\tau,x,y)\gle \tilde{p}_{1,1}(\tau,x,y)\vee \tilde{p}_{1,0}(\tau,x,y)\geql \gu,
& \\
\phantom{\Big\{}
  \tilde{p}_{1,3}(\tau,x,y)\gle \tilde{p}_{1,4}(\tau,x,y)\vee \tilde{p}_{1,3}(\tau,x,y)\geql \tilde{p}_{1,4}(\tau,x,y)\vee
  \tilde{p}_{1,1}(\tau,x,y)\geql \tilde{p}_{1,4}(\tau,x,y),
& \\
\phantom{\Big\{}
  \tilde{p}_{1,4}(\tau,x,y)\gle \tilde{p}_{1,3}(\tau,x,y)\vee \tilde{p}_{1,1}(\tau,x,y)\geql \tilde{p}_{1,3}(\tau,x,y),
& \\
\phantom{\Big\{}
  \tilde{p}_{1,3}(\tau,x,y)\geql \mi{time}(\tau), \tilde{p}_{1,4}(\tau,x,y)\geql \mi{uni}(y),
& \\
\phantom{\Big\{}
  \tilde{p}_{1,5}(\tau,x,y)\geql \tilde{p}_{1,6}(\tau,x,y)\vee \tilde{p}_{1,2}(\tau,x,y)\geql \gz,
  \tilde{p}_{1,5}(\tau,x,y)\gle \tilde{p}_{1,6}(\tau,x,y)\vee \tilde{p}_{1,6}(\tau,x,y)\gle \tilde{p}_{1,5}(\tau,x,y)\vee
  \tilde{p}_{1,2}(\tau,x,y)\geql \gu,
& \\
\phantom{\Big\{}
  \tilde{p}_{1,5}(\tau,x,y)\geql \tilde{H}_{X_1}^1(\tilde{s}(\tau),y),
& \\
\phantom{\Big\{}
  \tilde{p}_{1,7}(\tau,x,y)\gle \tilde{p}_{1,8}(\tau,x,y)\vee \tilde{p}_{1,7}(\tau,x,y)\geql \tilde{p}_{1,8}(\tau,x,y)\vee
  \tilde{p}_{1,6}(\tau,x,y)\geql \tilde{p}_{1,8}(\tau,x,y),
& \\
\phantom{\Big\{}
  \tilde{p}_{1,8}(\tau,x,y)\gle \tilde{p}_{1,7}(\tau,x,y)\vee \tilde{p}_{1,6}(\tau,x,y)\geql \tilde{p}_{1,7}(\tau,x,y),
& \\
\phantom{\Big\{}
  \tilde{p}_{1,7}(\tau,x,y)\geql \exists x\, \tilde{p}_{1,9}(\tau,x,y),
& \\
\phantom{\Big\{}
    \tilde{p}_{1,10}(\tau,x,y)\gle \tilde{p}_{1,11}(\tau,x,y)\vee \tilde{p}_{1,10}(\tau,x,y)\geql \tilde{p}_{1,11}(\tau,x,y)\vee
    \tilde{p}_{1,9}(\tau,x,y)\geql \tilde{p}_{1,11}(\tau,x,y),
& \\
\phantom{\Big\{}
  \tilde{p}_{1,11}(\tau,x,y)\gle \tilde{p}_{1,10}(\tau,x,y)\vee \tilde{p}_{1,9}(\tau,x,y)\geql \tilde{p}_{1,10}(\tau,x,y),
& \\
\phantom{\Big\{}
  \tilde{p}_{1,10}(\tau,x,y)\geql \mi{uni}(x),
  \tilde{p}_{1,11}(\tau,x,y)\leftrightarrow \underbrace{\tilde{H}_{X_0}(\tau,x)}_{\tilde{p}_{1,12}(\tau,x,y)}\wedge 
                                            \underbrace{\tilde{G}_{\mi{low}_t}(x)}_{\tilde{p}_{1,13}(\tau,x,y)},
& \\
\phantom{\Big\{}
  \tilde{p}_{1,8}(\tau,x,y)\geql \tilde{G}_{\mi{high}_d}(y)\Big\}
& \quad (\ref{eq0rr1+}) \\[1mm]
\hline \hline
\end{IEEEeqnarray*}
\end{minipage}     
\vspace{-2mm}      
\end{table*}        
\begin{table*}[p]
\vspace{-6mm}
\caption{Translation of the formula $\phi_1(\tau,y)\in T_B$ to clausal form}\label{tab666}
\vspace{-6mm}
\centering
\begin{minipage}[t]{\linewidth}
\footnotesize   
\begin{IEEEeqnarray*}{LR} 
\hline \hline \\[2mm]
\Big\{
  \tilde{p}_{1,0}(\tau,x,y)\geql \gu,
& \\
\phantom{\Big\{}
  \tilde{p}_{1,1}(\tau,x,y)\gle \tilde{p}_{1,2}(\tau,x,y)\vee \tilde{p}_{1,1}(\tau,x,y)\geql \tilde{p}_{1,2}(\tau,x,y)\vee
  \tilde{p}_{1,0}(\tau,x,y)\geql \tilde{p}_{1,2}(\tau,x,y),
& \\
\phantom{\Big\{}
  \tilde{p}_{1,2}(\tau,x,y)\gle \tilde{p}_{1,1}(\tau,x,y)\vee \tilde{p}_{1,0}(\tau,x,y)\geql \gu,
& \\
\phantom{\Big\{}
  \tilde{p}_{1,3}(\tau,x,y)\gle \tilde{p}_{1,4}(\tau,x,y)\vee \tilde{p}_{1,3}(\tau,x,y)\geql \tilde{p}_{1,4}(\tau,x,y)\vee
  \tilde{p}_{1,1}(\tau,x,y)\geql \tilde{p}_{1,4}(\tau,x,y),
& \\
\phantom{\Big\{}
  \tilde{p}_{1,4}(\tau,x,y)\gle \tilde{p}_{1,3}(\tau,x,y)\vee \tilde{p}_{1,1}(\tau,x,y)\geql \tilde{p}_{1,3}(\tau,x,y),
& \\
\phantom{\Big\{}
  \tilde{p}_{1,3}(\tau,x,y)\geql \mi{time}(\tau), \tilde{p}_{1,4}(\tau,x,y)\geql \mi{uni}(y),
& \\
\phantom{\Big\{}
  \tilde{p}_{1,5}(\tau,x,y)\geql \tilde{p}_{1,6}(\tau,x,y)\vee \tilde{p}_{1,2}(\tau,x,y)\geql \gz,
  \tilde{p}_{1,5}(\tau,x,y)\gle \tilde{p}_{1,6}(\tau,x,y)\vee \tilde{p}_{1,6}(\tau,x,y)\gle \tilde{p}_{1,5}(\tau,x,y)\vee
  \tilde{p}_{1,2}(\tau,x,y)\geql \gu,
& \\
\phantom{\Big\{}
  \tilde{p}_{1,5}(\tau,x,y)\geql \tilde{H}_{X_1}^1(\tilde{s}(\tau),y),
& \\
\phantom{\Big\{}
  \tilde{p}_{1,7}(\tau,x,y)\gle \tilde{p}_{1,8}(\tau,x,y)\vee \tilde{p}_{1,7}(\tau,x,y)\geql \tilde{p}_{1,8}(\tau,x,y)\vee
  \tilde{p}_{1,6}(\tau,x,y)\geql \tilde{p}_{1,8}(\tau,x,y),
& \\
\phantom{\Big\{}
  \tilde{p}_{1,8}(\tau,x,y)\gle \tilde{p}_{1,7}(\tau,x,y)\vee \tilde{p}_{1,6}(\tau,x,y)\geql \tilde{p}_{1,7}(\tau,x,y),
& \\
\phantom{\Big\{}
  \tilde{p}_{1,7}(\tau,x,y)\geql \exists x\, \tilde{p}_{1,9}(\tau,x,y),
& \\
\phantom{\Big\{}
    \tilde{p}_{1,10}(\tau,x,y)\gle \tilde{p}_{1,11}(\tau,x,y)\vee \tilde{p}_{1,10}(\tau,x,y)\geql \tilde{p}_{1,11}(\tau,x,y)\vee
    \tilde{p}_{1,9}(\tau,x,y)\geql \tilde{p}_{1,11}(\tau,x,y),
& \\
\phantom{\Big\{}
  \tilde{p}_{1,11}(\tau,x,y)\gle \tilde{p}_{1,10}(\tau,x,y)\vee \tilde{p}_{1,9}(\tau,x,y)\geql \tilde{p}_{1,10}(\tau,x,y),
& \\
\phantom{\Big\{}
  \tilde{p}_{1,10}(\tau,x,y)\geql \mi{uni}(x),
& \\
\phantom{\Big\{}
    \tilde{p}_{1,12}(\tau,x,y)\gle \tilde{p}_{1,13}(\tau,x,y)\vee \tilde{p}_{1,12}(\tau,x,y)\geql \tilde{p}_{1,13}(\tau,x,y)\vee
    \tilde{p}_{1,11}(\tau,x,y)\geql \tilde{p}_{1,13}(\tau,x,y),
& \\
\phantom{\Big\{}
  \tilde{p}_{1,13}(\tau,x,y)\gle \tilde{p}_{1,12}(\tau,x,y)\vee \tilde{p}_{1,11}(\tau,x,y)\geql \tilde{p}_{1,12}(\tau,x,y),
& \\
\phantom{\Big\{}
  \tilde{p}_{1,12}(\tau,x,y)\geql \tilde{H}_{X_0}(\tau,x),
  \tilde{p}_{1,13}(\tau,x,y)\geql \tilde{G}_{\mi{low}_t}(x),
& \\
\phantom{\Big\{}
  \tilde{p}_{1,8}(\tau,x,y)\geql \tilde{G}_{\mi{high}_d}(y)\Big\} \\[1mm]
\hline \hline
\end{IEEEeqnarray*}
\end{minipage}     
\vspace{-2mm}      
\end{table*}        
\begin{table*}[p]
\vspace{-6mm}
\caption{Translation of the formula $\phi_7(\tau,y)\in T_B$ to clausal form}\label{tab7}
\vspace{-6mm}
\centering
\begin{minipage}[t]{\linewidth}
\scriptsize
\begin{IEEEeqnarray*}{LR} 
\hline \hline \\[2mm]
\Big\{
  \tilde{p}_{7,0}(\tau,x,y)\geql \gu,
& \\
\phantom{\Big\{}
  \tilde{p}_{7,0}(\tau,x,y)\leftrightarrow
  \big(\underbrace{\mi{time}(\tau)\wedge \mi{uni}(y)}_{\tilde{p}_{7,1}(\tau,x,y)}\rightarrow
& \\
\phantom{\Big\{\tilde{p}_{7,0}(\tau,x,y)\leftrightarrow \big(}
       \big(\underbrace{\tilde{H}_{X_5}^7(\tilde{s}(\tau),y)\geql
                        (\exists x\, (\mi{uni}(x)\wedge \tilde{H}_{X_1}(\tau,x)\wedge \tilde{G}_{\mi{medium}_d}(x))\wedge
                         \exists x\, (\mi{uni}(x)\wedge \tilde{H}_{X_2}(\tau,x)\wedge \tilde{G}_{\mi{high}_r}(x))\wedge  
                         \tilde{G}_{\mi{zero}_{\dot{r}}}(y))}_{\tilde{p}_{7,2}(\tau,x,y)}\big)\big)\Big\}
& \quad (\ref{eq0rr3+}) \\
\Big\{
  \tilde{p}_{7,0}(\tau,x,y)\geql \gu,
& \\
\phantom{\Big\{}
  \tilde{p}_{7,1}(\tau,x,y)\gle \tilde{p}_{7,2}(\tau,x,y)\vee \tilde{p}_{7,1}(\tau,x,y)\geql \tilde{p}_{7,2}(\tau,x,y)\vee
  \tilde{p}_{7,0}(\tau,x,y)\geql \tilde{p}_{7,2}(\tau,x,y),
& \\
\phantom{\Big\{}
  \tilde{p}_{7,2}(\tau,x,y)\gle \tilde{p}_{7,1}(\tau,x,y)\vee \tilde{p}_{7,0}(\tau,x,y)\geql \gu,
& \\
\phantom{\Big\{}
  \tilde{p}_{7,1}(\tau,x,y)\leftrightarrow \underbrace{\mi{time}(\tau)}_{\tilde{p}_{7,3}(\tau,x,y)}\wedge
                                           \underbrace{\mi{uni}(y)}_{\tilde{p}_{7,4}(\tau,x,y)},
& \\
\phantom{\Big\{}
  \tilde{p}_{7,2}(\tau,x,y)\leftrightarrow 
  \big(\underbrace{\tilde{H}_{X_5}^7(\tilde{s}(\tau),y)}_{\tilde{p}_{7,5}(\tau,x,y)}\geql
       (\underbrace{\exists x\, (\mi{uni}(x)\wedge \tilde{H}_{X_1}(\tau,x)\wedge \tilde{G}_{\mi{medium}_d}(x))\wedge
                    \exists x\, (\mi{uni}(x)\wedge \tilde{H}_{X_2}(\tau,x)\wedge \tilde{G}_{\mi{high}_r}(x))\wedge  
                    \tilde{G}_{\mi{zero}_{\dot{r}}}(y)}_{\tilde{p}_{7,6}(\tau,x,y)})\big)\Big\}
& \quad (\ref{eq0rr1+}), (\ref{eq0rr7+}) \\
\Big\{
  \tilde{p}_{7,0}(\tau,x,y)\geql \gu,
& \\
\phantom{\Big\{}
  \tilde{p}_{7,1}(\tau,x,y)\gle \tilde{p}_{7,2}(\tau,x,y)\vee \tilde{p}_{7,1}(\tau,x,y)\geql \tilde{p}_{7,2}(\tau,x,y)\vee
  \tilde{p}_{7,0}(\tau,x,y)\geql \tilde{p}_{7,2}(\tau,x,y),
& \\
\phantom{\Big\{}
  \tilde{p}_{7,2}(\tau,x,y)\gle \tilde{p}_{7,1}(\tau,x,y)\vee \tilde{p}_{7,0}(\tau,x,y)\geql \gu,
& \\
\phantom{\Big\{}
  \tilde{p}_{7,3}(\tau,x,y)\gle \tilde{p}_{7,4}(\tau,x,y)\vee \tilde{p}_{7,3}(\tau,x,y)\geql \tilde{p}_{7,4}(\tau,x,y)\vee
  \tilde{p}_{7,1}(\tau,x,y)\geql \tilde{p}_{7,4}(\tau,x,y),
& \\
\phantom{\Big\{}
  \tilde{p}_{7,4}(\tau,x,y)\gle \tilde{p}_{7,3}(\tau,x,y)\vee \tilde{p}_{7,1}(\tau,x,y)\geql \tilde{p}_{7,3}(\tau,x,y),
& \\
\phantom{\Big\{}
  \tilde{p}_{7,3}(\tau,x,y)\geql \mi{time}(\tau), \tilde{p}_{7,4}(\tau,x,y)\geql \mi{uni}(y),
& \\
\phantom{\Big\{}
  \tilde{p}_{7,5}(\tau,x,y)\geql \tilde{p}_{7,6}(\tau,x,y)\vee \tilde{p}_{7,2}(\tau,x,y)\geql \gz,
  \tilde{p}_{7,5}(\tau,x,y)\gle \tilde{p}_{7,6}(\tau,x,y)\vee \tilde{p}_{7,6}(\tau,x,y)\gle \tilde{p}_{7,5}(\tau,x,y)\vee
  \tilde{p}_{7,2}(\tau,x,y)\geql \gu,
& \\
\phantom{\Big\{}
  \tilde{p}_{7,5}(\tau,x,y)\geql \tilde{H}_{X_5}^7(\tilde{s}(\tau),y),
& \\
\phantom{\Big\{}
  \tilde{p}_{7,6}(\tau,x,y)\leftrightarrow 
  \big((\underbrace{\exists x\, (\mi{uni}(x)\wedge \tilde{H}_{X_1}(\tau,x)\wedge \tilde{G}_{\mi{medium}_d}(x))\wedge
                    \exists x\, (\mi{uni}(x)\wedge \tilde{H}_{X_2}(\tau,x)\wedge \tilde{G}_{\mi{high}_r}(x))}_{\tilde{p}_{7,7}(\tau,x,y)})\wedge  
       \underbrace{\tilde{G}_{\mi{zero}_{\dot{r}}}(y)}_{\tilde{p}_{7,8}(\tau,x,y)}\big)\Big\}
& \quad (\ref{eq0rr1+}) \\
\Big\{
  \tilde{p}_{7,0}(\tau,x,y)\geql \gu,
& \\
\phantom{\Big\{}
  \tilde{p}_{7,1}(\tau,x,y)\gle \tilde{p}_{7,2}(\tau,x,y)\vee \tilde{p}_{7,1}(\tau,x,y)\geql \tilde{p}_{7,2}(\tau,x,y)\vee
  \tilde{p}_{7,0}(\tau,x,y)\geql \tilde{p}_{7,2}(\tau,x,y),
& \\
\phantom{\Big\{}
  \tilde{p}_{7,2}(\tau,x,y)\gle \tilde{p}_{7,1}(\tau,x,y)\vee \tilde{p}_{7,0}(\tau,x,y)\geql \gu,
& \\
\phantom{\Big\{}
  \tilde{p}_{7,3}(\tau,x,y)\gle \tilde{p}_{7,4}(\tau,x,y)\vee \tilde{p}_{7,3}(\tau,x,y)\geql \tilde{p}_{7,4}(\tau,x,y)\vee
  \tilde{p}_{7,1}(\tau,x,y)\geql \tilde{p}_{7,4}(\tau,x,y),
& \\
\phantom{\Big\{}
  \tilde{p}_{7,4}(\tau,x,y)\gle \tilde{p}_{7,3}(\tau,x,y)\vee \tilde{p}_{7,1}(\tau,x,y)\geql \tilde{p}_{7,3}(\tau,x,y),
& \\
\phantom{\Big\{}
  \tilde{p}_{7,3}(\tau,x,y)\geql \mi{time}(\tau), \tilde{p}_{7,4}(\tau,x,y)\geql \mi{uni}(y),
& \\
\phantom{\Big\{}
  \tilde{p}_{7,5}(\tau,x,y)\geql \tilde{p}_{7,6}(\tau,x,y)\vee \tilde{p}_{7,2}(\tau,x,y)\geql \gz,
  \tilde{p}_{7,5}(\tau,x,y)\gle \tilde{p}_{7,6}(\tau,x,y)\vee \tilde{p}_{7,6}(\tau,x,y)\gle \tilde{p}_{7,5}(\tau,x,y)\vee
  \tilde{p}_{7,2}(\tau,x,y)\geql \gu,
& \\
\phantom{\Big\{}
  \tilde{p}_{7,5}(\tau,x,y)\geql \tilde{H}_{X_5}^7(\tilde{s}(\tau),y),
& \\
\phantom{\Big\{}
  \tilde{p}_{7,7}(\tau,x,y)\gle \tilde{p}_{7,8}(\tau,x,y)\vee \tilde{p}_{7,7}(\tau,x,y)\geql \tilde{p}_{7,8}(\tau,x,y)\vee 
  \tilde{p}_{7,6}(\tau,x,y)\geql \tilde{p}_{7,8}(\tau,x,y),
& \\
\phantom{\Big\{}
  \tilde{p}_{7,8}(\tau,x,y)\gle \tilde{p}_{7,7}(\tau,x,y)\vee \tilde{p}_{7,6}(\tau,x,y)\geql \tilde{p}_{7,7}(\tau,x,y),
& \\
\phantom{\Big\{}
  \tilde{p}_{7,7}(\tau,x,y)\leftrightarrow 
  \big(\underbrace{\exists x\, (\mi{uni}(x)\wedge \tilde{H}_{X_1}(\tau,x)\wedge \tilde{G}_{\mi{medium}_d}(x))}_{\tilde{p}_{7,9}(\tau,x,y)}\wedge
       \underbrace{\exists x\, (\mi{uni}(x)\wedge \tilde{H}_{X_2}(\tau,x)\wedge \tilde{G}_{\mi{high}_r}(x))}_{\tilde{p}_{7,10}(\tau,x,y)}\big),
  \tilde{p}_{7,8}(\tau,x,y)\geql \tilde{G}_{\mi{zero}_{\dot{r}}}(y)\Big\}
& \quad (\ref{eq0rr1+}) \\[1mm]
\hline \hline
\end{IEEEeqnarray*}
\end{minipage}
\vspace{-2mm}
\end{table*}
\begin{table*}[p]
\vspace{-6mm}
\caption{Translation of the formula $\phi_7(\tau,y)\in T_B$ to clausal form}\label{tab77}
\vspace{-6mm}  
\centering
\begin{minipage}[t]{\linewidth}
\scriptsize   
\begin{IEEEeqnarray*}{LR}
\hline \hline \\[2mm]
\Big\{
  \tilde{p}_{7,0}(\tau,x,y)\geql \gu,
& \\
\phantom{\Big\{}
  \tilde{p}_{7,1}(\tau,x,y)\gle \tilde{p}_{7,2}(\tau,x,y)\vee \tilde{p}_{7,1}(\tau,x,y)\geql \tilde{p}_{7,2}(\tau,x,y)\vee
  \tilde{p}_{7,0}(\tau,x,y)\geql \tilde{p}_{7,2}(\tau,x,y),
& \\
\phantom{\Big\{}
  \tilde{p}_{7,2}(\tau,x,y)\gle \tilde{p}_{7,1}(\tau,x,y)\vee \tilde{p}_{7,0}(\tau,x,y)\geql \gu,
& \\
\phantom{\Big\{}
  \tilde{p}_{7,3}(\tau,x,y)\gle \tilde{p}_{7,4}(\tau,x,y)\vee \tilde{p}_{7,3}(\tau,x,y)\geql \tilde{p}_{7,4}(\tau,x,y)\vee
  \tilde{p}_{7,1}(\tau,x,y)\geql \tilde{p}_{7,4}(\tau,x,y),
& \\
\phantom{\Big\{}
  \tilde{p}_{7,4}(\tau,x,y)\gle \tilde{p}_{7,3}(\tau,x,y)\vee \tilde{p}_{7,1}(\tau,x,y)\geql \tilde{p}_{7,3}(\tau,x,y),
& \\
\phantom{\Big\{}
  \tilde{p}_{7,3}(\tau,x,y)\geql \mi{time}(\tau), \tilde{p}_{7,4}(\tau,x,y)\geql \mi{uni}(y),
& \\
\phantom{\Big\{}
  \tilde{p}_{7,5}(\tau,x,y)\geql \tilde{p}_{7,6}(\tau,x,y)\vee \tilde{p}_{7,2}(\tau,x,y)\geql \gz,
  \tilde{p}_{7,5}(\tau,x,y)\gle \tilde{p}_{7,6}(\tau,x,y)\vee \tilde{p}_{7,6}(\tau,x,y)\gle \tilde{p}_{7,5}(\tau,x,y)\vee
  \tilde{p}_{7,2}(\tau,x,y)\geql \gu,
& \\
\phantom{\Big\{}
  \tilde{p}_{7,5}(\tau,x,y)\geql \tilde{H}_{X_5}^7(\tilde{s}(\tau),y),
& \\
\phantom{\Big\{}
  \tilde{p}_{7,7}(\tau,x,y)\gle \tilde{p}_{7,8}(\tau,x,y)\vee \tilde{p}_{7,7}(\tau,x,y)\geql \tilde{p}_{7,8}(\tau,x,y)\vee 
  \tilde{p}_{7,6}(\tau,x,y)\geql \tilde{p}_{7,8}(\tau,x,y),
& \\
\phantom{\Big\{}
  \tilde{p}_{7,8}(\tau,x,y)\gle \tilde{p}_{7,7}(\tau,x,y)\vee \tilde{p}_{7,6}(\tau,x,y)\geql \tilde{p}_{7,7}(\tau,x,y),
& \\
\phantom{\Big\{}
  \tilde{p}_{7,9}(\tau,x,y)\gle \tilde{p}_{7,10}(\tau,x,y)\vee \tilde{p}_{7,9}(\tau,x,y)\geql \tilde{p}_{7,10}(\tau,x,y)\vee 
  \tilde{p}_{7,7}(\tau,x,y)\geql \tilde{p}_{7,10}(\tau,x,y),
& \\
\phantom{\Big\{}
  \tilde{p}_{7,10}(\tau,x,y)\gle \tilde{p}_{7,9}(\tau,x,y)\vee \tilde{p}_{7,7}(\tau,x,y)\geql \tilde{p}_{7,9}(\tau,x,y),
& \\
\phantom{\Big\{}
  \tilde{p}_{7,9}(\tau,x,y)\leftrightarrow 
  \exists x\, (\underbrace{\mi{uni}(x)\wedge \tilde{H}_{X_1}(\tau,x)\wedge \tilde{G}_{\mi{medium}_d}(x)}_{\tilde{p}_{7,11}(\tau,x,y)}),
  \tilde{p}_{7,10}(\tau,x,y)\leftrightarrow 
  \exists x\, (\underbrace{\mi{uni}(x)\wedge \tilde{H}_{X_2}(\tau,x)\wedge \tilde{G}_{\mi{high}_r}(x)}_{\tilde{p}_{7,12}(\tau,x,y)}),
& \\
\phantom{\Big\{}
  \tilde{p}_{7,8}(\tau,x,y)\geql \tilde{G}_{\mi{zero}_{\dot{r}}}(y)\Big\}
& \quad (\ref{eq0rr6+}) \\
\Big\{
  \tilde{p}_{7,0}(\tau,x,y)\geql \gu,
& \\
\phantom{\Big\{}
  \tilde{p}_{7,1}(\tau,x,y)\gle \tilde{p}_{7,2}(\tau,x,y)\vee \tilde{p}_{7,1}(\tau,x,y)\geql \tilde{p}_{7,2}(\tau,x,y)\vee
  \tilde{p}_{7,0}(\tau,x,y)\geql \tilde{p}_{7,2}(\tau,x,y),
& \\
\phantom{\Big\{}
  \tilde{p}_{7,2}(\tau,x,y)\gle \tilde{p}_{7,1}(\tau,x,y)\vee \tilde{p}_{7,0}(\tau,x,y)\geql \gu,
& \\
\phantom{\Big\{}
  \tilde{p}_{7,3}(\tau,x,y)\gle \tilde{p}_{7,4}(\tau,x,y)\vee \tilde{p}_{7,3}(\tau,x,y)\geql \tilde{p}_{7,4}(\tau,x,y)\vee
  \tilde{p}_{7,1}(\tau,x,y)\geql \tilde{p}_{7,4}(\tau,x,y),
& \\
\phantom{\Big\{}
  \tilde{p}_{7,4}(\tau,x,y)\gle \tilde{p}_{7,3}(\tau,x,y)\vee \tilde{p}_{7,1}(\tau,x,y)\geql \tilde{p}_{7,3}(\tau,x,y),
& \\
\phantom{\Big\{}
  \tilde{p}_{7,3}(\tau,x,y)\geql \mi{time}(\tau), \tilde{p}_{7,4}(\tau,x,y)\geql \mi{uni}(y),
& \\
\phantom{\Big\{}
  \tilde{p}_{7,5}(\tau,x,y)\geql \tilde{p}_{7,6}(\tau,x,y)\vee \tilde{p}_{7,2}(\tau,x,y)\geql \gz,
  \tilde{p}_{7,5}(\tau,x,y)\gle \tilde{p}_{7,6}(\tau,x,y)\vee \tilde{p}_{7,6}(\tau,x,y)\gle \tilde{p}_{7,5}(\tau,x,y)\vee
  \tilde{p}_{7,2}(\tau,x,y)\geql \gu,
& \\
\phantom{\Big\{}
  \tilde{p}_{7,5}(\tau,x,y)\geql \tilde{H}_{X_5}^7(\tilde{s}(\tau),y),
& \\
\phantom{\Big\{}
  \tilde{p}_{7,7}(\tau,x,y)\gle \tilde{p}_{7,8}(\tau,x,y)\vee \tilde{p}_{7,7}(\tau,x,y)\geql \tilde{p}_{7,8}(\tau,x,y)\vee 
  \tilde{p}_{7,6}(\tau,x,y)\geql \tilde{p}_{7,8}(\tau,x,y),
& \\
\phantom{\Big\{}
  \tilde{p}_{7,8}(\tau,x,y)\gle \tilde{p}_{7,7}(\tau,x,y)\vee \tilde{p}_{7,6}(\tau,x,y)\geql \tilde{p}_{7,7}(\tau,x,y),
& \\
\phantom{\Big\{}
  \tilde{p}_{7,9}(\tau,x,y)\gle \tilde{p}_{7,10}(\tau,x,y)\vee \tilde{p}_{7,9}(\tau,x,y)\geql \tilde{p}_{7,10}(\tau,x,y)\vee 
  \tilde{p}_{7,7}(\tau,x,y)\geql \tilde{p}_{7,10}(\tau,x,y),
& \\
\phantom{\Big\{}
  \tilde{p}_{7,10}(\tau,x,y)\gle \tilde{p}_{7,9}(\tau,x,y)\vee \tilde{p}_{7,7}(\tau,x,y)\geql \tilde{p}_{7,9}(\tau,x,y),
& \\
\phantom{\Big\{}
  \tilde{p}_{7,9}(\tau,x,y)\geql \exists x\, \tilde{p}_{7,11}(\tau,x,y),
  \tilde{p}_{7,11}(\tau,x,y)\leftrightarrow \underbrace{\mi{uni}(x)}_{\tilde{p}_{7,13}(\tau,x,y)}\wedge 
                                            \underbrace{\tilde{H}_{X_1}(\tau,x)\wedge \tilde{G}_{\mi{medium}_d}(x)}_{\tilde{p}_{7,14}(\tau,x,y)}, 
& \\
\phantom{\Big\{}
  \tilde{p}_{7,10}(\tau,x,y)\geql \exists x\, \tilde{p}_{7,12}(\tau,x,y),
  \tilde{p}_{7,12}(\tau,x,y)\leftrightarrow \underbrace{\mi{uni}(x)}_{\tilde{p}_{7,15}(\tau,x,y)}\wedge
                                            \underbrace{\tilde{H}_{X_2}(\tau,x)\wedge \tilde{G}_{\mi{high}_r}(x)}_{\tilde{p}_{7,16}(\tau,x,y)},
& \\
\phantom{\Big\{}
  \tilde{p}_{7,8}(\tau,x,y)\geql \tilde{G}_{\mi{zero}_{\dot{r}}}(y)\Big\}
& \quad (\ref{eq0rr1+}) \\[1mm]
\hline \hline
\end{IEEEeqnarray*}
\end{minipage}     
\vspace{-2mm}      
\end{table*}        
\begin{table*}[p]
\vspace{-6mm}
\caption{Translation of the formula $\phi_7(\tau,y)\in T_B$ to clausal form}\label{tab777}
\vspace{-6mm}
\centering
\begin{minipage}[t]{\linewidth}
\scriptsize     
\begin{IEEEeqnarray*}{LR}
\hline \hline \\[2mm]
\Big\{
  \tilde{p}_{7,0}(\tau,x,y)\geql \gu,
& \\
\phantom{\Big\{}
  \tilde{p}_{7,1}(\tau,x,y)\gle \tilde{p}_{7,2}(\tau,x,y)\vee \tilde{p}_{7,1}(\tau,x,y)\geql \tilde{p}_{7,2}(\tau,x,y)\vee
  \tilde{p}_{7,0}(\tau,x,y)\geql \tilde{p}_{7,2}(\tau,x,y),
& \\
\phantom{\Big\{}
  \tilde{p}_{7,2}(\tau,x,y)\gle \tilde{p}_{7,1}(\tau,x,y)\vee \tilde{p}_{7,0}(\tau,x,y)\geql \gu,
& \\
\phantom{\Big\{}
  \tilde{p}_{7,3}(\tau,x,y)\gle \tilde{p}_{7,4}(\tau,x,y)\vee \tilde{p}_{7,3}(\tau,x,y)\geql \tilde{p}_{7,4}(\tau,x,y)\vee
  \tilde{p}_{7,1}(\tau,x,y)\geql \tilde{p}_{7,4}(\tau,x,y),
& \\
\phantom{\Big\{}
  \tilde{p}_{7,4}(\tau,x,y)\gle \tilde{p}_{7,3}(\tau,x,y)\vee \tilde{p}_{7,1}(\tau,x,y)\geql \tilde{p}_{7,3}(\tau,x,y),
& \\
\phantom{\Big\{}
  \tilde{p}_{7,3}(\tau,x,y)\geql \mi{time}(\tau), \tilde{p}_{7,4}(\tau,x,y)\geql \mi{uni}(y),
& \\
\phantom{\Big\{}
  \tilde{p}_{7,5}(\tau,x,y)\geql \tilde{p}_{7,6}(\tau,x,y)\vee \tilde{p}_{7,2}(\tau,x,y)\geql \gz,
  \tilde{p}_{7,5}(\tau,x,y)\gle \tilde{p}_{7,6}(\tau,x,y)\vee \tilde{p}_{7,6}(\tau,x,y)\gle \tilde{p}_{7,5}(\tau,x,y)\vee
  \tilde{p}_{7,2}(\tau,x,y)\geql \gu,
& \\
\phantom{\Big\{}
  \tilde{p}_{7,5}(\tau,x,y)\geql \tilde{H}_{X_5}^7(\tilde{s}(\tau),y),
& \\
\phantom{\Big\{}
  \tilde{p}_{7,7}(\tau,x,y)\gle \tilde{p}_{7,8}(\tau,x,y)\vee \tilde{p}_{7,7}(\tau,x,y)\geql \tilde{p}_{7,8}(\tau,x,y)\vee 
  \tilde{p}_{7,6}(\tau,x,y)\geql \tilde{p}_{7,8}(\tau,x,y),
& \\
\phantom{\Big\{}
  \tilde{p}_{7,8}(\tau,x,y)\gle \tilde{p}_{7,7}(\tau,x,y)\vee \tilde{p}_{7,6}(\tau,x,y)\geql \tilde{p}_{7,7}(\tau,x,y),
& \\
\phantom{\Big\{}
  \tilde{p}_{7,9}(\tau,x,y)\gle \tilde{p}_{7,10}(\tau,x,y)\vee \tilde{p}_{7,9}(\tau,x,y)\geql \tilde{p}_{7,10}(\tau,x,y)\vee 
  \tilde{p}_{7,7}(\tau,x,y)\geql \tilde{p}_{7,10}(\tau,x,y),
& \\
\phantom{\Big\{}
  \tilde{p}_{7,10}(\tau,x,y)\gle \tilde{p}_{7,9}(\tau,x,y)\vee \tilde{p}_{7,7}(\tau,x,y)\geql \tilde{p}_{7,9}(\tau,x,y),
& \\
\phantom{\Big\{}
  \tilde{p}_{7,9}(\tau,x,y)\geql \exists x\, \tilde{p}_{7,11}(\tau,x,y),
& \\
\phantom{\Big\{}
  \tilde{p}_{7,13}(\tau,x,y)\gle \tilde{p}_{7,14}(\tau,x,y)\vee \tilde{p}_{7,13}(\tau,x,y)\geql \tilde{p}_{7,14}(\tau,x,y)\vee
  \tilde{p}_{7,11}(\tau,x,y)\geql \tilde{p}_{7,14}(\tau,x,y),
& \\
\phantom{\Big\{}
  \tilde{p}_{7,14}(\tau,x,y)\gle \tilde{p}_{7,13}(\tau,x,y)\vee \tilde{p}_{7,11}(\tau,x,y)\geql \tilde{p}_{7,13}(\tau,x,y),
& \\
\phantom{\Big\{}
  \tilde{p}_{7,13}(\tau,x,y)\geql \mi{uni}(x), 
  \tilde{p}_{7,14}(\tau,x,y)\leftrightarrow \underbrace{\tilde{H}_{X_1}(\tau,x)}_{\tilde{p}_{7,17}(\tau,x,y)}\wedge
                                            \underbrace{\tilde{G}_{\mi{medium}_d}(x)}_{\tilde{p}_{7,18}(\tau,x,y)},
& \\
\phantom{\Big\{}
  \tilde{p}_{7,10}(\tau,x,y)\geql \exists x\, \tilde{p}_{7,12}(\tau,x,y),
& \\
\phantom{\Big\{}
  \tilde{p}_{7,15}(\tau,x,y)\gle \tilde{p}_{7,16}(\tau,x,y)\vee \tilde{p}_{7,15}(\tau,x,y)\geql \tilde{p}_{7,16}(\tau,x,y)\vee
  \tilde{p}_{7,12}(\tau,x,y)\geql \tilde{p}_{7,16}(\tau,x,y),
& \\
\phantom{\Big\{}
  \tilde{p}_{7,16}(\tau,x,y)\gle \tilde{p}_{7,15}(\tau,x,y)\vee \tilde{p}_{7,12}(\tau,x,y)\geql \tilde{p}_{7,15}(\tau,x,y),
& \\
\phantom{\Big\{}
  \tilde{p}_{7,15}(\tau,x,y)\geql \mi{uni}(x),
  \tilde{p}_{7,16}(\tau,x,y)\leftrightarrow \underbrace{\tilde{H}_{X_2}(\tau,x)}_{\tilde{p}_{7,19}(\tau,x,y)}\wedge
                                            \underbrace{\tilde{G}_{\mi{high}_r}(x)}_{\tilde{p}_{7,20}(\tau,x,y)},
& \\
\phantom{\Big\{}
  \tilde{p}_{7,8}(\tau,x,y)\geql \tilde{G}_{\mi{zero}_{\dot{r}}}(y)\Big\}
& \quad (\ref{eq0rr1+}) \\[1mm]
\hline \hline   
\end{IEEEeqnarray*}
\end{minipage}     
\vspace{-2mm}
\end{table*}     
\begin{table*}[p]
\vspace{-6mm}
\caption{Translation of the formula $\phi_7(\tau,y)\in T_B$ to clausal form}\label{tab7777}
\vspace{-6mm}
\centering
\begin{minipage}[t]{\linewidth}
\scriptsize     
\begin{IEEEeqnarray*}{LR}
\hline \hline \\[2mm]
\Big\{
  \tilde{p}_{7,0}(\tau,x,y)\geql \gu,
& \\
\phantom{\Big\{}
  \tilde{p}_{7,1}(\tau,x,y)\gle \tilde{p}_{7,2}(\tau,x,y)\vee \tilde{p}_{7,1}(\tau,x,y)\geql \tilde{p}_{7,2}(\tau,x,y)\vee
  \tilde{p}_{7,0}(\tau,x,y)\geql \tilde{p}_{7,2}(\tau,x,y),
& \\
\phantom{\Big\{}
  \tilde{p}_{7,2}(\tau,x,y)\gle \tilde{p}_{7,1}(\tau,x,y)\vee \tilde{p}_{7,0}(\tau,x,y)\geql \gu,
& \\
\phantom{\Big\{}
  \tilde{p}_{7,3}(\tau,x,y)\gle \tilde{p}_{7,4}(\tau,x,y)\vee \tilde{p}_{7,3}(\tau,x,y)\geql \tilde{p}_{7,4}(\tau,x,y)\vee
  \tilde{p}_{7,1}(\tau,x,y)\geql \tilde{p}_{7,4}(\tau,x,y),
& \\
\phantom{\Big\{}
  \tilde{p}_{7,4}(\tau,x,y)\gle \tilde{p}_{7,3}(\tau,x,y)\vee \tilde{p}_{7,1}(\tau,x,y)\geql \tilde{p}_{7,3}(\tau,x,y),
& \\
\phantom{\Big\{}
  \tilde{p}_{7,3}(\tau,x,y)\geql \mi{time}(\tau), \tilde{p}_{7,4}(\tau,x,y)\geql \mi{uni}(y),
& \\
\phantom{\Big\{}
  \tilde{p}_{7,5}(\tau,x,y)\geql \tilde{p}_{7,6}(\tau,x,y)\vee \tilde{p}_{7,2}(\tau,x,y)\geql \gz,
  \tilde{p}_{7,5}(\tau,x,y)\gle \tilde{p}_{7,6}(\tau,x,y)\vee \tilde{p}_{7,6}(\tau,x,y)\gle \tilde{p}_{7,5}(\tau,x,y)\vee
  \tilde{p}_{7,2}(\tau,x,y)\geql \gu,
& \\
\phantom{\Big\{}
  \tilde{p}_{7,5}(\tau,x,y)\geql \tilde{H}_{X_5}^7(\tilde{s}(\tau),y),
& \\
\phantom{\Big\{}
  \tilde{p}_{7,7}(\tau,x,y)\gle \tilde{p}_{7,8}(\tau,x,y)\vee \tilde{p}_{7,7}(\tau,x,y)\geql \tilde{p}_{7,8}(\tau,x,y)\vee 
  \tilde{p}_{7,6}(\tau,x,y)\geql \tilde{p}_{7,8}(\tau,x,y),
& \\
\phantom{\Big\{}
  \tilde{p}_{7,8}(\tau,x,y)\gle \tilde{p}_{7,7}(\tau,x,y)\vee \tilde{p}_{7,6}(\tau,x,y)\geql \tilde{p}_{7,7}(\tau,x,y),
& \\
\phantom{\Big\{}
  \tilde{p}_{7,9}(\tau,x,y)\gle \tilde{p}_{7,10}(\tau,x,y)\vee \tilde{p}_{7,9}(\tau,x,y)\geql \tilde{p}_{7,10}(\tau,x,y)\vee 
  \tilde{p}_{7,7}(\tau,x,y)\geql \tilde{p}_{7,10}(\tau,x,y),
& \\
\phantom{\Big\{}
  \tilde{p}_{7,10}(\tau,x,y)\gle \tilde{p}_{7,9}(\tau,x,y)\vee \tilde{p}_{7,7}(\tau,x,y)\geql \tilde{p}_{7,9}(\tau,x,y),
& \\
\phantom{\Big\{}
  \tilde{p}_{7,9}(\tau,x,y)\geql \exists x\, \tilde{p}_{7,11}(\tau,x,y),
& \\
\phantom{\Big\{}
  \tilde{p}_{7,13}(\tau,x,y)\gle \tilde{p}_{7,14}(\tau,x,y)\vee \tilde{p}_{7,13}(\tau,x,y)\geql \tilde{p}_{7,14}(\tau,x,y)\vee
  \tilde{p}_{7,11}(\tau,x,y)\geql \tilde{p}_{7,14}(\tau,x,y),
& \\
\phantom{\Big\{}
  \tilde{p}_{7,14}(\tau,x,y)\gle \tilde{p}_{7,13}(\tau,x,y)\vee \tilde{p}_{7,11}(\tau,x,y)\geql \tilde{p}_{7,13}(\tau,x,y),
& \\
\phantom{\Big\{}
  \tilde{p}_{7,13}(\tau,x,y)\geql \mi{uni}(x), 
& \\
\phantom{\Big\{}
  \tilde{p}_{7,17}(\tau,x,y)\gle \tilde{p}_{7,18}(\tau,x,y)\vee \tilde{p}_{7,17}(\tau,x,y)\geql \tilde{p}_{7,18}(\tau,x,y)\vee
  \tilde{p}_{7,14}(\tau,x,y)\geql \tilde{p}_{7,18}(\tau,x,y),
& \\
\phantom{\Big\{}
  \tilde{p}_{7,18}(\tau,x,y)\gle \tilde{p}_{7,17}(\tau,x,y)\vee \tilde{p}_{7,14}(\tau,x,y)\geql \tilde{p}_{7,17}(\tau,x,y),
& \\
\phantom{\Big\{}
  \tilde{p}_{7,17}(\tau,x,y)\geql \tilde{H}_{X_1}(\tau,x), 
  \tilde{p}_{7,18}(\tau,x,y)\geql \tilde{G}_{\mi{medium}_d}(x),
& \\
\phantom{\Big\{}
  \tilde{p}_{7,10}(\tau,x,y)\geql \exists x\, \tilde{p}_{7,12}(\tau,x,y),
& \\
\phantom{\Big\{}
  \tilde{p}_{7,15}(\tau,x,y)\gle \tilde{p}_{7,16}(\tau,x,y)\vee \tilde{p}_{7,15}(\tau,x,y)\geql \tilde{p}_{7,16}(\tau,x,y)\vee
  \tilde{p}_{7,12}(\tau,x,y)\geql \tilde{p}_{7,16}(\tau,x,y),
& \\
\phantom{\Big\{}
  \tilde{p}_{7,16}(\tau,x,y)\gle \tilde{p}_{7,15}(\tau,x,y)\vee \tilde{p}_{7,12}(\tau,x,y)\geql \tilde{p}_{7,15}(\tau,x,y),
& \\
\phantom{\Big\{}
  \tilde{p}_{7,15}(\tau,x,y)\geql \mi{uni}(x),
& \\
\phantom{\Big\{}
  \tilde{p}_{7,19}(\tau,x,y)\gle \tilde{p}_{7,20}(\tau,x,y)\vee \tilde{p}_{7,19}(\tau,x,y)\geql \tilde{p}_{7,20}(\tau,x,y)\vee
  \tilde{p}_{7,16}(\tau,x,y)\geql \tilde{p}_{7,20}(\tau,x,y),
& \\
\phantom{\Big\{}
  \tilde{p}_{7,20}(\tau,x,y)\gle \tilde{p}_{7,19}(\tau,x,y)\vee \tilde{p}_{7,16}(\tau,x,y)\geql \tilde{p}_{7,19}(\tau,x,y),
& \\
\phantom{\Big\{}
  \tilde{p}_{7,19}(\tau,x,y)\geql \tilde{H}_{X_2}(\tau,x),
  \tilde{p}_{7,20}(\tau,x,y)\geql \tilde{G}_{\mi{high}_r}(x),
& \\
\phantom{\Big\{}
  \tilde{p}_{7,8}(\tau,x,y)\geql \tilde{G}_{\mi{zero}_{\dot{r}}}(y)\Big\} \\[1mm]
\hline \hline   
\end{IEEEeqnarray*}
\end{minipage}     
\vspace{0mm}
\end{table*}
The translations of $\phi_1(\tau,y)$ (one input fuzzy variable) and $\phi_7(\tau,y)$ (two input fuzzy variables) serve as prototypes; 
the translations of the other fuzzy rules -- formulae from $T_B$ can slightly be cloned from them.
Also the translations of the remaining aggregation rules from $T_B$ and of the theory $T_D$ to clausal form
can be performed straightforwardly, denoted as $S_B$ and $S_D$, respectively.
For convenient experimenting with fuzzy inference using the fuzzy rule base $B$,
we have devised a rule-based system%
\footnote{Download link: www.dai.fmph.uniba.sk/$\sim$guller/tfs18A.clp} 
in the language (IDE) CLIPS \cite{GIRI98}. 
As the initial state of fuzzy derivation, we suppose that 
the temperature is low; the density is high; the rotation is low;
the first derivative of the temperature is zero;
the first derivative of the density is zero;
the first derivative of the rotation is positive;
hence, the initial variable assignment 
$e_0=\{(X_0,\mi{low}_t),(X_1,\mi{high}_d),(X_2,\mi{low}_r),
       (X_3,\mi{zero}_{\dot{t}}),                                                                                                  \linebreak[4]
       (X_4,\mi{zero}_{\dot{d}}),(X_5,\mi{positive}_{\dot{r}})\}$.
In Table \ref{tab9}, the first thirteen steps are listed.
Concerning the problems posed in the previous paragraph,
e.g. for $X_2$ (rotation) and some contradictory fuzzy set $A=\{\frac{1}{0},\frac{0.5}{1},\frac{1}{2},\frac{0.5}{3},\frac{1}{4}\}$ 
(added to $\mbb{A}$), 
we reach $e_{11}(X_2)=A$ at the 11th step.
We see that $e_{12}=e_{13}$; hence, the fuzzy derivation is stable.
We may slightly change the fuzzy rule base $B$ by deleting the friction rules to get a fuzzy rule base $B'$, and subsequently $T_{B'}$, $S_{B'}$%
\footnote{Download link: www.dai.fmph.uniba.sk/$\sim$guller/tfs18B.clp}.
Starting from the same initial state as with $B$, in Table \ref{tab99}, the first seventeen steps are listed.
We see that there exists a $12$-cycle in the fuzzy derivation.
So, using our framework, we obtain the following reductions of the problems to deduction ones with respect to ${\mc K}$:
$\phi_r=\exists \tau\, (\mi{time}(\tau)\wedge \forall x\, (\mi{uni}(x)\rightarrow \tilde{H}_{X_2}(\tau,x)\geql \tilde{G}_A(x)))$,
$T_D\cup S_U\cup S_\mbb{A}\cup T_B\cup S_{e_0}(\tau/\tilde{z})\models_{\mc K} \phi_r$ (reachability);
$\phi_s=\exists \tau\, (\mi{time}(\tau)\wedge
                        \bigwedge_{X\in \mbb{X}} \forall x\, (\mi{uni}(x)\rightarrow \tilde{H}_X(\tau,x)\geql \tilde{H}_X(\tilde{s}(\tau),x)))$,
$T_D\cup S_U\cup S_\mbb{A}\cup T_B\cup S_{e_0}(\tau/\tilde{z})\models_{\mc K} \phi_s$ (stability);
$\phi_{12-c}=\exists \tau\, (\mi{time}(\tau)\wedge
                             \bigwedge_{X\in \mbb{X}} \forall x\, (\mi{uni}(x)\rightarrow 
                                                                   \tilde{H}_X(\tau,x)\geql \tilde{H}_X(\tilde{s}^{12}(\tau),x)))$,
$T_D\cup S_U\cup S_\mbb{A}\cup T_{B'}\cup S_{e_0}(\tau/\tilde{z})\models_{\mc K} \phi_{12-c}$ (the existence of a $12$-cycle).
Subsequently, using Theorem \ref{T1}, we can reduce the deduction problems to unsatisfiability ones in ${\mc K}$.
By Lemma \ref{le11} for $n_0$, $\tilde{p}_{(n_0,0)}$, $\phi_r$, $\phi_s$, $\phi_{12-c}$, 
$\tilde{p}_{(n_0,0)}\leftrightarrow \phi_r$, 
$\tilde{p}_{(n_0,0)}\leftrightarrow \phi_s$,
$\tilde{p}_{(n_0,0)}\leftrightarrow \phi_{12-c}$ can be translated to $S_r$, $S_s$, $S_{12-c}$, respectively.
The resulting clausal theories are as follows.
$S_D\cup S_U\cup S_\mbb{A}\cup S_B\cup S_{e_0}(\tau/\tilde{z})\cup \{\tilde{p}_{(n_0,0)}(\bar{x})\gle \gu\}\cup S_r$,
$S_D\cup S_U\cup S_\mbb{A}\cup S_B\cup S_{e_0}(\tau/\tilde{z})\cup \{\tilde{p}_{(n_0,0)}(\bar{x})\gle \gu\}\cup S_s$, 
$S_D\cup S_U\cup S_\mbb{A}\cup S_{B'}\cup S_{e_0}(\tau/\tilde{z})\cup \{\tilde{p}_{(n_0,0)}(\bar{x})\gle \gu\}\cup S_{12-c}$.
Technically, we assume that the translations $S_D$, $S_B$, $S_{B'}$ are done with the offset $n_0+1$ according to the proof of Theorem \ref{T1}.
In our example, on condition that ${\mc L}^*$ is a finite expansion 
of ${\mc L}\cup \{\tilde{f}_0\}\cup \tilde{\mbb{Z}}\cup \tilde{\mbb{D}}\cup \tilde{\mbb{G}}\cup \tilde{\mbb{H}}$, we obtain that 
$\mbb{U}$, $\mbb{A}$, $\mbb{X}$, $\tilde{\mbb{U}}$, $C_{\mc L}$ are finite;
the set of function symbols $\{\tilde{f}_0\}\cup \tilde{\mbb{Z}}$ is finite;
$T_D$, $S_D$, $\{\tilde{p}_\mbbm{j} \,|\, \mbbm{j}\in J_{T_D}\}$ are finite;
$T_B$, $S_B$, $\{\tilde{p}_\mbbm{j} \,|\, \mbbm{j}\in J_{T_B}\}$ are finite; 
$S_\mbb{A}$, $S_{e_0}(\tau/\tilde{z})$ are finite;
$S_r$, $S_s$, $S_{12-c}$, 
$\{\tilde{p}_\mbbm{j} \,|\, \mbbm{j}\in J_{\phi_r}\}$, $\{\tilde{p}_\mbbm{j} \,|\, \mbbm{j}\in J_{\phi_s}\}$,
$\{\tilde{p}_\mbbm{j} \,|\, \mbbm{j}\in J_{\phi_{12-c}}\}$ are finite;
the set of predicate symbols 
$\tilde{\mbb{D}}\cup \tilde{\mbb{G}}\cup \tilde{\mbb{H}}\cup 
 \{\tilde{p}_\mbbm{j} \,|\, \mbbm{j}\in J_{T_D}\cup J_{T_B}\cup J_{\phi_{r,s,12-c}}\}$ is finite; 
$\{\mi{uni}(\tilde{u})\geql \gu \,|\, \tilde{u}\in \tilde{\mbb{U}}\}$ is finite;
only $\{\mi{uni}(t)\geql \gz \,|\, t\in \mi{GTerm}_{{\mc L}^*}-\tilde{\mbb{U}}\}$, and hence, $S_U$ are countably infinite.
The resulting clausal theories should then be tested for unsatisfiability by some suitable generalised hyperresolution proof method 
of that one in \cite{Guller2016c}, which is a subject of our further research. 
\begin{table*}[p]
\vspace{-6mm}
\caption{Fuzzy derivation using the fuzzy rule base $B$}\label{tab9}
\vspace{-6mm}
\centering
\begin{minipage}[t]{\linewidth-25mm}
\scriptsize
\begin{IEEEeqnarray*}{RLLLLLLL}
\hline \hline \\[1mm]
\text{\bf Time} & \qquad & \IEEEeqnarraymulticol{6}{l}{\text{\bf State}\ (X_0\ X_1\ X_2\ X_3\ X_4\ X_5)} \\[0mm] 
\hline \\[1mm]
0:  & & (\mi{low}_t\ & \mi{high}_d\ & \mi{low}_r\ & \mi{zero}_{\dot{t}}\ & \mi{zero}_{\dot{d}}\ & \mi{positive}_{\dot{r}}) \\[1mm]
1:  & & \big((1\ 0.5\ 0.5\ 0.5\ 0.5)\ & (0.5\ 0.5\ 0.5\ 0.5\ 1)\ & (0.5\ 0.5\ 1\ 0.5\ 0.5)\ & 
             (1\ 0.5\ 1\ 0.5\ 0.5)\   & (0.5\ 0.5\ 1\ 0.5\ 0.5)\ & (0\ 0.5\ 0.5\ 0.5\ 1)\big) \\[1mm]
2:  & & \big((1\ 0.5\ 0.5\ 0.5\ 0.5)\ & (0.5\ 0.5\ 0.5\ 0.5\ 1)\ & (0.5\ 0.5\ 0.5\ 0.5\ 1)\ &
             (0.5\ 0.5\ 1\ 0.5\ 1)\   & (0.5\ 0.5\ 1\ 0.5\ 0.5)\ & (0.5\ 0.5\ 0.5\ 0.5\ 1)\big) \\[1mm]
3:  & & \big((1\ 0.5\ 1\ 0.5\ 0.5)\   & (0.5\ 0.5\ 0.5\ 0.5\ 1)\ & (0.5\ 0.5\ 0.5\ 0.5\ 1)\ &
             (0.5\ 0.5\ 1\ 0.5\ 1)\   & (0.5\ 0.5\ 1\ 0.5\ 0.5)\ & (1\ 0.5\ 0.5\ 0.5\ 0.5)\big) \\[1mm]
4:  & & \big((1\ 0.5\ 1\ 0.5\ 1)\     & (0.5\ 0.5\ 1\ 0.5\ 1)\   & (0.5\ 0.5\ 1\ 0.5\ 0.5)\ &
             (0.5\ 0.5\ 1\ 0.5\ 1)\   & (0.5\ 0.5\ 1\ 0.5\ 0.5)\ & (1\ 0.5\ 0.5\ 0.5\ 0.5)\big) \\[1mm]
5:  & & \big((1\ 0.5\ 1\ 0.5\ 1)\     & (1\ 0.5\ 1\ 0.5\ 1)\     & (1\ 0.5\ 0.5\ 0.5\ 0.5)\ &
             (1\ 0.5\ 1\ 0.5\ 1)\     & (0.5\ 0.5\ 1\ 0.5\ 0.5)\ & (1\ 0.5\ 0.5\ 0.5\ 0.5)\big) \\[1mm]
6:  & & \big((1\ 0.5\ 1\ 0.5\ 1)\     & (1\ 0.5\ 1\ 0.5\ 1)\     & (1\ 0.5\ 0.5\ 0.5\ 0.5)\ &
             (1\ 0.5\ 1\ 0.5\ 1)\     & (0.5\ 0.5\ 1\ 0.5\ 0.5)\ & (0.5\ 0.5\ 0.5\ 0.5\ 1)\big) \\[1mm]
7:  & & \big((1\ 0.5\ 1\ 0.5\ 1)\     & (1\ 0.5\ 1\ 0.5\ 1)\     & (0.5\ 0.5\ 1\ 0.5\ 0.5)\ &
             (1\ 0.5\ 1\ 0.5\ 1)\     & (0.5\ 0.5\ 1\ 0.5\ 0.5)\ & (0.5\ 0.5\ 0.5\ 0.5\ 1)\big) \\[1mm]
8:  & & \big((1\ 0.5\ 1\ 0.5\ 1)\     & (1\ 0.5\ 1\ 0.5\ 1)\     & (0.5\ 0.5\ 0.5\ 0.5\ 1)\ &
             (1\ 0.5\ 1\ 0.5\ 1)\     & (0.5\ 0.5\ 1\ 0.5\ 0.5)\ & (0.5\ 0.5\ 0.5\ 0.5\ 1)\big) \\[1mm]
9:  & & \big((1\ 0.5\ 1\ 0.5\ 1)\     & (1\ 0.5\ 1\ 0.5\ 1)\     & (0.5\ 0.5\ 0.5\ 0.5\ 1)\ &
             (1\ 0.5\ 1\ 0.5\ 1)\     & (0.5\ 0.5\ 1\ 0.5\ 0.5)\ & (1\ 0.5\ 1\ 0.5\ 0.5)\big) \\[1mm]
10: & & \big((1\ 0.5\ 1\ 0.5\ 1)\     & (1\ 0.5\ 1\ 0.5\ 1)\     & (0.5\ 0.5\ 1\ 0.5\ 1)\   &
             (1\ 0.5\ 1\ 0.5\ 1)\     & (0.5\ 0.5\ 1\ 0.5\ 0.5)\ & (1\ 0.5\ 1\ 0.5\ 0.5)\big) \\[1mm]
11: & & \big((1\ 0.5\ 1\ 0.5\ 1)\     & (1\ 0.5\ 1\ 0.5\ 1)\     & (1\ 0.5\ 1\ 0.5\ 1)\     &
             (1\ 0.5\ 1\ 0.5\ 1)\     & (0.5\ 0.5\ 1\ 0.5\ 0.5)\ & (1\ 0.5\ 1\ 0.5\ 0.5)\big) \\[1mm]
{\bf {*12}}: 
    & & {\bf \big((1\ 0.5\ 1\ 0.5\ 1)}\     & {\bf (1\ 0.5\ 1\ 0.5\ 1)}\     & {\bf (1\ 0.5\ 1\ 0.5\ 1)}\     &
        {\bf      (1\ 0.5\ 1\ 0.5\ 1)}\     & {\bf (0.5\ 0.5\ 1\ 0.5\ 0.5)}\ & {\bf (1\ 0.5\ 1\ 0.5\ 1)\big)} \\[1mm]
{\bf *13}: 
    & & {\bf \big((1\ 0.5\ 1\ 0.5\ 1)}\     & {\bf (1\ 0.5\ 1\ 0.5\ 1)}\     & {\bf (1\ 0.5\ 1\ 0.5\ 1)}\     &
        {\bf      (1\ 0.5\ 1\ 0.5\ 1)}\     & {\bf (0.5\ 0.5\ 1\ 0.5\ 0.5)}\ & {\bf (1\ 0.5\ 1\ 0.5\ 1)\big)} \\[1mm]
\hline \hline
\end{IEEEeqnarray*}
\end{minipage}     
\vspace{-2mm}
\end{table*}        
\begin{table*}[p]
\vspace{-6mm}
\caption{Fuzzy derivation using the fuzzy rule base $B'$}\label{tab99}
\vspace{-6mm}
\centering
\begin{minipage}[t]{\linewidth-25mm}
\scriptsize
\begin{IEEEeqnarray*}{RLLLLLLL}
\hline \hline \\[1mm]
\text{\bf Time} & \qquad & \IEEEeqnarraymulticol{6}{l}{\text{\bf State}\ (X_0\ X_1\ X_2\ X_3\ X_4\ X_5)} \\[0mm] 
\hline \\[1mm]
0:  & & (\mi{low}_t\ & \mi{high}_d\ & \mi{low}_r\ & \mi{zero}_{\dot{t}}\ & \mi{zero}_{\dot{d}}\ & \mi{positive}_{\dot{r}}) \\[1mm]
1:  & & \big((1\ 0.5\ 0.5\ 0.5\ 0.5)\ & (0.5\ 0.5\ 0.5\ 0.5\ 1)\ & (0.5\ 0.5\ 1\ 0.5\ 0.5)\ & 
             (1\ 0.5\ 1\ 0.5\ 0.5)\   & (0.5\ 0.5\ 1\ 0.5\ 0.5)\ & (0\ 0.5\ 0.5\ 0.5\ 1)\big) \\[1mm]
2:  & & \big((1\ 0.5\ 0.5\ 0.5\ 0.5)\ & (0.5\ 0.5\ 0.5\ 0.5\ 1)\ & (0.5\ 0.5\ 0.5\ 0.5\ 1)\ &
             (0.5\ 0.5\ 1\ 0.5\ 1)\   & (0.5\ 0.5\ 1\ 0.5\ 0.5)\ & (0.5\ 0.5\ 0.5\ 0.5\ 1)\big) \\[1mm]
3:  & & \big((1\ 0.5\ 1\ 0.5\ 0.5)\   & (0.5\ 0.5\ 0.5\ 0.5\ 1)\ & (0.5\ 0.5\ 0.5\ 0.5\ 1)\ &
             (0.5\ 0.5\ 1\ 0.5\ 1)\   & (0.5\ 0.5\ 1\ 0.5\ 0.5)\ & (1\ 0.5\ 0.5\ 0.5\ 0.5)\big) \\[1mm]
4:  & & \big((1\ 0.5\ 1\ 0.5\ 1)\     & (0.5\ 0.5\ 1\ 0.5\ 1)\   & (0.5\ 0.5\ 1\ 0.5\ 0.5)\ &
             (0.5\ 0.5\ 1\ 0.5\ 1)\   & (0.5\ 0.5\ 1\ 0.5\ 0.5)\ & (1\ 0.5\ 0.5\ 0.5\ 0.5)\big) \\[1mm]
{\bf *5}:  
    & & {\bf \big((1\ 0.5\ 1\ 0.5\ 1)}\     & {\bf (1\ 0.5\ 1\ 0.5\ 1)}\     & {\bf (1\ 0.5\ 0.5\ 0.5\ 0.5)}\ &
        {\bf      (1\ 0.5\ 1\ 0.5\ 1)}\     & {\bf (0.5\ 0.5\ 1\ 0.5\ 0.5)}\ & {\bf (1\ 0.5\ 0.5\ 0.5\ 0.5)\big)} \\[1mm]
\hline \\[1mm]
6:  & & \big((1\ 0.5\ 1\ 0.5\ 1)\     & (1\ 0.5\ 1\ 0.5\ 1)\     & (1\ 0.5\ 0.5\ 0.5\ 0.5)\ &
             (1\ 0.5\ 1\ 0.5\ 1)\     & (0.5\ 0.5\ 1\ 0.5\ 0.5)\ & (0.5\ 0.5\ 0.5\ 0.5\ 1)\big) \\[1mm]
7:  & & \big((1\ 0.5\ 1\ 0.5\ 1)\     & (1\ 0.5\ 1\ 0.5\ 1)\     & (0.5\ 0.5\ 1\ 0.5\ 0.5)\ &
             (1\ 0.5\ 1\ 0.5\ 1)\     & (0.5\ 0.5\ 1\ 0.5\ 0.5)\ & (0.5\ 0.5\ 0.5\ 0.5\ 1)\big) \\[1mm]
8:  & & \big((1\ 0.5\ 1\ 0.5\ 1)\     & (1\ 0.5\ 1\ 0.5\ 1)\     & (0.5\ 0.5\ 0.5\ 0.5\ 1)\ &
             (1\ 0.5\ 1\ 0.5\ 1)\     & (0.5\ 0.5\ 1\ 0.5\ 0.5)\ & (0.5\ 0.5\ 0.5\ 0.5\ 1)\big) \\[1mm]
9:  & & \big((1\ 0.5\ 1\ 0.5\ 1)\     & (1\ 0.5\ 1\ 0.5\ 1)\     & (0.5\ 0.5\ 0.5\ 0.5\ 1)\ &
             (1\ 0.5\ 1\ 0.5\ 1)\     & (0.5\ 0.5\ 1\ 0.5\ 0.5)\ & (1\ 0.5\ 1\ 0.5\ 0.5)\big) \\[1mm]
10: & & \big((1\ 0.5\ 1\ 0.5\ 1)\     & (1\ 0.5\ 1\ 0.5\ 1)\     & (0.5\ 0.5\ 1\ 0.5\ 0.5)\ &
             (1\ 0.5\ 1\ 0.5\ 1)\     & (0.5\ 0.5\ 1\ 0.5\ 0.5)\ & (1\ 0.5\ 0.5\ 0.5\ 0.5)\big) \\[1mm]
11: & & \big((1\ 0.5\ 1\ 0.5\ 1)\     & (1\ 0.5\ 1\ 0.5\ 1)\     & (1\ 0.5\ 0.5\ 0.5\ 0.5)\ &
             (1\ 0.5\ 1\ 0.5\ 1)\     & (0.5\ 0.5\ 1\ 0.5\ 0.5)\ & (1\ 0.5\ 0.5\ 0.5\ 0.5)\big) \\[1mm]
12: & & \big((1\ 0.5\ 1\ 0.5\ 1)\     & (1\ 0.5\ 1\ 0.5\ 1)\     & (1\ 0.5\ 0.5\ 0.5\ 0.5)\ &
             (1\ 0.5\ 1\ 0.5\ 1)\     & (0.5\ 0.5\ 1\ 0.5\ 0.5)\ & (0.5\ 0.5\ 0.5\ 0.5\ 1)\big) \\[1mm]
13: & & \big((1\ 0.5\ 1\ 0.5\ 1)\     & (1\ 0.5\ 1\ 0.5\ 1)\     & (0.5\ 0.5\ 1\ 0.5\ 0.5)\ &
             (1\ 0.5\ 1\ 0.5\ 1)\     & (0.5\ 0.5\ 1\ 0.5\ 0.5)\ & (0.5\ 0.5\ 0.5\ 0.5\ 1)\big) \\[1mm] 
14: & & \big((1\ 0.5\ 1\ 0.5\ 1)\     & (1\ 0.5\ 1\ 0.5\ 1)\     & (0.5\ 0.5\ 0.5\ 0.5\ 1)\ &  
             (1\ 0.5\ 1\ 0.5\ 1)\     & (0.5\ 0.5\ 1\ 0.5\ 0.5)\ & (0.5\ 0.5\ 0.5\ 0.5\ 1)\big) \\[1mm]
15: & & \big((1\ 0.5\ 1\ 0.5\ 1)\     & (1\ 0.5\ 1\ 0.5\ 1)\     & (0.5\ 0.5\ 0.5\ 0.5\ 1)\ &
             (1\ 0.5\ 1\ 0.5\ 1)\     & (0.5\ 0.5\ 1\ 0.5\ 0.5)\ & (1\ 0.5\ 0.5\ 0.5\ 0.5)\big) \\[1mm]
16: & & \big((1\ 0.5\ 1\ 0.5\ 1)\     & (1\ 0.5\ 1\ 0.5\ 1)\     & (0.5\ 0.5\ 1\ 0.5\ 0.5)\ &
             (1\ 0.5\ 1\ 0.5\ 1)\     & (0.5\ 0.5\ 1\ 0.5\ 0.5)\ & (1\ 0.5\ 0.5\ 0.5\ 0.5)\big) \\[1mm]
{\bf *17}: 
    & & {\bf \big((1\ 0.5\ 1\ 0.5\ 1)}\     & {\bf (1\ 0.5\ 1\ 0.5\ 1)}\     & {\bf (1\ 0.5\ 0.5\ 0.5\ 0.5)}\ &
        {\bf      (1\ 0.5\ 1\ 0.5\ 1)}\     & {\bf (0.5\ 0.5\ 1\ 0.5\ 0.5)}\ & {\bf (1\ 0.5\ 0.5\ 0.5\ 0.5)\big)} \\[1mm]
\hline \hline
\end{IEEEeqnarray*}
\end{minipage}     
\vspace{-2mm}
\end{table*}

\section{Conclusions}
\label{S5}

In the paper, we have investigated multi-step fuzzy inference using the Mamdani-Assilian type of fuzzy rules.
To obtain a logical and computational characterisation of it, we have implemented this type of fuzzy inference 
in G\"{o}del logic with truth constants.
We have posed three fundamental problems: reachability, stability, and the existence of a $k$-cycle in multi-step fuzzy inference.
These problems may be formulated as formulae of G\"{o}del logic and reduced to certain deduction problems.
Subsequently, the corresponding theories may be translated to clausal form, and 
the deduction problems may be reduced to unsatisfiability ones.
We believe that under some finitary restrictions, we shall be able to modify the hyperresolution calculus from \cite{Guller2016c}
in order to devise a semi-decision procedure for unsatisfiability testing of the resulting clausal theories.

\bibliographystyle{IEEEtran}
\bibliography{IEEEabrv,tii18}

\end{document}